\documentclass[11pt]{article}
\usepackage{setspace}
\usepackage{bm}
\usepackage{bbm}
\usepackage{amssymb}
\usepackage{amsmath}
\usepackage{amsthm}
\usepackage{mathtools}
\usepackage{graphicx}
\usepackage[shortlabels]{enumitem}
\usepackage[margin=1in]{geometry}
\usepackage{verbatim}
\usepackage{subcaption}
\usepackage{float}
\usepackage{makecell}
\usepackage[dvipsnames]{xcolor}
\usepackage{threeparttable}
\usepackage{threeparttablex}
\usepackage{amsfonts}%
\usepackage{parskip}
\usepackage{tikz}
\usepackage{multirow}
\definecolor{navy}{RGB}{0,0,128}
\usepackage[round]{natbib}
\usetikzlibrary{decorations.pathreplacing, positioning, matrix}
\usepackage[
  colorlinks=true,
  citecolor=navy,
  linkcolor=navy,
  urlcolor=navy
]{hyperref}
\let\oldcitep\citep
\renewcommand{\citep}[1]{\textcolor{navy}{\textnormal{\oldcitep{#1}}}}

\let\oldcitet\citet
\renewcommand{\citet}[1]{\textcolor{navy}{\textnormal{\oldcitet{#1}}}}

\newcommand{\bb}{\mathbb}

\newcommand{\R}{\bb R}

\newcommand{\Z}{\bb Z}
\newcommand{\G}{\bb G}

\newcommand{\E}{{\bb E}}
\newcommand{\conv}{\mathrm{conv}}
\newcommand{\Cov}{{\mathrm{Cov}}}
\newcommand{\Var}{{\mathrm{Var}}}
\newcommand{\diag}{{\mathrm{diag}}}
\newcommand{\iid}{{\mathrm{iid}}}
\newcommand{\interior}{{\mathrm{int}}}

\newcommand{\proj}{{\mathrm{proj}}}

\definecolor{myLightBlue}{RGB}{50,110,160}

\newcommand{\A}{\mathcal{A}}

\newcommand{\sgn}{\mathrm{sgn}}
\definecolor{ceruleanblue}{rgb}{0.16, 0.32, 0.75}
\newtheoremstyle{spacestyle}
  {15pt}
  {15pt}
  {\itshape}
  {}
  {\bfseries}
  {.}
  { }
  {}
  \newtheoremstyle{remarkstyle}
  {15pt}
  {15pt}
  {\normalfont}
  {}
  {\bfseries}
  {.}
  { }
  {}
  
\theoremstyle{spacestyle}
\newtheorem{theorem}{Theorem}
\newtheorem{definition}{Definition}
\newtheorem{lemma}{Lemma}

\newtheorem{proposition}{Proposition}
\newtheorem{assumption}{Assumption}
\newcounter{subthm}

\renewcommand{\thesubthm}{\mainthmnum-\arabic{subthm}} 

\usepackage{xparse}

\newcounter{subdef}
\newcommand{\mainthmnum}{} 

\renewcommand{\thesubdef}{\mainthmnum-\arabic{subdef}}

\NewDocumentEnvironment{subdefinition}{ o }{%
  \refstepcounter{subdef}%
  \par\noindent
  \textbf{Definition \thesubdef}%
  \IfNoValueF{#1}{\ \textbf{(#1)}}%
  .\\
}{\par}

\theoremstyle{remarkstyle}
\newtheorem{example}{Example}
\newtheorem{remark}{Remark}

\begin{document}

\clearpage
\setcounter{tocdepth}{2} \thispagestyle{empty} \clearpage
\setcounter{page}{1}

\title{\textbf{Is Complexity the Problem?}\\
\textbf{Testing Random Choice with Heterogeneity}\thanks{I am indebted to Alessandra Casella, Michael Woodford, Mark Dean, and Haoge Chang for their guidance and encouragement. I thank Olivier Compte, Francesco Fabbri, Junnan He, Navin Kartik, Paulo Natenzon, Serena Ng, Jacopo Perego, Pietro Tebaldi, Sevgi Yuksel, and Yangfan Zhou for many helpful suggestions. I also thank participants at the Cognition and Decision Lab, CELSS Experimental Lunch, and Micro Theory Colloquium at Columbia, the Graduate Women in Economic Theory conference at Yale, and seminars at Bocconi University and Caltech. Special thanks to John Clithero, Joshua Peterson, and Ekta Prashnani for sharing their experimental data.}}
\author{Shuhua Si\thanks{Department of Economics, Columbia University. Email: \href{mailto:ss5580@columbia.edu}{ss5580@columbia.edu}.}}
\date{\today}
\maketitle

\vspace{-0.5cm}

\begin{abstract}
Economic choices are often stochastic: the same person may make a different choice when facing the same alternatives repeatedly. Standard models assume that the degree of randomness reflects the size of utility differences, but choice inconsistencies could also reflect difficulty comparing alternatives. Recent studies estimate such \textit{comparison difficulty} (or ``complexity'') by fitting choice models on data collected from different subjects under a representative agent assumption. However, pooling such data while assuming homogeneous preference could violate standard models of random choice simply because of heterogeneity across subjects, even in the absence of variation in comparison difficulty. This paper develops a revealed preference framework, \textit{collective rationalizability}, that tests for variation in comparison difficulty while incorporating heterogeneity. The framework characterizes whether violations of standard models can be explained by comparison difficulty alone, heterogeneity alone, or require both. I then provide a statistical test for collective rationalizability and apply the method to two existing experiments. In both cases, heterogeneity alone explains observed failures of stochastic transitivity well, demonstrating that comparison difficulty can be not only theoretically but also empirically confused with heterogeneity in aggregate data.
\end{abstract}
\newpage

\section{Introduction}
Economic choices are often stochastic: the same individual may make different choices when facing the same problem repeatedly. The degree of randomness is conventionally explained by the location of alternatives on a utility scale: each option is assigned a position on a one-dimensional utility scale, and the probability of choosing one option over another depends only on their relative locations on this scale. This postulate, satisfied by widely-used discrete choice models such as logit and probit, is known as \textit{Simple Scalability} \citep{krantz1964scaling}.
Yet not all random choice reflects small utility gaps.
Consider choosing between health insurance plans with different monthly costs, provider networks, and covered services, or comparing job offers that differ in salary, location, and work-life balance. Even when one option is actually better than the other, the comparison itself might be inherently difficult: the stochastic choice need not reflect that the alternatives have similar values but rather that they are difficult to compare. This distinction matters for welfare, as the loss from choice mistakes could be much larger than Simple Scalability implies.  As a recent body of empirical work shows convincingly, accounting for various cognitive and information frictions has important welfare implications in markets featuring complex choices \citep{barahona2023equilibrium, brown2024endogenous, vatter2025quality}. Therefore, a central question is: How can we separately detect the difficulty of comparing alternatives from the alternatives' utility differences?

Recent studies estimate comparison difficulty by specifying functional forms and fitting them to the choice data under a representative agent assumption. The choice data consist of the proportion of total choices favoring option $A$ over $B$ collected from possibly heterogeneous subjects.\footnote{See, for example, \citet{natenzon2019random, he2023random, shubatt2024tradeoffs}.} Researchers rely on such data because it is rarely feasible to obtain rich individual-level choice frequencies, where each person is observed making a large number of repeated decisions over all pairs. The problem, however, is that aggregate data might mask \textit{individual heterogeneity}. A 60\% choice share for option $A$ over $B$ could mean that \textit{every} person chooses $A$ with probability 0.6, or that 60\% of individuals always choose $A$ while the remaining always choose $B$, with no individual randomness, just different tastes. When aggregate data violate Simple Scalability, is it because some options are harder to compare, or because people have heterogeneous preferences?

The main contribution of this paper is a revealed preference framework, \textit{collective rationalizability}, for testing random choice models while allowing for heterogeneity. I derive the collectively rationalizable version of Simple Scalability and two other models incorporating the difficulty of comparing alternatives  (\textit{comparison difficulty}). The characterizations identify whether violations of Simple Scalability can be explained by heterogeneity alone, comparison difficulty alone, or the two together. This framework does not assume any specific functional forms for individual preference or comparison difficulty.
Building on the theoretical foundation, I then present a statistical test for collective rationalizability. Applications to data from two existing experiments \citep{prashnani2018pieapp, clithero2018improving} demonstrate the practical applicability of the method. In both cases, heterogeneity alone explains the observed deviations from Simple Scalability well.

This paper focuses on three nested classes of models proposed in the literature: Simple Scalability \citep{krantz1964scaling}, Moderate Utility \citep{halff1976choice}, and Weak Utility \citep{luce1965preference}. While Simple Scalability assumes all variations in choice probabilities reflect utility differences, Moderate and Weak Utility models introduce variable comparison difficulty, with Weak Utility imposing fewer restrictions on how it varies across pairs.  
To disentangle preference differences from particular difficulty of comparison, existing revealed preference approaches check for \textit{stochastic transitivity} \citep{luce1965preference, tversky1969substitutability, he2024moderate}: if $A$ is chosen more frequently than $B$ (denoted $A\succcurlyeq B$), and $B\succcurlyeq C$, then specific requirements must hold for choices between $A$ and $C$. Under Simple Scalability, a larger utility difference translates into a more consistent choice, and in this example since $A$ and $C$ must be further apart than any intermediate pair, the choice frequency for $A$ over $C$ must exceed that of either of the other two pairs. But with comparison difficulty, the frequency of choice also depends on how hard $A$ and $C$ are to compare, and thus only weaker forms of stochastic transitivity must hold.

Existing revealed preference approaches yield theoretical tests for the presence of comparison difficulty. However, implementing them in practice is challenging,\footnote{See, for example, \citet{coombs1961some, griswold1962choices}, and a review by \citet{rieskamp2006extending}.} because they require an \textit{ideal dataset}: exact choice probabilities for each individual over all binary pairs. With the data typically available to researchers, these approaches face two obstacles. First, violations of Simple Scalability after aggregation might reflect \textit{heterogeneity in preferences} rather than comparison difficulty. Second, it is important to account for \textit{statistical uncertainty}, that whether deviations reflect genuine violations or sampling variation. This paper addresses both challenges by developing testable characterizations of collective rationalizability that explicitly allow for heterogeneity and provide uniformly valid statistical inference.

To clarify these challenges, the first one is \textit{preference heterogeneity}: aggregate patterns may well deviate from Simple Scalability even in the absence of varying comparison difficulty across pairs. An example comes from binary choices of snack foods in \citet{clithero2018improving}'s experiment. The aggregate choice frequency of choosing KitKat over M\&M's is 0.84, of choosing M\&M's over Fruit Snacks is 0.58, and of choosing KitKat over Fruit Snacks is 0.61. These frequencies violate Simple Scalability. Since KitKat is preferred to M\&M's and M\&M's is preferred to Fruit Snacks, KitKat must be strongly preferred to Fruit Snacks: transitivity would require the frequency for KitKat over Fruit Snacks to exceed 0.84, yet we observe only 0.61. This could be taken as a sign of comparison difficulty, that comparing KitKat with Fruit Snacks is more difficult than comparing KitKat with M\&M's. However, such violations can also arise purely from preference heterogeneity without any comparison difficulty. For instance, imagine that around half of the population always choose chocolate snacks while the other half always choose the fruit one, and among the two chocolate candies, most prefer KitKat. Each individual is fully rational, chooses deterministically, and satisfies Simple Scalability, yet the aggregate frequencies appear to violate it.\footnote{Precisely, one such distribution of heterogeneous preferences that generates these aggregate frequencies is:

    \begin{tabular}{ll}
23\%: FruitSnacks $\succ$ KitKat $\succ$ M\&M's & 16\% FruitSnacks $\succ$ M\&M's $\succ$ KitKat \\
58\%: KitKat $\succ$ M\&M's $\succ$ FruitSnacks & 3\%: KitKat $\succ$ FruitSnacks $\succ$ M\&M's
\end{tabular}
}

To address this, I develop a novel characterization of \textit{collective rationalizability} for the three nested models of random choice described above (\autoref{thm:collective}). The key insight is that aggregate choice patterns can be collectively rationalized by a model if and only if they can be expressed as a weighted average of individual choice rules that satisfy the model. The characterizations reveal when violations of Simple Scalability  can be accommodated by heterogeneity and/or comparison difficulty.

To characterize collective rationalizability, one should consider all possible population compositions: the number of types, the preferences of each type, and their population fractions. The core insight is that collective rationalizability is about \textit{convex combinations}: a choice rule can be collectively rationalized if and only if it belongs to the convex hull of the individual rationalizable set. Nonparametric conditions for collective rationalizability follow from a two-step strategy. First, I analytically characterize the set of individually rationalizable choice probabilities. Second, I derive collective rationalizability conditions by taking the convex hull of this individual rationalizable set. This approach generalizes to any random choice model once its individual rationalizable set is characterized.\footnote{See, for example, Appendix \ref{sec:appdx-multinomial}. I discuss collectively rationalizable Simple Scalability under multinomial choice and how it can rationalize behavioral regularities such as choice reversals.}

I offer two equivalent characterizations of collectively rationalizable versions of random choice models. First, building on techniques for modeling disjunctions \citep{balas1971intersection}, the model can be expressed as a system of linear inequalities where membership can be readily checked using convex programming. This technique applies broadly: given any individually rationalizable set, the same method produces testable linear conditions for its collectively rationalizable version. Second, I provide an alternative characterization by identifying a finite list of candidates for extreme points, with the convex hull comprising all convex combinations of these points. The extreme points offer geometric insights and more transparent intuitions, and allow to check how different  models nest within each other. Both characterizations lend themselves readily to statistical testing.

In particular, the characterizations reveal a  difference between comparison difficulty and heterogeneity.  While some aggregate patterns can be rationalized by either explanation, the two are not equivalent: Weak Utility (with comparison difficulty but no heterogeneity) and \textit{Collective} Simple Scalability (with heterogeneity but no comparison difficulty) are  not nested within each other. Suppose, for example, that choice probabilities between $A$ over $B$, $B$ over $C$, and $A$ over $C$ are, in order, $(0.9, 0.9, 0.6)$. A single individual under Weak Utility can rationalize this with preferences $A\succ B\succ C$ but high difficulty comparing $A$ and $C$. However, Collective Simple Scalability cannot: the combination of strong preferences ($A$ vs $B$, $B$ vs $C$) with near-indifference ($A$ vs $C$) cannot arise from aggregating any heterogeneous population where individuals satisfy Simple Scalability and choice probabilities reflect only utility differences. As a second example, consider the choice probabilities $\left(\tfrac{2}{3},\tfrac{2}{3},\tfrac{1}{3}\right)$. This cycle ($A\succ B\succ C$ but $C\succ A$) violates transitivity and cannot be represented by Weak Utility. However, Collective Simple Scalability can rationalize it through heterogeneity: three equal-sized groups with preferences $A\succ B\succ C$, $B\succ C\succ A$, and $C\succ A\succ B$, each choosing deterministically, generate exactly this pattern.\footnote{This example comes from the Condorcet paradox of social choice theory \citep{condorcet1785}, which shows that transitive individual preferences can produce cycles in aggregate preference over binary pairs.}

Beyond heterogeneity, the second challenge is accounting for \textit{statistical variation}: observed aggregate patterns include sampling variation, making it unclear whether deviations from collective rationalizability reflect genuine departures or noise. 
Based on \autoref{thm:collective}'s characterization of which choice patterns are collectively rationalizable,  \autoref{thm:uniform-test} presents a uniformly valid test evaluating whether observed deviations can be attributed to sampling error, accommodating various data generating processes and random assignment schemes between subjects and choice problems.

One challenge for the statistical testing procedure is that the test statistic, defined as the distance from observed choice frequencies to the collectively rationalizable set, has a limit distribution that does not depend continuously on the null hypothesis. This occurs because the collectively rationalizable set is a polytope whose boundary includes edges and vertices where the test statistic exhibits discontinuities. Standard bootstrap fails in such cases \citep{andrews2000inconsistency}. As a solution, I adopt the numerical delta method from \citet{hong2018numerical}, which guarantees uniform validity, that the test maintains its nominal size regardless of where the true parameter lies, under a broad range of data generating processes.

When either heterogeneity or comparison difficulty can explain violations of Simple Scalability in aggregate data, does the model become so flexible that it loses meaningful empirical content? Following \citet{bronars1987power}, \citet{selten1991properties}, and \citet{beatty2011demanding}, I compute the relative size of the predicted region for each model up to $n=5$ options. I find that \textit{Collective} Simple Scalability (allowing only heterogeneity) is always more restrictive than \textit{Representative-agent} Weak Utility (allowing only comparison difficulty). This theoretical restrictiveness can be combined with each model's empirical fit to compare model performance along two dimensions: the frequency of successfully rationalizing \textit{real} data and the frequency of rejecting \textit{random} data. The following empirical applications demonstrate this idea.

Finally, I illustrate the practical applicability of  collective rationalizability using data from two existing experiments: \citet{clithero2018improving}, on binary choices among snack foods, and \citet{prashnani2018pieapp} on perceptual judgments of image similarity. In both applications, aggregate choices violate Simple Scalability. Without accounting for heterogeneity, researchers might interpret these violations as evidence of comparison difficulty. However, the collective rationalizability framework reveals that heterogeneity alone explains the observed patterns better than comparison difficulty does in both ways: it rationalizes more of the actual experimental data while rejecting more randomly generated data. These applications demonstrate that heterogeneity can be confused with comparison difficulty, both theoretically and empirically, highlighting the importance of the collective rationalizability framework.

This paper provides additional results on identifying the ranking of comparison difficulty: if comparison difficulty indeed varies across pairs, can we go beyond and characterize \textit{how} it varies? I provide necessary and sufficient conditions for rationalizability of individual choice probabilities under a given \textit{ordering} of comparison difficulty. By applying the same technique from \citet{balas1971intersection}, I derive the collectively rationalizable version, characterizing whether observed aggregate patterns can be rationalized by a heterogeneous population where each individual shares a given ranking of comparison difficulty across pairs with $n=3$ options. This characterization can help identify if everyone finds some specific two alternatives are hard to compare, even though people might disagree on which one they prefer. These difficult comparisons are precisely where people are most likely to make mistakes and choose options that don't reflect their true preferences.
\vspace{0.3cm}
\paragraph{Related Literature.}
This paper contributes to the growing literature on cognitive foundations of economic decision-making, which examines what makes decisions difficult and how people respond to such difficulty (see \citet{oprea2024review} for a review). Recent work has identified various sources of difficulty \citep{iyengar2000choice, dean2022better, de2024rational, shubatt2024tradeoffs, puri2025simplicity} and studied behavioral responses \citep{woodford2020modeling, arrieta2024procedural, gabaix2025theory}. A key challenge in this literature is separating cognitive difficulty from preferences. Existing approaches include directly eliciting cognitive uncertainty in the lab \citep{enke2023cognitive}, anchoring preferences through experimental design \citep{oprea2024decisions, enke2023quantifying}, and manipulating decision environments in the lab \citep{bernheim2020empirical, kirchler2017effect, gerhardt2016cognitive, porcelli2009acute}.

This paper focuses specifically on comparison difficulty: the cognitive challenge of comparing two alternatives. Previous researchers have used various terms, such as ``similarity'' \citep{tversky1969substitutability}, ``comparability'' \citep{halff1976choice}, and ``complexity'' \citep{shubatt2024tradeoffs} to discuss the same underlying phenomenon. I deliberately use ``comparison difficulty'' to distinguish the concept from other uses of similarity or complexity in the behavioral economics literature, and emphasize the specific type of cognitive difficulty arising from pairwise comparisons.

Within the comparison difficulty literature, one approach relies on objective benchmarks: if preferences are known, then the frequency of choosing the lower-value option indicates comparison difficulty \citep{tversky1969substitutability}. \citet{enke2023quantifying} apply the idea to lottery choices by asking which lottery has higher expected value rather than which lottery subjects choose. Controlling for the expected value differences, the error rate is a sign of difficult comparisons. This paper contributes a revealed preference approach that embody comparison difficulty while incorporating heterogeneity. The framework remains agnostic about individuals' preferences and imposes no functional form assumptions on either preferences or comparison difficulty. This generality enables analysis in domains such as snack foods or images, where natural metrics are absent and experimental designs requiring objective benchmarks are difficult to implement.

Existing revealed preference approaches characterize comparison difficulty by strong, moderate, and weak stochastic transitivity \citep{luce1965preference, tversky1969substitutability, he2024moderate}.  Recent works in random choice \citep{natenzon2019random, he2023random, shubatt2024tradeoffs} model such difficulty of comparing alternatives with specific functional forms and distance-based complexity metrics.
My approach addresses a key challenge faced by researchers testing stochastic transitivity or estimating models functional forms: the lack of exact choice probabilities for each individual and each pair. I modify the existing revealed preference framework to a collective rationalizability version for aggregate data. A critical advantage of this modification is that it explicitly allows for heterogeneity. Although researchers recognize that heterogeneity across subjects can theoretically confound comparison difficulty across options, this paper characterizes which aggregate patterns violating strong stochastic transitivity can be explained by comparison difficulty alone, by heterogeneity alone, or require both. I demonstrate the empirical importance of this distinction using two existing experiments.

More broadly, this paper belongs to the literature on revealed preferences in stochastic choice \citep{block1959random, falmagne1978representation,
mcfadden1981econometric, mcfadden2005revealed,
mcfadden1990stochastic, fishburn1992induced, fishburn1998stochastic}. They ask the question: are the distributions of choices observed from a population of individuals consistent with rational choice theory, i.e., \textit{random utility maximization} (RUM). I ask the question: are the distributions of choices observed from a population of individuals consistent with different random choice models, with or without varying comparison difficulty across pairs.
In this paper, response distributions may arise from taste heterogeneity, stochastic elements in individual choices, or both. 
\citet{kitamura2018nonparametric} develops a statistical test of random utility models for rational demand system (i.e. SARP/GARP). My paper develops a similar test using the recent econometrics results by \citet{fang2019inference, hong2018numerical}.


The remainder of this paper is organized as follows. Section \ref{sec: model} describes the three nested classes of model and develops characterizations for collective rationalizability. Section \ref{sec: test} develops the statistical testing for collective rationalizability,  including the statistical setup, a testing procedure that is uniformly valid, and a Monte Carlo investigation. Section \ref{sec: application} applies the framework to two existing experimental datasets. Section \ref{sec: conclusion} concludes. Details and some additional results are reported in the Appendix.

\newpage
\section{Random Choice and Collective Rationalizability}\label{sec: model}
This section develops a revealed preference framework for testing random choice models while allowing for heterogeneity. The analysis focuses on three nested classes of random choice models that progressively relax restrictions on how choice probabilities vary across different pairs of alternatives. The benchmark model, Simple Scalability, assumes that all variation in choice probabilities reflects only the relative locations of the two alternatives on a one-dimensional utility scale. Moderate Utility and Weak Utility progressively relax this assumption by allowing comparison difficulty to vary across different pairs.
I characterize \textit{collective rationalizability} for the three model classes: choice patterns are collectively rationalizable if they can be expressed as a weighted average of individual choices satisfying the corresponding model.

The section proceeds as follows. Section \ref{sec: models} introduces the model setup and existing revealed preference framework, stochastic transitivity. Section \ref{sec: collective-rational} characterizes collective rationalizability for the three model classes. Section \ref{sec: to-rum} connects collective rationalizability to random utility models. Section \ref{sec: cplx-ranking} discusses identification of the \textit{ranking} of comparison difficulty across pairs.

\subsection{Three Nested Classes of Models and Stochastic Transitivity}\label{sec: models}
Let $Z$ be a \textit{finite} set of choice options. A \textit{binary menu} is a subset of $Z$ with cardinality of 2. Let $\A$  denote the set of binary menus for which choice probabilities have been observed. A \textit{binary stochastic choice rule} $\rho:\A\to[0,1]$ assigns to each menu $(x,y)$ the probability $\rho(x,y)$ that $x$ is chosen over $y$, with $\rho(x,y)+\rho(y,x)=1$ for all $(x,y)\in \A.$ In the main analysis, I assume $\A=Z^2$.

\begin{definition}[\textbf{Random Choice Models}]
\

\begin{enumerate}[(i).]
    \item A \textit{Simple Scalability (SS) representation} has the form \citep{krantz1964scaling}
    \begin{equation}
        \rho(x,y) = F(u(x),u(y)),\quad \forall x,y\in Z
    \end{equation}
    for some utility function $u:Z\to \R$ and a function $F: \R\times \R\to (0,1)$ that is strictly increasing in the first argument and strictly decreasing in the second, satisfying $F(a,b) = 1-F(b,a),\forall a,b\in \R.$\footnote{In SS representation, it is assumed that all choice probabilities are neither 0 or 1. As discussed in \cite{mcfadden1974conditional} and \cite{fudenberg2015stochastic}, the positivity assumption is empirically indistinguishable from allowing for the boundary values.}
    \item A \textit{Moderate Utility (MU) representation} has the form \citep{halff1976choice}
    \begin{equation}
        \rho(x,y) = F\left(\frac{u(x)-u(y)}{c(x,y)}\right),\quad \forall x,y\in Z
    \end{equation}
    for some utility function $u:Z\to \R$, strictly increasing function $F:\R\to [0,1]$ satisfying $F(x)=1-F(-x)$, and distance metric\footnote{A distance metric is a function $c: Z^2\to\R^+ $ satisfying (i) $c(x,y)=0$ if and only if $x=y$; and (ii) symmetry $c(x,y)=c(y,x)$; and (iii) triangle inequality $c(x,z)\le c(x,y)+c(y,z),\forall (x,y), (y,z), (x,z)\in\A$.} $c: Z^2\to \R^+$.
    \item A \textit{Weak Utility (WU) representation} has the form \citep{luce1965preference}
    \begin{equation}
        \rho(x,y) = F\left(\frac{u(x)-u(y)}{c(x,y)}\right),\quad \forall x,y\in Z
    \end{equation}
    for some utility function $u:Z\to \R$, strictly increasing function $F:\R\to [0,1]$, satisfying $F(x)=1-F(-x)$, and semimetric\footnote{A semimetric is a function $c: Z^2\to\R^+ $ satisfying (i) $c(x,y)=0$ if and only if $x=y$; and (ii) symmetry $c(x,y)=c(y,x)$.} $c: Z^2\to \R^+$.
\end{enumerate}
\end{definition}

\begin{remark}\label{remark: binary-relation}
    Under Simple Scalability, Moderate Utility, or Weak Utility, the condition $\rho(x,y)\ge 1/2$ holds if and only if $u(x)\ge u(y)$. Define the preference relation $x\succcurlyeq y$ if and only if $\rho(x,y)\ge 1/2$.
\end{remark}

The key postulate of Simple Scalability is that for a given individual, each alternative is assigned a location on a one-dimensional utility scale, and the degree to which choice probabilities vary between any two alternatives depends only on their relative locations. The term ``Simple Scalability'' reflects this reliance on scalar utility representations.

Many familiar stochastic choice rules are special cases of Simple Scalability with additional structures. For example, the logit model $\rho(x,y) = \frac{e^{u(x)/\lambda}}{e^{u(x)/\lambda}+e^{u(y)/\lambda}}$, and the probit model $\rho(x,y)=\Phi_\sigma(u(x)-u(y))$.\footnote{These models have the additional structure that choice probabilities depend only on utility \textit{differences}, corresponding to the Fechner model \citep{fechner1860elemente} $\rho(x,y)=F(u(x)-u(y))$ for some utility function $u:Z\to \R$ and a function $F:\R\to(0,1)$ that is strictly increasing. For logit, $F$ is the cumulative distribution function (CDF) of a Gumbel distribution with variance $1/\lambda$; for probit, $F$ is the CDF of normal distribution with standard deviation $\sigma$. $\Phi_\sigma$ denotes the cumulative distribution function of the normal distribution normal distribution $N(0,\sigma)$.} More generally, additive random utility models with i.i.d. utility shocks
\begin{equation}
    \rho(x,y) = \Pr[u(x)+\varepsilon_x \ge u(y)+\varepsilon_y],\quad \text{ where }\varepsilon_x,\varepsilon_y \text{ are i.i.d. noise,}
\end{equation}
are examples of Simple Scalability, where stochasticity can stem from uncertainty or noisy reading regarding the true utilities. Alternatively, following \citet{fudenberg2015stochastic}, stochasticity in Simple Scalability can also reflect costly implementation of desired choices. Their additive perturbed utility representation, restricted here to binary menus,
\begin{equation}
    \rho(x,y) = \operatorname*{arg\,max}_{p\in(0,1)}\ \  pu(x)+(1-p)u(y)-c(p),
\end{equation}
where $c$ is a convex perturbation function capturing costs from implementing more precise decisions, also yields an example of Simple Scalability.\footnote{While \citet{fudenberg2015stochastic} develop their model for general menu sizes, I only present the binary case.}

The distinguishing feature of Moderate Utility and Weak Utility representations, compared to Simple Scalability, is the presence of $c(x,y)$ in the denominator. This term captures the difficulty of discriminating between the two options $x$ and $y$, which we define as the \textit{comparison difficulty} of the binary menu $\{x,y\}$.  Crucially, while Simple Scalability assumes that differences in choice randomness reflect only the relative locations of alternatives on a one-dimensional utility scale, the comparison difficulty $c(x,y)$ can depend on any attributes of the alternatives. This allows Moderate Utility and Weak Utility to capture how non-utility features affect  stochastic choice.

For instance, consider a newly-hired assistant professor buying a laptop. Even though evaluating the value of each laptop is not straightforward for her, choosing between two laptops that differ only in color should be relatively straightforward with low randomness in choice. However, choosing between laptops from different brands, with different operating systems, processing speeds, and storage capacities, becomes considerably more difficult, generating greater randomness in choice. This pair-specific noise in the comparison process constitutes what we call comparison difficulty, a dimension distinct from utility differences.

One interpretation of this stochasticity comes from \citet{natenzon2019random}, who models the choice process as Bayesian learning under imperfect information about alternatives. In his framework, the correlation between noise terms depends on the degree of ``similarity'' between options. When alternatives are easier to compare, the decision-maker can extract more precise signals about their true utility differences, leading to less random choices. Conversely, difficult comparisons, such as those requiring tradeoffs across multiple incommensurable attributes, produce noisier signal extraction and more randomness in choice.

The difference between Moderate Utility and Weak Utility lies in whether comparison difficulty  $c(x,y)$ must satisfy the triangle inequality. Under Moderate Utility, if comparisons of $\{x,y\}$, $\{y,z\}$ are both easy, then $x$ and $z$ cannot be arbitrarily difficult to compare. Their comparison difficulty is bounded by the sum of the other two. This creates a coherent structure where comparison difficulties relate to each other systematically,  like distances in physical space. Many existing models that derive comparison difficulty from the distance of attributes or correlation of the noise distribution belong to Moderate Utility \citep{mcfadden1978modeling, natenzon2019random, he2023random,shubatt2024tradeoffs}. Weak Utility relaxes this requirement, allowing comparison difficulties to violate the triangle inequality. It permits more complex patterns where the difficulty of comparing $x$ and $z$ need not be constrained by their respective difficulties with an intermediate option $y$.  Testing whether data satisfies MU versus WU reveals whether comparison difficulty follows a distance-based structure or exhibits more complex, non-distance metric patterns.

It has been well established in the literature that Simple Scalability, Moderate Utility, and Weak Utility representations are equivalent to strong, moderate, and weak stochastic transitivity, respectively, when all binary pairs are observed.

\begin{definition}[\textbf{Stochastic Transitivity}]\label{def:transitivity}
\

A choice rule $\rho$ satisfies \citep{luce1965preference, tversky1969substitutability, he2024moderate}
    \begin{enumerate}[(i).]
        \item \textit{Strong stochastic transitivity} if $\forall x,y,z\in Z$
        \begin{equation}
            \min\{\rho(x,y),\rho(y,z)\}\ge\!\!(>) 1/2 \quad \text{\textup{implies}} \quad \rho(x,z)\ge\!\!(>) \max\{\rho(x,y),\rho(y,z)\}.
        \end{equation}
        \item \textit{Moderate stochastic transitivity} if $\forall x,y,z\in Z$
        \begin{equation}
            \min\{\rho(x,y),\rho(y,z)\}\ge 1/2 \quad \text{\textup{implies}} \quad \left\{ \begin{matrix}
            \rho(x, z) > \min\{\rho(x,y),\rho(y,z)\}  \\
            \mathrm{or} \\
            \rho(x, z) = \rho(x,y)=\rho(y,z)
        \end{matrix} \right. .
        \end{equation}
        \item \textit{Weak stochastic transitivity} if $\forall x,y,z\in Z$
        \begin{equation}
            \min\{\rho(x,y),\rho(y,z)\}\ge 1/2 \quad \text{\textup{implies}} \quad \rho(x,z)\ge 1/2.
        \end{equation}
    \end{enumerate}
\end{definition}

The equivalence arises because stochastic transitivity conditions impose restrictions on how choice probabilities can vary across different triples of alternatives. Strong stochastic transitivity requires the most restrictive pattern: if $x\succcurlyeq y$ and $y\succcurlyeq z$, then $x$ must be chosen over $z$ with probability at least $\max\{\rho(x,y), \rho(y,z)\}$. This constraint is precisely what Simple Scalability implies when all variation in choice probabilities reflects only the relative locations on a one-dimensional scale: the distance between $x$ and $z$ must be at least as large as that of the intermediate pairs. Moderate and weak stochastic transitivity progressively relax this restriction by weakening the bound on $\rho(x,z)$. In particular, weak stochastic transitivity is exactly the transitivity of the preference relation: if $x\succcurlyeq y$ and $y\succcurlyeq z$, then $x\succcurlyeq z$.

\subsection{Collective Rationalizability}\label{sec: collective-rational}
The preceding analysis characterizes when choices are rationalizable by Simple Scalability, Moderate Utility, or Weak Utility \textit{without} heterogeneity. In practice, however, researchers often work with data collected from possibly heterogeneous subjects but have to assume representative agent. The classic Condorcet paradox \citep{condorcet1785} illustrates this issue. Consider three individuals with different preference orderings: $x\succ y\succ z$, $y\succ z\succ x$, and $z\succ x\succ y$. If each individual chooses deterministically according to their preferences, the representative-agent choice probabilities are $\rho(x,y)=\tfrac{2}{3},\rho(y,z)=\tfrac{2}{3}$, and $\rho(x,z)=\tfrac{1}{3}$, which violates even weak stochastic transitivity, despite each individual satisfying Simple Scalability.

To account for heterogeneity, I develop a framework of \textit{collective rationalizability} that characterizes when  stochastic choices can be represented as arising from a heterogeneous population of rational individuals. I begin with the definition of collective rationalizability, then provide the characterization in Section \ref{sec: coll-rational-charact}, and present an illustrated example with three options in Section \ref{sec: rational-exp}.

To formalize the idea, I distinguish between aggregate and individual choices. Let $\rho$ now denote the aggregate stochastic choice rule, representing the population-level probability that $x$ is chosen over $y$. The question of collective rationalizability asks: when can we find some $N$ individuals, each characterized by a stochastic choice rule $\rho_i$ satisfying a given model (Simple Scalability, Moderate Utility, or Weak Utility), such that the aggregate choice probabilities $\rho$ can be expressed as a weighted average of these individual choice rules?

\begin{definition}[\textbf{Collective Rationalizability}]\label{def_collective}
\

A \textit{Collective Simple Scalability} representation has the form
\begin{equation}
    \rho(x,y)=\sum_{i=1}^N \lambda_i  \rho_i(x,y),\quad \forall x,y\in Z,
\end{equation}
for some positive integer $N$, weights $\lambda_i> 0$ that sum to 1, and each $\rho_i$ admits Simple Scalability.

\textit{Collective Moderate Utility} and \textit{Collective Weak Utility} representations are defined analogously.
\end{definition}

Crucially, this definition allows us to determine whether observed violations of Simple Scalability in aggregate data can be explained by heterogeneity alone, comparison difficulty alone, or whether both are required. If aggregate choices cannot be \textit{collectively} rationalized by Simple Scalability, it implies that no population composition \textit{exists} that can generate the observed choice probabilities such that every individual finds all pairwise comparisons equally easy.

\begin{remark}
    In the definition above, choices are collectively rationalizable if there exists some positive integer $N$ representing the number of ``types'' of individuals, where each type admits the corresponding random choice model. I remain agnostic about the number of types. A natural question is whether the $N$ required for rationalization could exceed the actual sample size. As the definition shows, collective rationalizability is about convex combination of individual choice rules. Let $m=\binom{n}{2}$ denote the number of binary pairs for $n$ options. By Carathéodory's Theorem, as long as the sample size is at least $m+1$, the number of types needed will not exceed the sample size. For instance, with $n=6$ options, this requires a sample size of at least 16. This is typically satisfiable in practice.
\end{remark}

\begin{example}\label{eg_collective}
    Consider observed population choice over three alternatives $\{x_1,x_2,x_3\}$ is 
    $$(\rho(x_1,x_2),\rho(x_2,x_3),\rho(x_1,x_3)) = (0.65,0.65,0.1).$$
    If treated as a representative agent, it cannot even be rationalized by Weak Utility since it violates weak stochastic transitivity: $\rho(x_1,x_2)>0.5$, $\rho(x_2,x_3)>0.5$ but $\rho(x_1,x_3)<0.5$. However, it can be rationalized by Collective Weak Utility. Consider two subjects with individual stochastic choice rules as
    \begin{equation*}
        \begin{aligned}
            \text{Subject 1 -- }(\rho(x_1,x_2),\rho(x_2,x_3),\rho(x_1,x_3)) = (0.9,0.4, 0.1),\\
            \text{Subject 2 -- }(\rho(x_1,x_2),\rho(x_2,x_3),\rho(x_1,x_3)) = (0.4,0.9,0.1).
        \end{aligned}
    \end{equation*}
    It is easy to verify that aggregating subjects 1 and 2's choice rates with equal weights $\lambda_1=\lambda_2=0.5$ gives us the aggregated choice probabilities above. Both of them satisfy weak stochastic transitivity but with different tastes: subject 1 has preference ordering $x_3\succ  x_1\succ x_2$, while subject 2 has $x_2\succ x_3\succ x_1$.
\end{example}

Can the above example be rationalized by Collective Simple Scalability? Try constructing individual choice rules that satisfy Simple Scalability and aggregate to $(0.65,0.65,0.1)$, and you will find it impossible. To determine this systematically requires a characterization of collective rationalizability. More generally, Definition \ref{def_collective} raises a practical challenge: how can we test whether given aggregate data are collectively rationalizable without exhaustively searching over all possible combinations of individual types? The following section addresses this question.

\subsubsection{Characterization of Collective Rationalizability}\label{sec: coll-rational-charact}
Given a finite set of $n$ alternatives $Z=\{x_i\}_{i=1}^n$ and $m=\binom{n}{2}$ possible pairs, a stochastic choice rule can be identified with a vector $\bm\rho \in [0,1]^m$, where entries of this \textit{choice vector} are $\rho(x_i, x_j)$ with some arbitrary but henceforth fixed ordering.\footnote{Throughout this paper, I assume $\rho(x_i, x_j)$ for $1\le i < j \le n$, arranged in the lexicographical order. For example, with $n=4$ options, a stochastic choice rule becomes the vector $$\bm\rho = [\rho(x_1,x_2), \rho(x_1,x_3), \rho(x_1,x_4), \rho(x_2,x_3), \rho(x_2,x_4), \rho(x_3,x_4)]^T \in [0,1]^6.$$}
The characterization of collective rationalizability is thus finding a subset of the whole space $[0,1]^m$ that consists of all rationalizable choice vectors.

The core insight underlying the approach is that collective rationalizability concerns convex combinations: a choice rule is collectively rationalizable if and only if it can be expressed as a convex combination of individual choice rules satisfying the corresponding model. This suggests a two-step characterization strategy. First, I characterize the sets of choice vectors that are rationalizable by Simple Scalability, Moderate Utility, or Weak Utility, respectively. Second, I take the convex hull of each set to obtain the collective rationalizability characterization. This transforms the problem into checking membership in a well-defined convex set. Importantly, this approach generalizes beyond the three models studied here: as long as a characterization without heterogeneity is available, the techniques in Step 2 apply broadly to derive the collectively rationalizable version of the corresponding model.

\textbf{\textit{Step 1: Characterize rationalizable set without heterogeneity.}} Stochastic transitivity conditions characterize rationalizability without heterogeneity, but they are defined implicitly through constraints on all triplets. To apply the convex hull approach for collective rationalizability, I derive an explicit analytical characterization of the individual rationalizable set as a subset of $[0,1]^m$. More precisely, I characterize this set as a union of \textit{convex polytopes}, where each polytope corresponds to a specific preference ordering (or, for Simple Scalability and Moderate Utility, specific orderings of choice probabilities under a preference ordering), providing both \textit{half-space-} and \textit{vertex-}representations for each polytope.

Note that weak stochastic transitivity is the minimal constraint shared by all three models. To understand the geometry of the choice space, consider how we can partition the entire space $[0,1]^m$ based on the  preference relation $\succcurlyeq$ defined by $x\succcurlyeq y$ iff $\rho(x,y)\ge 1/2$. Weak stochastic transitivity is precisely the requirement that this binary relation is transitive. The entire choice space $[0,1]^m$ can be partitioned into $2^m$ small hypercubes based on whether each $\rho(\cdot,\cdot) \geq \frac{1}{2}$. Since there are $n!$ different linear orders, there are $n!$ out of $2^m$ hypercubes that can be rationalized by Weak Utility. Each of these hypercubes is a convex polytope with explicit half-space constraints defined with the help of the following definition.

\begin{definition}[\textbf{Deterministic Choice Rule}]\label{def:det_choice}
\

    A choice vector $\bm\rho^* \in [0,1]^m$ is a rational deterministic choice rule if all entries are in $\{0,1\}$ and correspond to a linear order. Specifically, there exists a permutation $\sigma$ over $\{1,2,\dots,n\}$ such that 
\begin{equation*}
    \rho^*(x_i,x_j) = \left\{ \begin{matrix}
        1, \quad\text{ if } \sigma(i) < \sigma(j) \\
        0, \quad\text{ if } \sigma(i) > \sigma(j) 
    \end{matrix}
    \right., \qquad \forall 1\le i<j\le n.
\end{equation*}
The permutation $\sigma$ induces a linear order where $x_i\succ x_j$ if and only if $\sigma(i)<\sigma(j)$.
\end{definition}

Define the set of all rational deterministic choice rules $\bm\rho^*$ as $R_{\mathrm{det}}$. Then $\bm\rho$ satisfies weak stochastic transitivity if and only if
\begin{equation}
    \bm\rho \in \mathrm{Cube}(\bm\rho^*) := \left\{\bm\rho: \left|\rho(x_i,x_j) - \rho^*(x_i,x_j)\right| \le \frac{1}{2},\ \forall x_i,x_j\in Z\right\} \quad \text{ for some } \bm\rho^* \in R_{\mathrm{det}}.
    \label{eqn:indiv-small-cube}
\end{equation}
The inequality constraints $\mathrm{Cube}(\bm\rho^*) $ ensures that $\bm\rho$ inherits the same transitive ordering as $\bm\rho^*$.\footnote{The union of the sets $\mathrm{Cube}(\bm\rho^*)$ characterizes the \textit{closure} of the rationalizable set of WU. The closure is necessitated by complications at the indifference boundary $\rho(x,y) = 1/2$. For instance, $\bm\rho=[0.49, 0.49, 0.7]^T$ satisfies WU while $\bm\rho=[0.5, 0.5, 0.7]^T$ does not. Since the convex hull of finite unions of closed convex sets equals the closure of the convex hull, boundary effects remain measure-zero for collective rationalizability. Hence, throughout the paper, I work with closures of all rationalizable sets (WU, MU, SS) without loss for empirical applications.
} Hence, Weak Utility rationalizability can be represented as the following subset:
\begin{equation}
R^{\mathrm{WU}} = \bigcup_{k=1}^{n!} \mathrm{Cube}(\bm\rho^*_k) := \bigcup_{k=1}^{n!} \left\{\bm\rho : \bm A^{\mathrm{WU}}_k \bm\rho \le \bm b^{\mathrm{WU}}_k\right\}.
\end{equation}

For Simple Scalability, strong stochastic transitivity imposes additional constraints within each of the $n!$ hypercubes satisfying weak transitivity. Specifically, if $x\succcurlyeq y$ and $y\succcurlyeq z$, then $\rho(x,z)\ge \rho(x,y)$ \textit{and} $\rho(x,z)\ge \rho(y,z)$.
These constraints, inequalities of the form $\rho(x_i, x_j) \ge \rho(x_{i'}, x_{j'})$, slice away portions of each hypercube and leave a smaller convex polytope, and Simple Scalability representation is the union of these $n!$ convex polytopes.

Moderate Utility presents the most intricate geometry. Within each of the $n!$ hypercubes satisfying weak transitivity, moderate stochastic transitivity requires: if $x\succcurlyeq y$ and $y\succcurlyeq z$, then $\rho(x,z)\ge \rho(x,y)$ \textit{or} $\rho(x,z)\ge \rho(y,z)$.
The disjunctive nature of the constraints creates sets that are generally non-convex. To handle this, I divide each hypercube $\mathrm{Cube}(\bm\rho^*)$ into simplexes based on how the choice probabilities rank against each other. Each hypercube splits into $m!$ simplexes, and moderate stochastic transitivity determines which survive. Let $L$ be the number of surviving simplexes per hypercube $\mathrm{Cube}(\bm\rho^*)$, which is equal for different hypercubes due to symmetry, and the Moderate Utility rationalizable set becomes a union of $n!\cdot L$ simplexes.

Therefore, the Simple Scalability and Moderate Utility representations can be written as
\begin{align}
    R^{\mathrm{SS}}= \bigcup_{k=1}^{n!} \left\{\bm\rho : \bm A^{\mathrm{SS}}_k \bm\rho \le \bm b^{\mathrm{SS}}_k\right\},\quad R^{\mathrm{MU}}= \bigcup_{k=1}^{n!}\bigcup_{l=1}^{L} \left\{\bm\rho : \bm A^{\mathrm{MU}}_{k,\ell} \bm\rho \le \bm b^{\mathrm{MU}}_{k,\ell}\right\},
\end{align}
where $\bm A$ and $\bm b$ represent inequalities encoding both the hypercube boundaries and the strong/moderate stochastic transitivity inequalities, with exact forms in Appendix \ref{appdx: prf-a-b}. Rationalizable set for Weak Utility, Simple Scalability, and Moderate Utility can be characterized as a union of \textit{finitely} many convex polytopes, where the visualization under $n=3$ is shown in Figure \ref{fig:wu}, \ref{fig:ss}, and \ref{fig:mu}.

In addition to the half-space-representation, these polytopes' vertices are characterized as follows.
\begin{lemma}[\textbf{Vertices of Individual Rationalizability}]
    \label{lem:extreme-point}
    \
    
    Every vertex of polytopes comprising $R^{\mathrm{SS}}$, $R^{\mathrm{MU}}$, or $R^{\mathrm{WU}}$ has entries taking values only in $\{0, \frac{1}{2}, 1\}$.
\end{lemma}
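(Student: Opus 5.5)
The plan is to exploit one structural feature shared by all three families of polytopes: every defining inequality is either a box constraint of the form $\rho(x_i,x_j)\in[0,\tfrac12]$ or $[\tfrac12,1]$ (from $\mathrm{Cube}(\bm\rho^*)$), or an ordering constraint $\rho(x_i,x_j)\ge\rho(x_{i'},x_{j'})$ between two coordinates (from strong stochastic transitivity for SS, or from the simplex ordering for MU). Before comparing coordinates, I would normalize by the reflection $\tau(\bm\rho)$ that replaces each coordinate by $\rho(x,y)$ oriented along the preference order of $\bm\rho^*$. Concretely, if $\rho^*(x_i,x_j)=0$ we use $1-\rho(x_i,x_j)=\rho(x_j,x_i)$. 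This affine bijection maps $\{0,\tfrac12,1\}^m$ to itself, so it preserves the claim. After the reflection, each polytope takes the form $\{\bm q\in[\tfrac12,1]^m : q_a\ge q_b \text{ for } (a,b)\in E\}$ for some set $E$ of coordinate pairs. For WU, $E=\emptyset$. For SS, $E$ consists of the strong stochastic transitivity pairs, each written as $\rho(x,z)\ge\rho(x,y)$ with all three probabilities oriented toward the preferred option. For MU, each piece is the simplex in which all coordinates are totally ordered, $q_{\pi(1)}\ge\cdots\ge q_{\pi(m)}$.

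Next, I would show that any polytope of the form $P=\{\bm q\in[\tfrac12,1]^m: q_a\ge q_b,\ (a,b)\in E\}$ has all its vertices in $\{\tfrac12,1\}^m$. Take a vertex $\bm q$ and suppose some coordinate lies strictly in $(\tfrac12,1)$. Let $t$ be one such value, and let $S=\{a: q_a=t\}$. Now perturb every coordinate in $S$ simultaneously, setting $\bm q^{\pm}=\bm q\pm\varepsilon\mathbf 1_S$. For small $\varepsilon$ these points stay feasible. The box constraints remain slack on $S$. An ordering constraint $q_a\ge q_b$ is unaffected if both indices lie in $S$ or both lie outside $S$. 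If exactly one lies in $S$, the values differ, so the constraint is strict and survives a small perturbation. Since $\bm q$ is the midpoint of $\bm q^+$ and $\bm q^-$, it cannot be a vertex, which is a contradiction. Hence every vertex has entries in $\{\tfrac12,1\}$. Undoing the reflection gives entries in $\{0,\tfrac12,1\}$.

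An alternative route to the same step is to observe that the constraint matrix (rows $e_a-e_b$ together with $\pm e_a$) is totally unimodular, being a network matrix. Combined with the integrality of $2\bm b$ after the shift $\bm q\mapsto 2\bm q-1$, this gives the same conclusion. I would present the perturbation argument because it is self-contained.

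The main obstacle is bookkeeping rather than depth. I need to verify against the exact $\bm A,\bm b$ in Appendix~\ref{appdx: prf-a-b} that, after reflection, every MU and SS constraint really is a pure pairwise comparison $q_a\ge q_b$ together with the box constraint, and not something like $q_a\ge 1-q_b$. This is where the orientation convention matters. For MU, I must also check that the simplex pieces are defined by a total preorder of the oriented probabilities within a cube, so that the reflected form applies. Once that is confirmed, the argument is the uniform one above for all three models.
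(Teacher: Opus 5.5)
Your proposal is correct and follows essentially the same route as the paper: you reflect each hypercube to a canonical orientation so that every polytope becomes a box plus pairwise order constraints, and then show that such a polytope has vertices only at the box endpoints, which is exactly the paper's auxiliary Lemma~\ref{lem:extreme-point-simple}. The differences are local: you prove that sub-lemma by perturbing a whole level set $\{a : q_a = t\}$ rather than by counting linearly independent tight constraints, so you do not need the paper's full-dimensionality hypothesis; and you apply it directly to each SS/WU polytope with a general constraint set $E$, instead of first triangulating the cube into ordering simplexes and then arguing that vertices of a union of simplexes are simplex vertices.
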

Therefore, vertices can be obtained by checking strong, moderate, or weak stochastic transitivity of $\bm\rho\in\{0,\tfrac{1}{2},1\}^m$. Let $P_\mathrm{SS}$, $P_\mathrm{MU}$, and $P_\mathrm{WU}$ denote the set of extreme points respectively. For an individual with a \textit{given} preference ordering, the extreme points $\{0, \frac{1}{2}, 1\}$ emerge naturally from how we think about choice difficulty. When comparisons are easy, choices become deterministic (0 or 1 depending on the preference). When comparisons are maximally difficult, choices become essentially random (approaching $\tfrac{1}{2}$). Any observed choice probability can be expressed as a combination of these extremes, and these values represent the fundamental boundaries of rationalizable choice under transitivity constraints with some given preference ordering.

\textbf{\textit{Step 2: Characterize collective rationalizability with heterogeneity.}}
Collective rationalizability is then obtained by taking the convex hull of the rationalizable set without heterogeneity.  

\begin{theorem}[\textbf{Characterization of Collective Rationalizability}]
\label{thm:collective}
\

    For a choice vector $\bm\rho \in [0,1]^m$, the following are equivalent:
    \begin{enumerate}[(i).]
        \item $\bm\rho$ is \textit{collectively rationalizable} by Simple Scalability.
        \item $\bm\rho \in \conv(P_\mathrm{SS})$.
        \item The following linear system over $(\bm\rho_1, \dots, \bm\rho_{n!}, \bm x)$ is feasible: 
        \vspace{-.3cm}
            \begin{align}
            &\bm\rho = \sum_{k=1}^{n!} \bm\rho_k, \quad \bm x \in \Delta^{n!-1}, \quad
                \bm 0 \le \bm \rho_k \le x_k \bm 1, \quad \bm A_k^\mathrm{SS}\bm\rho_k\le x_k\bm b_k^\mathrm{SS}, \quad \quad \forall k=1,\dots, n! 
            \end{align}
    \end{enumerate}
    Collective Moderate Utility and Collective Weak Utility are characterized analogously.
\end{theorem}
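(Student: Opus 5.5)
The plan is to prove (i)$\Leftrightarrow$(ii) and (ii)$\Leftrightarrow$(iii) separately. Throughout, I work with the closures of the individual rationalizable sets, as the footnote on $\mathrm{Cube}(\bm\rho^*)$ permits. Then $R^{\mathrm{SS}}=\bigcup_{k=1}^{n!}Q_k$ with $Q_k=\{\bm\rho:\bm A_k^{\mathrm{SS}}\bm\rho\le \bm b_k^{\mathrm{SS}}\}$. Each $Q_k$ lies in the bounded hypercube $\mathrm{Cube}(\bm\rho_k^*)$, so it is a polytope.

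\emph{(i)$\Leftrightarrow$(ii).} By Definition \ref{def_collective}, $\bm\rho$ is collectively rationalizable if and only if it is a finite convex combination of points of $R^{\mathrm{SS}}$, that is, $\bm\rho\in\conv(R^{\mathrm{SS}})=\conv\bigl(\bigcup_k Q_k\bigr)$. Since each $Q_k$ is a polytope, Minkowski's theorem gives $Q_k=\conv(\mathrm{vert}(Q_k))$. Hence $\conv(\bigcup_k Q_k)=\conv(\bigcup_k \mathrm{vert}(Q_k))$. By Lemma \ref{lem:extreme-point}, every vertex has entries in $\{0,\tfrac12,1\}$. The set $P_{\mathrm{SS}}$ consists of points of $\{0,\tfrac12,1\}^m$ that satisfy strong stochastic transitivity, so they lie in $R^{\mathrm{SS}}$. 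This gives the sandwich $\bigcup_k\mathrm{vert}(Q_k)\subseteq P_{\mathrm{SS}}\subseteq R^{\mathrm{SS}}$, and taking convex hulls yields $\conv(P_{\mathrm{SS}})=\conv(R^{\mathrm{SS}})$. The only delicate point is the footnoted closure issue: a point of $P_{\mathrm{SS}}$ with a $\tfrac12$ entry may fail strict membership. I handle this by stating all sets as closures; the conclusion is unaffected up to boundaries.

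\emph{(ii)/(i)$\Rightarrow$(iii).} Write $\bm\rho=\sum_i\lambda_i\bm\rho_i$ with each $\bm\rho_i\in R^{\mathrm{SS}}$, and assign each $\bm\rho_i$ to one index $k(i)$ with $\bm\rho_i\in Q_{k(i)}$. Set $x_k=\sum_{i:k(i)=k}\lambda_i$ and $\bm\rho_k=\sum_{i:k(i)=k}\lambda_i\bm\rho_i$; this is Balas' disjunctive lifting. Then $\bm x\in\Delta^{n!-1}$ and $\sum_k\bm\rho_k=\bm\rho$. Linearity gives $\bm A_k^{\mathrm{SS}}\bm\rho_k\le x_k\bm b_k^{\mathrm{SS}}$. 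Since each $\bm\rho_i\in[0,1]^m$, we also get $\bm 0\le\bm\rho_k\le x_k\bm 1$.

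\emph{(iii)$\Rightarrow$(i).} Take a feasible solution. For $x_k>0$, set $\tilde{\bm\rho}_k=\bm\rho_k/x_k$. Dividing the constraints by $x_k$ shows $\tilde{\bm\rho}_k\in Q_k$, so $\bm\rho=\sum_{x_k>0}x_k\tilde{\bm\rho}_k$ is a convex combination of points in $R^{\mathrm{SS}}$. The key step, and the main obstacle, is showing that indices with $x_k=0$ contribute nothing. The constraint $\bm 0\le\bm\rho_k\le x_k\bm 1$ forces $\bm\rho_k=\bm 0$ in that case. This is precisely why the box constraint is included: $Q_k$ is bounded, but the homogenized system $\bm A_k\bm\rho_k\le \bm 0$ alone need not force $\bm\rho_k=\bm 0$ unless one verifies that $Q_k$ has a trivial recession cone. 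The explicit bound makes this immediate.

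\emph{Other models.} The arguments for MU and WU are identical. For MU, one replaces the index set $\{1,\dots,n!\}$ with the $n!\cdot L$ simplices $(k,\ell)$ and $\bm x\in\Delta^{n!L-1}$. For WU, one uses the cubes with $\bm A^{\mathrm{WU}}_k,\bm b^{\mathrm{WU}}_k$. In each case Lemma \ref{lem:extreme-point} supplies the vertex sets $P_{\mathrm{MU}}$ and $P_{\mathrm{WU}}$.
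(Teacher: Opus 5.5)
Your proposal is correct and follows the paper's route. You get (i)$\Leftrightarrow$(ii) from $\conv\bigl(\bigcup_k Q_k\bigr)=\conv\bigl(\bigcup_k \mathrm{vert}(Q_k)\bigr)$ together with Lemma~\ref{lem:extreme-point}, working with closures, and (i)$\Leftrightarrow$(iii) from Balas' disjunctive lifting; the only difference is that you verify the lifting directly, including why $\bm 0\le\bm\rho_k\le x_k\bm 1$ forces $\bm\rho_k=\bm 0$ when $x_k=0$, whereas the paper writes the mixed-integer system and cites \citet{balas1971intersection} for dropping integrality.
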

For the vertex-representation of collective rationalizability (i) $\iff$ (ii), I use the fact that:
\begin{equation}
\conv\left(\bigcup_{i=1}^k \mathrm{conv}(P_i)\right)=\;
\conv\left(\bigcup_{i=1}^k P_i\right),\quad \forall \text{ finite collection of sets }P_i\subseteq \mathbb{R}^m.
\end{equation}
This means the collective rationalizability set equals the convex hull of all extreme points obtained from the each of the  polytopes comprising $R^{\mathrm{SS}}$, $R^{\mathrm{MU}}$, or $R^{\mathrm{WU}}$.

For the half-space-representation of collective rationalizability (i) $\iff$ (iii), the convex hull of a finite union of convex polytopes can be readily obtained using techniques from modeling disjunctions \citep{balas1971intersection}, by first using indicator variables $\bm x$ to represent the finite union, then  relaxing the integrality constraints to obtain the convex hull. 

\autoref{thm:collective} provides both vertex and half-space representations of the collectively rationalizable sets. This dual characterization offers geometric insights into the structure of collectively rationalizable choice and the nesting relationships among Simple Scalability, Moderate Utility, and Weak Utility, with or without heterogeneity, as well as with other stochastic choice models. It also establishes the theoretical foundation for statistical testing, which will be developed in Section \ref{sec: test}.

\subsubsection{Example: Collective Rationalizability for Three Options}\label{sec: rational-exp}
To illustrate the theoretical results, consider the case of three choice options $Z=\{x_1,x_2,x_3\}$. The choice vector is 
$
    \bm\rho=[\rho(x_1,x_2),\ \rho(x_1,x_3),\ \rho(x_2,x_3)]^T\in[0,1]^3.
$

\textbf{Inequalities of individual rationalizability.} With three options, there are $3!=6$ possible linear orders. For each preference ordering, individual rationalizability constraints create distinct regions. For instance, under $x_1\succ x_2\succ x_3$, the corresponding deterministic choice vector is 
$$\bm\rho_k^*=[\rho^*(x_1,x_2),\rho^*(x_1,x_3),\rho^*(x_2,x_3)]=[1,1,1]^T.$$
\begin{enumerate}[(i).]
   \item Weak Utility: $\bm A_k^\mathrm{WU}\bm\rho\le \bm b_k^{\mathrm{WU}}$ means all choice rates $\rho(x_1,x_2),\rho(x_2,x_3),\rho(x_1,x_3)\in[\tfrac{1}{2},1]$, which are also included in the constraints for Simple Scalability and Moderate Utility.
    \item Simple Scalability: $\bm A_k^\mathrm{SS}\bm\rho\le \bm b_k^{\mathrm{SS}}$ \textit{additionally} requires $\rho(x_1,x_3)\ge \max\{\rho(x_1,x_2),\rho(x_2,x_3)\}$.
    \item Moderate Utility: The hypercube is further partitioned into 6 simplexes by the ranking of choice rates, where $L=4$ of them satisfy moderate stochastic transitivity.
    \begin{itemize}
        \item $\bm A_{k,1}^\mathrm{MU}\bm\rho\le \bm b_{k,1}^{MU}$ \textit{additionally}  requires $\rho(x_1,x_3)\ge \rho(x_1,x_2)\ge \rho(x_2,x_3)$;
        \item $\bm A_{k,2}^\mathrm{MU}\bm\rho\le \bm b_{k,2}^{\mathrm{MU}}$ \textit{additionally}  requires $\rho(x_1,x_3)\ge \rho(x_2,x_3)\ge \rho(x_1,x_2)$;
        \item $\bm A_{k,3}^\mathrm{MU}\bm\rho\le \bm b_{k,3}^{\mathrm{MU}}$ \textit{additionally}  requires $\rho(x_1,x_2)\ge \rho(x_1,x_3)\ge \rho(x_2,x_3)$;
        \item  $\bm A_{k,4}^\mathrm{MU}\bm\rho\le \bm b_{k,4}^{\mathrm{MU}}$ \textit{additionally}  requires $\rho(x_2,x_3)\ge \rho(x_1,x_3)\ge \rho(x_1,x_2)$.
    \end{itemize}
\end{enumerate}
Each individual rationalizable set is the union of 6 such regions, with each region representing one preference ordering illustrated by different colors in Figures \ref{fig:wu}, \ref{fig:mu}, and \ref{fig:ss}.

\textbf{Vertices of individual rationalizability.} Within each region, the extreme points have entries in $\{0,\tfrac{1}{2},1\}$. For example, for $x_1\succ x_2\succ x_3$, the extreme points are
\vspace{-0.1cm}
\begin{align*}
    P_{\mathrm{WU}} &= \{(1,1,1),(\tfrac{1}{2},1,1),(1,1,\tfrac{1}{2}),(\tfrac{1}{2},1,\tfrac{1}{2}),(\tfrac{1}{2},\tfrac{1}{2},\tfrac{1}{2}),(\tfrac{1}{2},\tfrac{1}{2},1), (1,\tfrac{1}{2},\tfrac{1}{2}), (1,\tfrac{1}{2},1)\},\\
    P_{\mathrm{MU}} &= \{(1,1,1),(\tfrac{1}{2},1,1),(1,1,\tfrac{1}{2}),(\tfrac{1}{2},1,\tfrac{1}{2}),(\tfrac{1}{2},\tfrac{1}{2},\tfrac{1}{2}),(\tfrac{1}{2},\tfrac{1}{2},1), (1,\tfrac{1}{2},\tfrac{1}{2})\},\\
    P_{\mathrm{SS}} &= \{(1,1,1),(\tfrac{1}{2},1,1),(1,1,\tfrac{1}{2}),(\tfrac{1}{2},1,\tfrac{1}{2}),(\tfrac{1}{2},\tfrac{1}{2},\tfrac{1}{2})\}.
\end{align*}
\textbf{Collective rationalizability.} The next step is to take the convex hull of the individual rationalizable sets, and Figures \ref{fig:collective-wu}, \ref{fig:collective-mu}, and \ref{fig:collective-ss} visualize the result.
For $n=3$, collective Simple Scalability and Moderate Utility coincide, and the facet-defining inequalities can be written as
\begin{align}
    \text{Collective SS and MU}&:\quad 1 \le \rho(x_1, x_2) + \rho(x_2, x_3) + \rho(x_3, x_1) \le 2,
    \label{eqn:facet-defining-triplet}\\
    \text{Collective WU}&:\quad \tfrac{1}{2} \le \rho(x_1, x_2) + \rho(x_2, x_3) + \rho(x_3, x_1) \le \tfrac{5}{2},
    \label{eqn:facet-defining-triplet-wu}
\end{align}
in addition to the trivial boundary inequalities $0 \le \rho(x_i, x_j) \le 1$, $\forall (i,j)$. 

\begin{figure}[H]
    \centering
    \begin{subfigure}{0.45\textwidth}
        \centering
        \includegraphics[width=\linewidth,trim={5cm 9cm 1cm 4cm},clip]{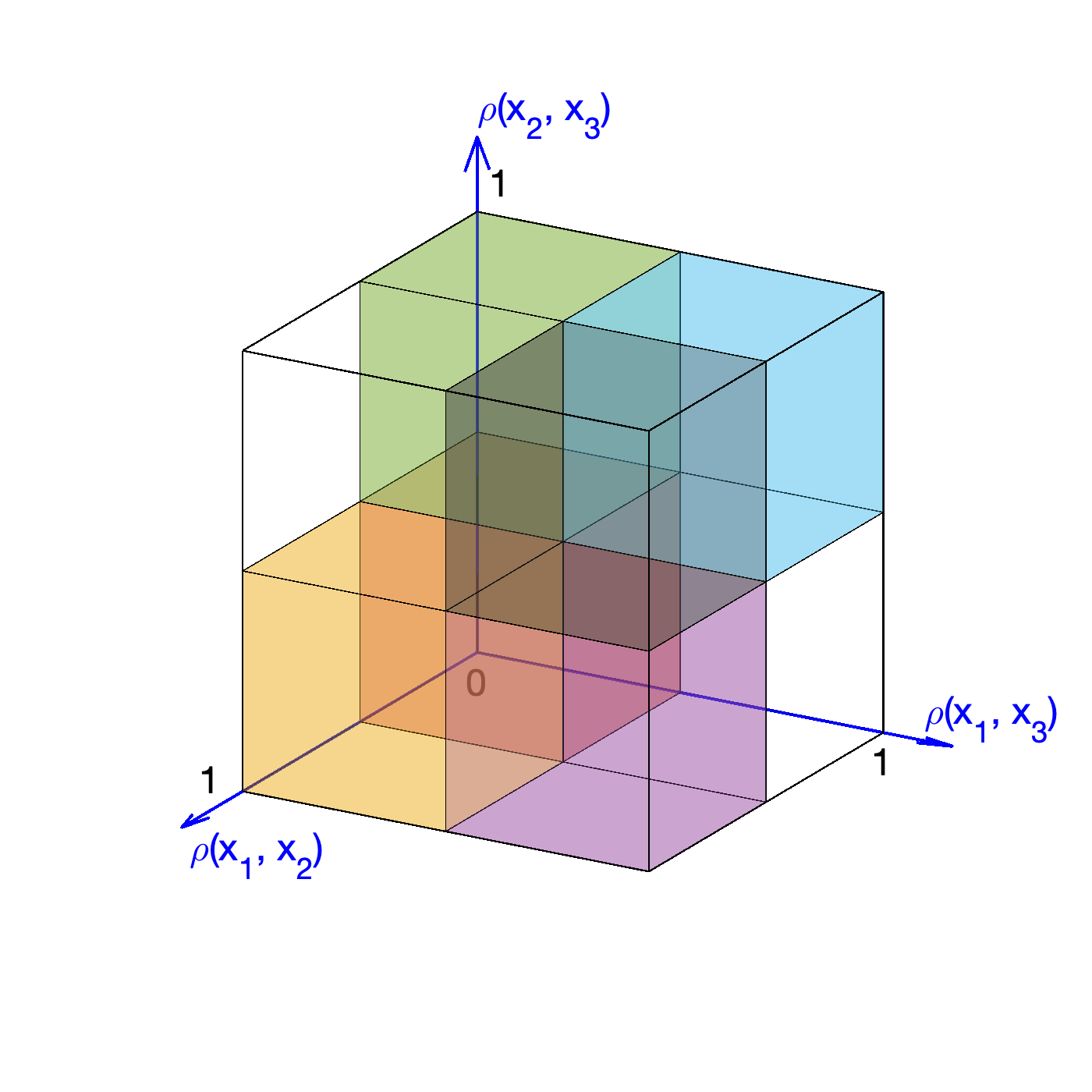}
        \caption{Rationalizable set for Individual WU}
        \label{fig:wu}
    \end{subfigure}
    \begin{subfigure}{0.45\textwidth}
        \centering
        \includegraphics[width=\linewidth,trim={5cm 9cm 1cm 4cm},clip]{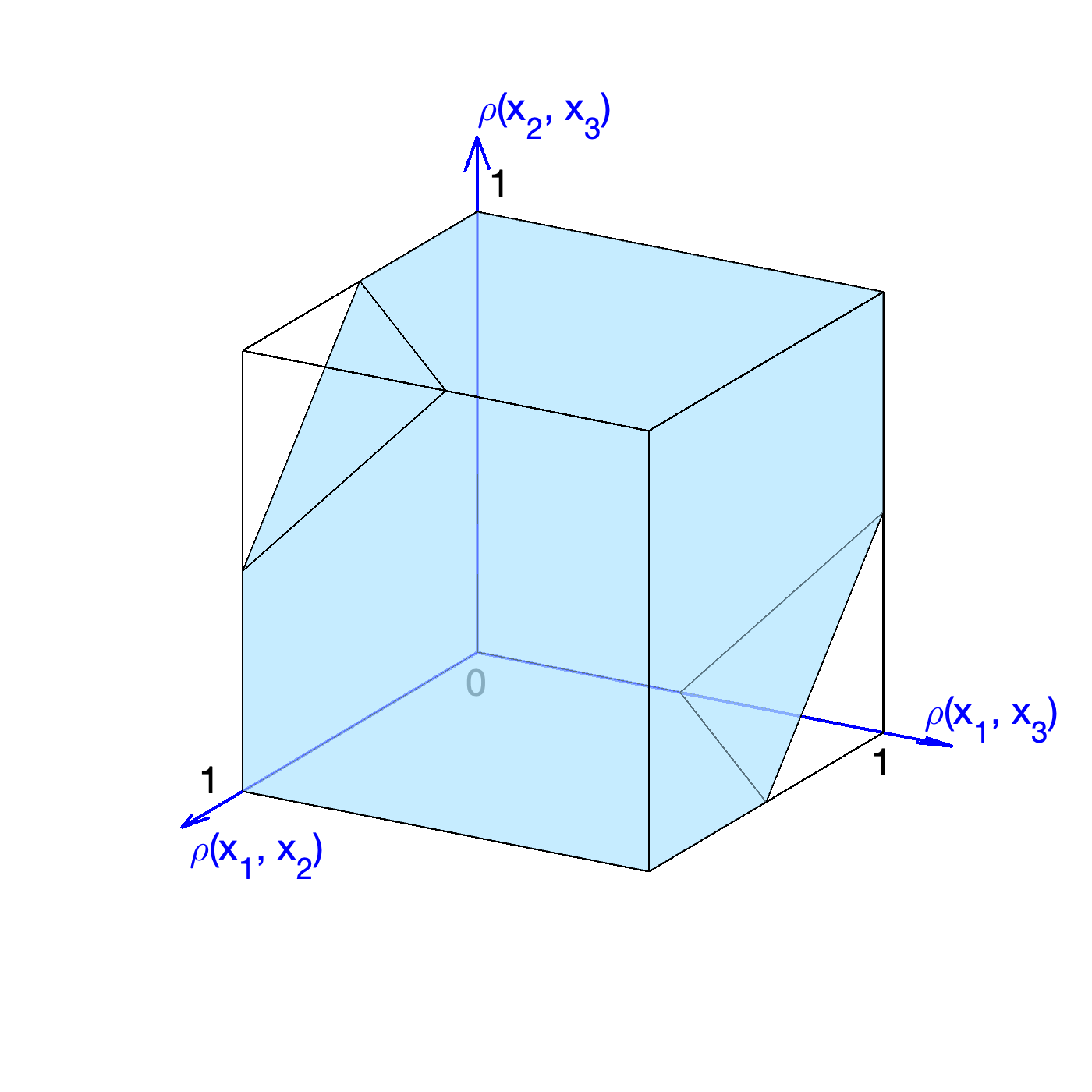}
        \caption{Rationalizable set for Collective WU}
        \label{fig:collective-wu}
    \end{subfigure}

    \begin{subfigure}{0.43\textwidth}
        \centering
        \includegraphics[width=\linewidth,trim={5cm 9cm 1cm 4cm},clip]{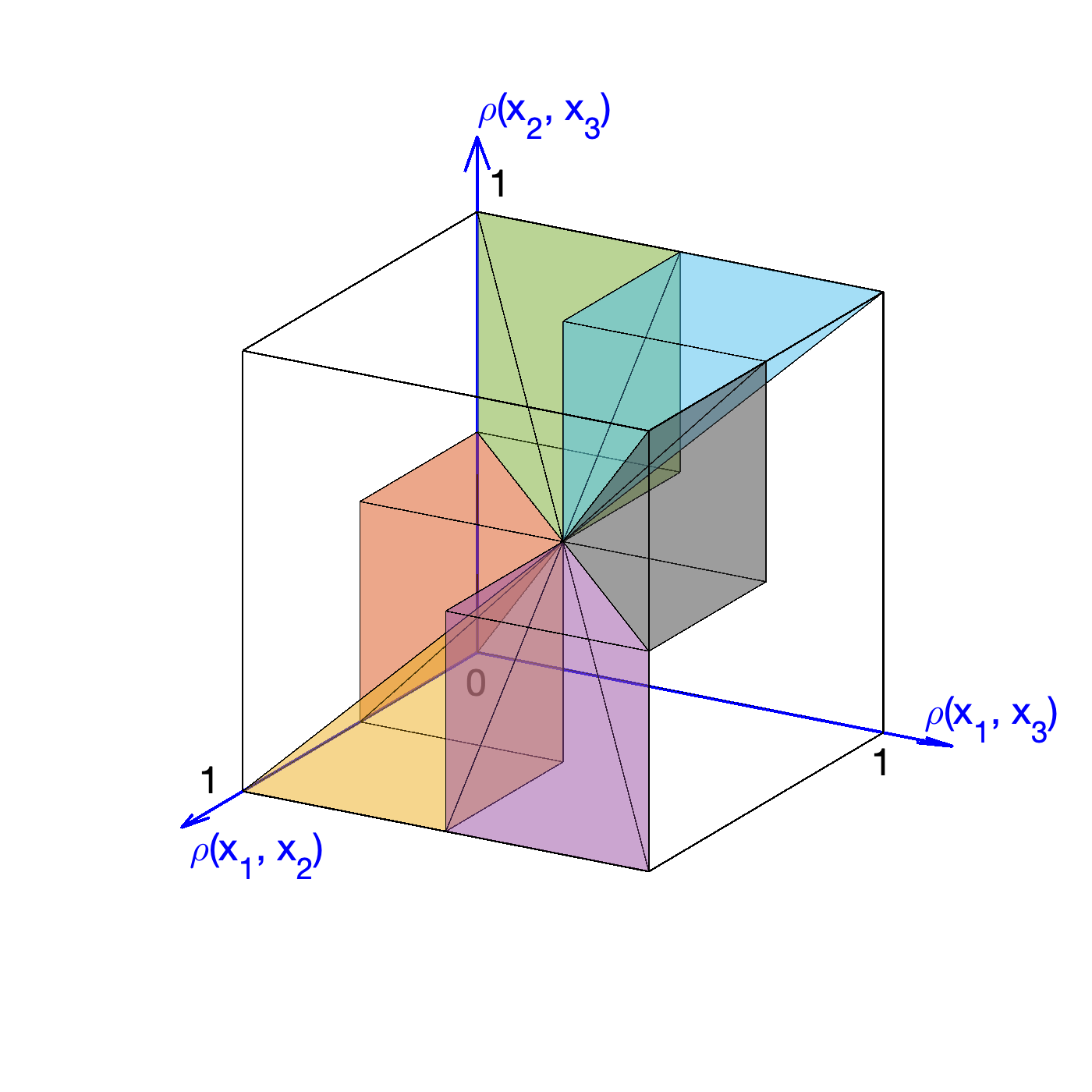}
        \caption{Rationalizable set for Individual SS}
        \label{fig:ss}
    \end{subfigure}
    \begin{subfigure}{0.43\textwidth}
        \centering
        \includegraphics[width=\linewidth,trim={5cm 9cm 1cm 4cm},clip]{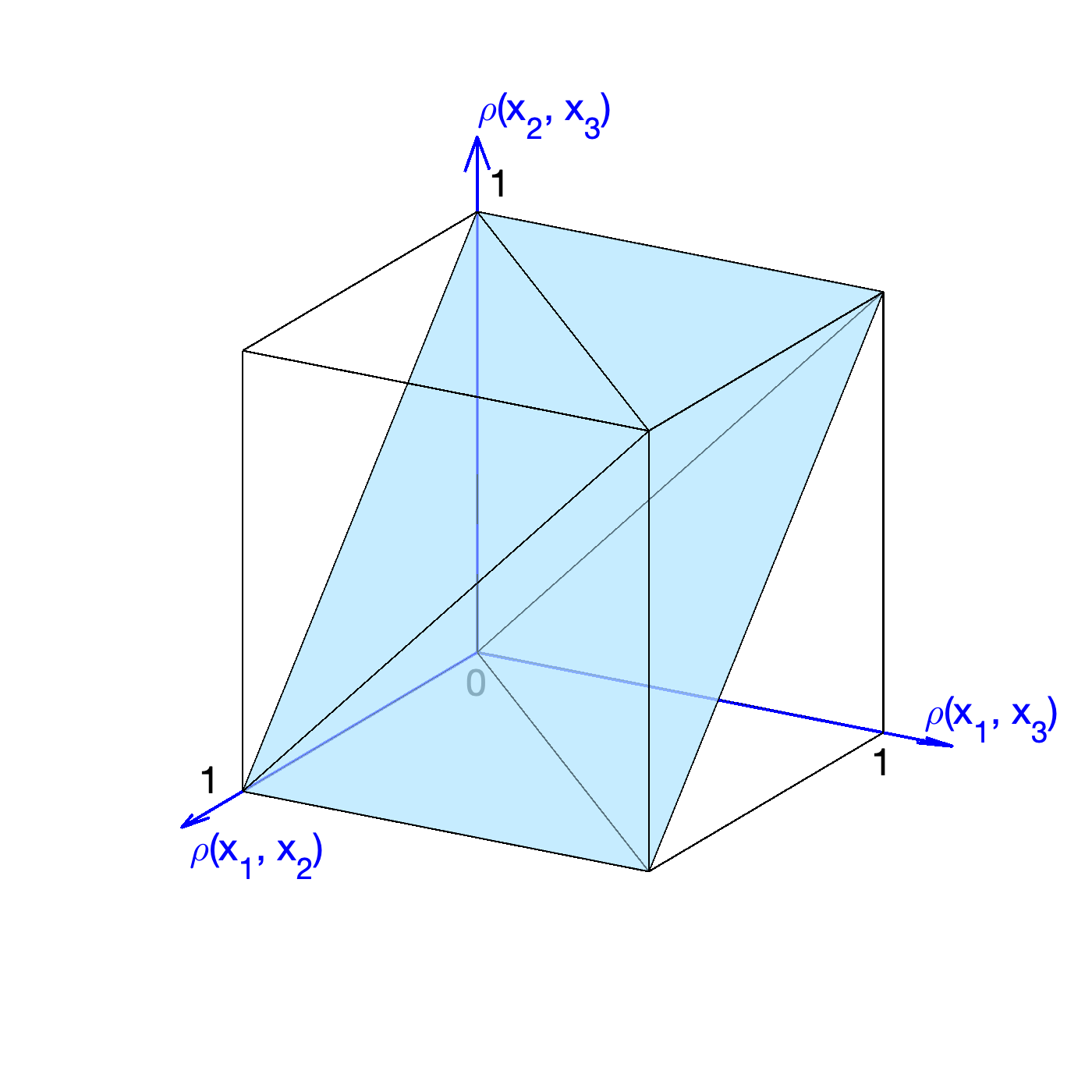}
        \caption{Rationalizable set for Collective SS}
        \label{fig:collective-ss}
    \end{subfigure}

    \begin{subfigure}{0.43\textwidth}
        \centering
        \includegraphics[width=\linewidth,trim={5cm 9cm 1cm 4cm},clip]{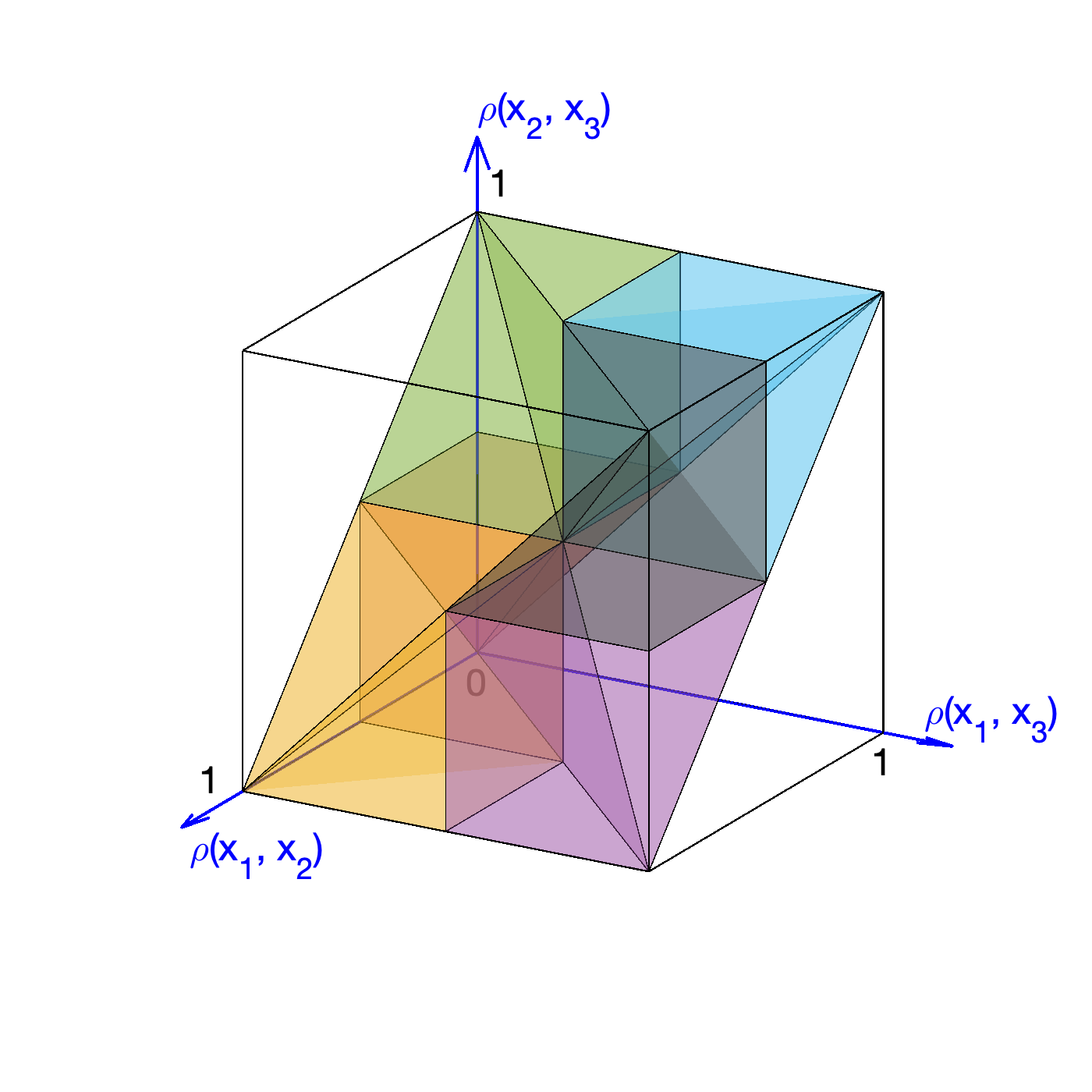}
        \caption{Rationalizable set for Individual MU}
        \label{fig:mu}
    \end{subfigure}
    \begin{subfigure}{0.43\textwidth}
        \centering
        \includegraphics[width=\linewidth,trim={5cm 9cm 1cm 4cm},clip]{figs/3d_viz_figs/collective_ssm.png}
        \caption{Rationalizable set for Collective MU}
        \label{fig:collective-mu}
    \end{subfigure}

    \caption{Rationalizable sets for Individual and Collective WU/MU/SS models for $n=3$. In panels (a), (c), (e), different colors represent the six regions corresponding to different preferences. MU and SS impose increasingly restrictive transitivity constraints within each region. Panels (b), (d), (f) show the corresponding collective rationalizability sets, which are the convex hulls of the individual sets. Taking the convex hull connects the previously regions into a single polytope, expanding the set of rationalizable choice patterns while allowing for heterogeneity.}
    \label{fig:3d-set-viz}
\end{figure}

Collective Rationalizability constraints capture that rational heterogeneous populations create a balance: while individuals are transitive, aggregation may produce some ``cycling'' or inconsistency, but not too much. The feasible intervals $[1,2]$ or $[\tfrac{1}{2},\tfrac{5}{2}]$ represent the natural amount of intransitivity that emerges from mixing different types, distinguishing patterns explainable by heterogeneity from those requiring additional behavioral assumptions.

Comparing Figures \ref{fig:wu} and \ref{fig:collective-ss} with Figure \ref{fig:ss} reveals two key insights. First, there are significant overlaps where violations of Simple Scalability can be explained by either heterogeneity (Collective Simple Scalability) or comparison difficulty (Weak Utility) alone. However, Collective Simple Scalability and Weak Utility are not nested. Some choice patterns can only be explained by heterogeneity, others can only be explained comparison difficulty, and still others require both to be rationalized. Interestingly under $n=3$, the volume of the Weak Utility rationalizable set is 0.75, while that of Collective Simple Scalability is 0.67, indicating that the heterogeneity model is actually less permissive than the comparison difficulty model.

Comparing Figures \ref{fig:collective-ss} and \ref{fig:collective-mu} shows that Collective Simple Scalability coincides with Collective Moderate Utility when $n=3$, as moderate stochastic transitivity provides no additional degrees of freedom beyond strong transitivity for three alternatives when heterogeneity is allowed.

Readers familiar with the literature will recognize that inequalities \eqref{eqn:facet-defining-triplet}  are known as the ``triangle inequalities'' in stochastic choice, which are essentially Block-Marschak polynomials for binary choices under $n=3$ and characterize the random utility models
\citep{block1959random}. This observation reveals an intriguing connection between our collective representation and random utility models. I explore the nesting relationships between different collective representations and also random utility representation in detail in the following section.

\subsection{Nesting Relationships and Connections to Random Utility}
\label{sec: to-rum}
I now compare the testable implications of random choice models, Simple Scalability, Moderate Utility, and Weak Utility, with or without heterogeneity, and relate them to those of random utility models. To formalize this comparison, let $Z$ denote the finite set of alternatives and let $\Pi$ denote the set of all strict preferences on $Z$, where each $\pi\in \Pi$ is a bijection $\pi:Z\to \{1,\dots,|Z|\}$, which is a specific utility representation of the preference.

\begin{definition}[\textbf{Random Utility}]\label{def:ru}
\
    A \textit{Random Utility (RU) representation} has the form
    \begin{equation}
        \rho(x,y) = \sum_{\pi\in\Pi}\nu(\pi)\mathbbm{1}_{\{\pi(x)\ge \pi(y)\}},\quad \forall x,y\in Z,
    \end{equation}
    where $\nu:\Pi\to\R^+$ is some probability distribution on $\Pi$ with $\sum_{\pi\in\Pi}\nu(\pi)=1.$
\end{definition}

It is well known that RU and random choice representations are not nested with each other (see discussion in, for example, \citet{rieskamp2006extending, strzalecki2025stochastic}). In fact, weak stochastic transitivity can be violated by RU,\footnote{Again, recall the example of Condorcet paradox.} while on the other side, one can construct choice vector that is rationalized by Simple Scalability but not RU, since Simple Scalability does not require the choice rates to be represented by some weighted average of strict orders.

However, once heterogeneous taste is taken into account, we can see that RU is nested within Collective Simple Scalability, the most restrictive of all three collective models.
From Definition \ref{def:ru}, a choice rule $\bm\rho$ is representable by RU if it corresponds to a probability measure $\nu$ over the strict preference orderings on $Z$, i.e., $\bm\rho$ is representable by RU iff $\bm\rho\in\conv\left(R_\mathrm{det}\right)$, where $R_\mathrm{det}$ is the set of all rational deterministic choices from Definition \ref{def:det_choice}. Since all rational deterministic choices satisfy Collective Simple Scalability, it follows that RU is nested within Collective Simple Scalability.

A more thorough analysis reveals the following relationship between the collective models and RU:

\begin{lemma}[\textbf{Nesting Relations}]
\label{lem:model-equivalence-small-n}
\
\begin{enumerate}[(i).]
    \item RU is nested within Collective SS, with equivalence for $n\le 6$ and strict nesting for $n\ge 7$.
    
    \item Collective SS is nested within Collective MU, with equivalence for $n\le 5$ and strict nesting for $n\ge 6$.
    \item Collective MU is strictly nested within Collective WU for all $n \ge 3$.
\end{enumerate}
\end{lemma}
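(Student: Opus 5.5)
The plan is to work entirely with the vertex description from \autoref{thm:collective}. Collective SS, MU and WU are $\conv(P_\mathrm{SS})$, $\conv(P_\mathrm{MU})$ and $\conv(P_\mathrm{WU})$. RU is $\conv(R_\mathrm{det})$, the linear ordering polytope.

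\textbf{Weak nestings.} These follow from $R_\mathrm{det}\subseteq P_\mathrm{SS}\subseteq P_\mathrm{MU}\subseteq P_\mathrm{WU}$. The first inclusion holds because deterministic linear orders satisfy strong stochastic transitivity. The others hold because, by Lemma \ref{lem:extreme-point}, each set consists of the $\{0,\tfrac12,1\}$-vectors satisfying strong, moderate or weak stochastic transitivity respectively, and strong implies moderate implies weak. Hence every equivalence claim reduces to a finite membership check: every vertex of the larger set must lie in the convex hull of the smaller one. Every strictness claim reduces to exhibiting one vertex outside that hull, certified by a separating inequality.

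\textbf{Monotonicity in $n$.} I would first prove that a strict separation at $n$ persists at $n+1$. Given $\bm\rho$ on $n$ options, add $x_{n+1}$ that is deterministically worst, setting $\rho(x_i,x_{n+1})=1$. The lifted vector is still a vertex of the larger individual set, since appending a dominated option preserves each form of stochastic transitivity.

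Conversely, if the lifted vector were a convex combination of $(n+1)$-option vertices, projecting onto the original $m$ coordinates gives a convex combination of their restrictions. Restrictions of strong, moderate or weak transitive vectors to a subset of options retain the same property. This would contradict separation at $n$. So it suffices to find counterexamples at the minimal $n$, namely $n=7$ for (i), $n=6$ for (ii), and $n=3$ for (iii).

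\textbf{Part (iii).} The $n=3$ case is explicit. The vector $(\rho(x_1,x_2),\rho(x_1,x_3),\rho(x_2,x_3))=(1,\tfrac12,1)$ lies in $P_\mathrm{WU}$. It has $\rho(x_1,x_2)+\rho(x_2,x_3)+\rho(x_3,x_1)=\tfrac52>2$, so it violates \eqref{eqn:facet-defining-triplet}, which characterizes Collective MU for $n=3$.

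\textbf{Equivalences.} For the equivalences ($n\le 6$ in (i), $n\le5$ in (ii)) I would run a finite computational verification. By symmetry under relabeling, it is enough to treat vertices lying in the cube of the identity order. Such a vertex is determined by which entries equal $\tfrac12$ rather than $1$, so there are at most $2^m$ candidates (here $m=\binom n2$). For each candidate satisfying strong (resp.\ moderate) transitivity, I would test membership in the target polytope. For RU, I would use the complete facet lists of the linear ordering polytope known for $n\le 6$: trivial bounds, 3-cycle inequalities, and the fence and Möbius-ladder families. Alternatively, for (ii) I would solve the LP from \autoref{thm:collective}(iii).

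\textbf{Counterexamples.} For $n=6$ in (ii), the same enumeration produces an MU vertex whose membership LP is infeasible. Its Farkas dual supplies a valid inequality for $\conv(P_\mathrm{SS})$ that the vertex violates. For $n=7$ in (i), I would search for a half-integral strong-transitive vector violating one of the known $n=7$ facets of the linear ordering polytope, starting with Möbius-ladder inequalities. The natural candidate is a vector with $\tfrac12$ entries placed along the ladder's edges.

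\textbf{Main obstacle.} I expect the $n=7$ case of (i) to be the hardest step. There exhaustive enumeration is heavy ($2^{21}$ candidates per cube), and one needs an explicit, checkable certificate rather than a brute-force claim. I would try first to guess the vector from the Möbius-ladder structure and verify the strict violation by hand.
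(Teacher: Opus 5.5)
Your plan is correct and uses the same mechanism as the paper. Via Theorem~\ref{thm:collective}, every claim becomes a comparison of convex hulls of the finite vertex sets of Lemma~\ref{lem:extreme-point} (RU being $\conv(R_\mathrm{det})$), settled by vertex-by-vertex membership checks. The paper simply asserts the outcome of that computation and adds one example at $n=7$.

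What you add is the lifting/restriction argument: append a uniformly dominated option, and project a mixture back onto the original coordinates. This is what actually justifies ``for all $n\ge7$'', ``$n\ge6$'' and ``$n\ge3$'', which no finite computation can deliver. By contraposition it also means the equivalences need only be checked at $n=6$ for (i) and $n=5$ for (ii).

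Two practical remarks on the computation. The enumeration is far smaller than $2^{21}$: the SS vertices in the identity cube are the sets of entries equal to $1$ that are closed under enlarging the interval $[i,j]$, and there are $429$ of them (a Catalan number) at $n=7$. For RU, a direct LP over the $n!$ orders avoids relying on a facet classification.

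Also define $P_\mathrm{SS},P_\mathrm{MU},P_\mathrm{WU}$ as the $\{0,\tfrac12,1\}$-points of the closed order-indexed polytopes, as your identity-cube enumeration implicitly does, not via Definition~\ref{def:transitivity}. With ties at $\tfrac12$ the relation $\succcurlyeq$ need not be transitive. Your own witness $(1,\tfrac12,1)$ has $x_3\succcurlyeq x_1\succcurlyeq x_2$ but $\rho(x_3,x_2)=0$, so it fails the literal condition. Your lifting and restriction steps are unaffected, since they only extend or restrict the underlying linear order.

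The one piece the paper supplies that you leave to a search is the $n=7$ witness: $\rho(x_i,x_j)=1$ for $j-i\ge3$ and $\rho(x_i,x_j)=\tfrac12$ for $1\le j-i\le2$. This is the $\varepsilon\to0$ limit of the paper's example. It lies in $P_\mathrm{SS}$: for $i<j<k$, either $k-i\ge3$ and $\rho(x_i,x_k)=1$, or $(i,j,k)=(i,i+1,i+2)$ and all three entries are $\tfrac12$.

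The paper argues non-membership in RU only informally, but a short certificate exists. Every order in the support of an RU mixture ranks $x_i$ above $x_j$ whenever $j-i\ge3$. Let $a_i,b_i\in\{0,1\}$ indicate reversal of the pairs $(i,i+1)$ and $(i,i+2)$. Transitivity on the triples $(i,i+1,i+3)$, $(i,i+2,i+3)$ and $(i,i+2,i+4)$ gives
\[
a_i+b_{i+1}\le1,\qquad b_i+a_{i+2}\le1,\qquad b_i+b_{i+2}\le1 .
\]
Every $a_i,b_i$ has mean $\tfrac12$, so each left-hand side has mean $1$ while being at most $1$. Hence each equals $1$ in every supported order. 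Chaining these equalities:
\begin{itemize}
\item $a_4=1-b_2=a_1$;
\item $b_1=1-b_3=a_2$, so $a_3=1-b_1=1-a_2$;
\item $b_4=1-a_3=a_2$ and also $b_4=1-b_2=a_1$, so $a_1=a_2$;
\item $b_5=1-a_4=1-a_1$ and also $b_5=1-b_3=a_2$, so $a_2=1-a_1$.
\end{itemize}
Thus $a_1=a_2=1-a_1$, which is impossible for binary values. Closedness of $\conv(R_\mathrm{det})$ handles small $\varepsilon>0$, and your lifting lemma extends the separation to all $n\ge7$.
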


Lemma \ref{lem:model-equivalence-small-n} follows from comparing the extreme points computed in Lemma \ref{lem:extreme-point}. The nesting relationships can be verified by checking whether each model's extreme points can be expressed as convex combinations of the alternative model's extreme points.

\textbf{Why equivalence for small $n$?} When the number of alternatives is small, the geometry of choice probabilities is relatively simple, leading to identical constraints across models. As shown in the example for $n=3$, Collective Simple Scalability/Moderate Utility and RU share the same facet-defining constraints: the triangle inequality \eqref{eqn:facet-defining-triplet} and the probability boundaries $[0,1]$. This pattern persists for $n=4$ and $n=5$, where numerical computation reveals that all facet-defining inequalities are ``triangle-type'', i.e., versions of \eqref{eqn:facet-defining-triplet} applied to different triples of alternatives.\footnote{Computed using the Quickhull algorithm \citep{barber1996quickhull} via \texttt{scipy.spatial.ConvexHull} in Python.} While these triangle inequalities remain necessary for all $n$, the models diverge for larger $n$ as additional structural constraints emerge.

\textbf{The complexity of larger $n$.} For RU, its rationalizable set is a well-studied combinatorial object often referred to as the linear ordering polytope \citep{grotschel1985facets, katthan2011linear}. Beyond triplet inequalities \eqref{eqn:facet-defining-triplet}, characterization for RU requires increasingly complex constraints, such as $k$-fence, Mobius ladder, and others \citep{grotschel1985facets}, where the number of facet-defining inequalities of the linear ordering polytope grows exponentially with $n$. The complete characterization remains an open problem even for RU. Similarly, while I have established the characterizations of collective rationalizability in \autoref{thm:collective}, their exact facet-defining inequalities remain to be fully provided (see Lemma \ref{lem: extra-collective} in the Appendix for partial results).

\textbf{Economic interpretation.} The equivalence between RU and Collective Simple Scalability for small $n$ provides the following insights in the binary choice problem: behavior interpreted as arising from heterogeneous deterministic agents (RU) is observationally equivalent to behavior from heterogeneous agents who themselves face uncertainty of the alternative's utility scales or the noisy readings of them (Collective Simple Scalability). 
Furthermore, the equivalence between Collective Simple Scalability and Collective Moderate Utility for $n\le 5$ implies a practical limitation: with few alternatives, we lack the power to determine whether a population's stochastic behavior arises from distance-based comparison difficulty or not.

\begin{example}[RU strictly nested within Collective SS when $n=7$]
\

Consider $n=7$ and choice rates $\rho$ defined as 
    $$
        \rho(x_i, x_j) = \left\{\begin{matrix}
            1-\varepsilon & \text{ if } j-i \ge 3 \quad\ \ \ \  \\
            \frac{1}{2}+\varepsilon & \text{ if } 1\le j-i \le 2
        \end{matrix} \right.,
    $$
    with $\varepsilon$ small enough. It is straightforward to verify that this $\bm\rho$ satisfies (individual) Simple Scalability and thus also collective SS. However, $\bm\rho$ cannot be rationalized by RU.
    To generate near-equal choice probabilities for nearby pairs, the population must be nearly evenly split between opposite rankings of these pairs. But the majority of the population needs to agree on the ranking of distant pairs, making it impossible to satisfy transitivity.

\end{example}

\subsection{Identify Ranking of Comparison Difficulty}\label{sec: cplx-ranking}
Before turning to statistical testing based on collective rationalizability, I present one additional theoretical result that goes beyond testing whether comparison difficulty varies across pairs: if comparison difficulty indeed varies across pairs, can we characterize \textit{how} it varies? Readers primarily interested in collective rationalizability may proceed directly to Section \ref{sec: test}.

In the absence of additional assumptions, comparison difficulty cannot be uniquely identified from binary choice data over finite sets in random choice models.
Regarding rankings of comparison difficulties, \citet{he2024moderate} establishes a sufficient condition under which certain observed choices can uniquely determine the ordinal ranking of difficulty under Moderate Utility and Weak Utility representations: $\rho(x_1,x_2)>\rho(x_1,x_3)>\rho(x_2,x_3)\ge 1/2 \text{\ \ implies that\ \ }c(x_1,x_2)<c(x_1,x_3)$.

In this section, I develop \textit{necessary and sufficient} conditions for rationalizing individual choice data with a WU representation and a \textit{given ranking of comparison difficulty}. Consider the problem: given observed choices $\bm\rho = (\rho(x_1,x_2),\rho(x_1,x_3),\rho(x_2,x_3))$, when can these be rationalized by WU with a specific difficulty ranking such as $c(x_1,x_2)>c(x_1,x_3)>c(x_2,x_3)$? I solve this for arbitrary numbers of alternatives in a \textit{loop}. Then, I generalize this notion from individual rationalizability to collective rationalizability with shared ranking of comparison difficulty across heterogeneous agents.

I focus on binary choices over loops, namely individual choice rates $\{\rho(x_j,x_{j+1})\}_{j=1}^m$ for a set of $m$ choice options $\{x_j\}_{j=1}^m$ (indices modulo $m$). Denote $\rho_j := \rho(x_j,x_{j+1})$ and $c_j := c(x_j,x_{j+1})$ for all $j=1,\dots,m$. 
To formalize the analysis, for each pair $j\in\{1,\dots,m\}$, I introduce:
\begin{align*}
    &\text{Choice rate magnitudes: } \mu_j := \left| \rho_j - \tfrac{1}{2}\right|;\\
    &\text{Sign of the preferences: }J^+(\bm\rho) := \left\{ j \mid \rho_{j} > \tfrac{1}{2}\right\}, \quad J^-(\bm\rho) := \left\{ j \mid \rho_{j} < \tfrac{1}{2}\right\}, \quad J^0(\bm\rho) := \left\{ j \mid \rho_{j} = \tfrac{1}{2}\right\}.
\end{align*}

The loop structure is crucial: utility differences must sum to zero throughout any cycle. This constraint interacts with the given ranking of comparison difficulty, because higher comparison difficulty and choice rate magnitudes imply larger absolute utility differences. The key insight is that a given ranking of $c_j$ is compatible with observed choices if it aligns appropriately with the ranking of $\mu_j$ across the sets with different preference directions $J^+$ and $J^-$.
To formalize this alignment of $c_j$ and $\mu_j$, I define the notions of \textit{ranking function} and \textit{ranking-dominance}. 

\begin{definition}[\textbf{Ranking Functions and Ranking-Dominance}]
\

Let $\#$ denote the cardinality of a finite set.
\begin{enumerate}
    \item A \textit{ranking function} of a sequence of values $\{x_1,\dots,x_m\}$ is a function $\pi: \left\{1,\dots,m\right\} \to \left\{1,\dots,m\right\}$
\begin{equation}
    \pi(j) = \#\{k\in\{1,\dots,m\}: x_j < x_k\} + 1.
\end{equation}
\item  Let $\sigma, \pi$ be two ranking functions  over $\{1,\dots,m\}$. Given two index subsets $A, B \subseteq \{1,\dots,m\}$, $A$ \textit{ranking-dominates} $B$ with respect to $(\sigma,\pi)$ if
    \begin{equation}
        \#\{j\in A\mid\sigma(j)\le s,\pi(j)\le t\}\  \ge\  \#\{j\in B\mid\sigma(j)\le s,\pi(j)\le t\}, \quad \forall s,t\in \{1,\dots,m\},
        \label{eqn:ranking-domination-def}
    \end{equation}
    with strict inequality for some $(s,t)$.
\end{enumerate} 
\end{definition}

Intuitively, an index $j$ with ranking function value $\pi(j)$ means that $x_j$ ranks the $\pi(j)$-th largest among values in the sequence. And ranking-dominance means that, cumulatively, the elements in $A$ are consistently concentrated at higher ranks than those in $B$ under both rankings, shifting their distribution upward and leftward relative to $B$. With these two definitions, the characterization in Proposition \ref{prop:loop-rank} shows that ranking-dominance provides the key obstruction for rationalizability under a given ranking of comparison difficulty.

\begin{proposition}[\textbf{Weak Utility with a Given Ranking of Comparison Difficulty}]
\label{prop:loop-rank}
\

Consider individual choice rates over a loop of binary pairs $\bm\rho=\{\rho_j\}_{j=1}^m\in[0,1]^m$. Let $\sigma$ be the ranking function of the choice rate magnitudes $\{\mu_j\}_{j=1}^m$, and let $\pi$ be a \textit{given} ranking function of comparison difficulty $\{c_j\}_{j=1}^m$.
Then, $\bm\rho$ can be represented by WU with the given ranking $\pi$ \textit{if and only if} neither $J^+(\bm\rho)$ nor $J^-(\bm\rho)$ ranking-dominates each other with respect to $(\sigma,\pi)$, with necessity additionally assuming that no choice rates $\rho_j$ equal $\frac{1}{2}$.
\end{proposition}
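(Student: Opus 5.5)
The plan is to reduce the statement to a sign condition on a weighted sum, and then read that sign condition off through a layer-cake (Abel summation) decomposition over the two rankings.

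\textbf{Step 1: Reformulation.} Under a WU representation on the loop, set $d_j := u(x_j)-u(x_{j+1})$. These must satisfy $\sum_{j=1}^m d_j = 0$. Since $F$ is strictly increasing and symmetric around $\tfrac12$, we get $\operatorname{sgn}(d_j)=\operatorname{sgn}(\rho_j-\tfrac12)$ and $|d_j| = c_j\, g(\mu_j)$, where $g(\mu):=F^{-1}(\tfrac12+\mu)$ is strictly increasing with $g(0)=0$.

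Conversely, suppose we are given positive numbers $c_j$ with ranking function $\pi$ and a strictly increasing $g\ge 0$ on the finitely many observed values $\{\mu_j\}$ with $g(0)=0$. Then we can:
\begin{itemize}
\item interpolate $g$ to a strictly increasing odd function and define $F$ from it;
\item define $u$ by cumulative sums of $d_j$, which is consistent precisely when $\sum_j d_j=0$;
\item assign the non-loop pairs any positive $c$, since the semimetric imposes no triangle inequality.
\end{itemize}
Hence $\bm\rho$ is WU-rationalizable with ranking $\pi$ if and only if there exist such $c$ and $g$ with
\[
\Phi(c,g) := \sum_{j\in J^+} c_j g(\mu_j) - \sum_{j\in J^-} c_j g(\mu_j) = 0 .
\]
Indices in $J^0$ contribute $0$ automatically.

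\textbf{Step 2: Layer-cake representation.} Any $c$ with ranking function $\pi$ can be written as $c_j=\sum_{t}\alpha_t \mathbbm 1\{\pi(j)\le t\}$, with $\alpha_t\ge0$. Here $\alpha_t>0$ exactly at the "attained" rank levels $t$, namely those $t$ at which the sorted values of $c$ strictly drop. Every such nonnegative $\alpha$ with positivity on attained levels produces an admissible $c$.

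The same holds for $g(\mu_j)=\sum_s \beta_s\mathbbm 1\{\sigma(j)\le s\}$, where ties in $\mu$ automatically receive equal $g$.

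Substituting both expansions gives
\[
\Phi=\sum_{s,t}\alpha_t\beta_s D(s,t), \qquad D(s,t):=N_{J^+}(s,t)-N_{J^-}(s,t),
\]
where $N_A(s,t)=\#\{j\in A:\sigma(j)\le s,\ \pi(j)\le t\}$. Note that $D(s,t)$ changes only at attained levels, so it suffices to examine $D$ on attained pairs $(s,t)$.

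\textbf{Step 3: Necessity.} Assume no $\rho_j=\tfrac12$, so $J^0=\emptyset$ and all attained levels carry positive weight.

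Suppose $J^+$ ranking-dominates $J^-$. Then $D\ge0$ everywhere, with $D>0$ at some pair, and hence (by the remark at the end of Step 2) at some attained pair, where $\alpha_t\beta_s>0$. Therefore $\Phi>0$ for every admissible $(c,g)$, and no representation exists. The case where $J^-$ dominates $J^+$ is symmetric.

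The delicate point here is transferring the strict inequality to a level with positive weight. This is exactly why the hypothesis $\rho_j\ne\tfrac12$ is needed: if some $\mu_j=0$, then $g(\mu_j)=0$ kills the corresponding levels.

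\textbf{Step 4: Sufficiency.} Suppose neither set dominates. If $D\equiv 0$ on attained pairs, any admissible $(c,g)$ gives $\Phi=0$. Otherwise $D$ takes a positive value at some attained pair $(s_1,t_1)$ and a negative value at another attained pair $(s_2,t_2)$.

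Take $\alpha_{t_1}=\beta_{s_1}=M$ and all other attained weights equal to $\varepsilon$. For $M$ large relative to $\varepsilon$, the term $M^2D(s_1,t_1)$ dominates and $\Phi>0$. The analogous choice at $(s_2,t_2)$ gives $\Phi<0$.

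The set of admissible positive weight vectors is convex, hence connected, and $\Phi$ is continuous on it. By the intermediate value theorem there is an admissible $(\alpha,\beta)$ with $\Phi=0$, and Step 1 then yields the WU representation. This direction does not need the assumption $J^0=\emptyset$.

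\textbf{Main obstacle.} I expect the main care to be the bookkeeping around ties and attained levels. One has to ensure that the rank-one weights $\alpha_t\beta_s$ are strictly positive exactly where the counts $D(s,t)$ can change, so that strict dominance translates into strict positivity of $\Phi$. The sign-change argument in Step 4 is comparatively routine.
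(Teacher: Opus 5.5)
Your proposal is correct and is essentially the paper's argument. Your layer-cake identity $\Phi=\sum_{s,t}\alpha_t\beta_s D(s,t)$ is exactly the paper's double summation by parts, with your $\alpha_t,\beta_s$ being its gaps $x_k-x_{k+1}$ and $y_l-y_{l+1}$. Your sufficiency step (weight $M$ on a pair where $D$ has the needed sign, then the intermediate value theorem) is the paper's $\delta(M)$ construction, which raises all $c_j$ with $\pi(j)\le t$ and all $G(\mu_j)$ with $\sigma(j)\le s$ by $M$.

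One small correction to your Step 3 remark, which is inaccurate but does not affect the proof. The level killed by $g(0)=0$ differs from the preceding level only by indices in $J^0$, and these do not enter $D$. So strict dominance always shows up at a positive-weight level. The hypothesis $\rho_j\neq\tfrac12$ is therefore a convenience (the paper uses it the same way, via $y_L>0$), not something the necessity argument genuinely needs.
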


The intuition is that over a loop, if $J^+$ ranking-dominates $J^-$, then the positive preferences systematically have larger both choice magnitudes and comparison difficulties than negative preferences, and the required zero-sum condition on utility differences cannot be satisfied.

Given the key role of the ranking-dominance property, here we provide further intuition behind this concept, alongside Proposition \ref{prop:loop-rank}. Consider a setup of $m=5$ choice options forming a loop, and the following two different cases of ranking of comparison difficulty. We will discuss whether WU representation with the given ranking is possible by examining the conditions for ranking-domination.

\textbf{\textit{Example a.}} Consider the following ranking of the choice magnitudes $\mu_j$, and a given ranking of comparison difficulty $c_j$, where each $j$ is the index for a binary pair:
\begin{align*}
    \mu_1>\mu_2>\mu_3>\mu_4>\mu_5,\qquad c_1>c_3>c_5>c_2>c_4.
\end{align*}
Thus the ranking functions $\sigma$ and $\pi$ are respectively
\begin{align*}
    \sigma(1) &= 1, \ \sigma(2) = 2, \ \sigma(3) = 3, \ \sigma(4) = 4, \ \sigma(5) = 5, \\
    \pi(1) &= 1, \ \pi(2) = 4, \ \pi(3) = 2, \ \pi(4) = 5, \ \pi(5) = 3.
\end{align*}
Also assume the index subsets of the signs of the preferences are
\begin{align*}
    J^+(\bm\rho) = \left\{ j \mid \rho_{j} > \tfrac{1}{2}\right\} = \left\{1,4,5\right\},\ J^-(\bm\rho) =\left\{ j \mid \rho_{j} < \tfrac{1}{2}\right\}= \{2,3\}.
\end{align*}

\textbf{\textit{Example b.}} Consider the following ranking of $\mu_j$, and a given ranking of $c_j$ 
\begin{align*}
    \mu_1>\mu_2>\mu_3>\mu_4>\mu_5,\quad c_1>c_3>c_2>c_5>c_4.
\end{align*}
Thus the ranking functions $\sigma$ and $\pi$ are respectively
\begin{align*}
    \sigma(1) &= 1, \ \sigma(2) = 2, \ \sigma(3) = 3, \ \sigma(4) = 4, \ \sigma(5) = 5, \\
    \pi(1) &= 1, \ \pi(2) = 3, \ \pi(3) = 2, \ \pi(4) = 5, \ \pi(5) = 4.
\end{align*}
Also assume the index subsets are
\begin{align*}
    J^+(\bm\rho) = \left\{ j \mid \rho_{j} > \tfrac{1}{2}\right\} = \left\{1,3,4\right\},\ J^-(\bm\rho) =\left\{ j \mid \rho_{j} < \tfrac{1}{2}\right\}= \{2,5\}.
\end{align*}

For the two examples described above, We illustrate the ranking functions of \textit{Example a} in Figure \ref{fig: rank-dom-neg}, and \textit{Example b} in Figure \ref{fig: rank-dom-pos}. Each colored cell corresponds to one index $j$ in $J^+$ or $J^-$, with its coordinates on the grid being $(\sigma(j),\pi(j))$. Notice that the term $\#\{j\in J^+\mid\sigma(j)\le s,\pi(j)\le t\}$ can be interpreted as the number of blue (or red for $J^-$) cells on the top-left $s \times t$ submatrix. The inequality condition for ranking-dominance \eqref{eqn:ranking-domination-def} thus implies that for every top-left submatrix, there are at least as many blue cells as there are red cells.

\begin{figure}[H]
\centering
\begin{minipage}{0.49\textwidth}
\centering
\includegraphics[width=0.9\linewidth]{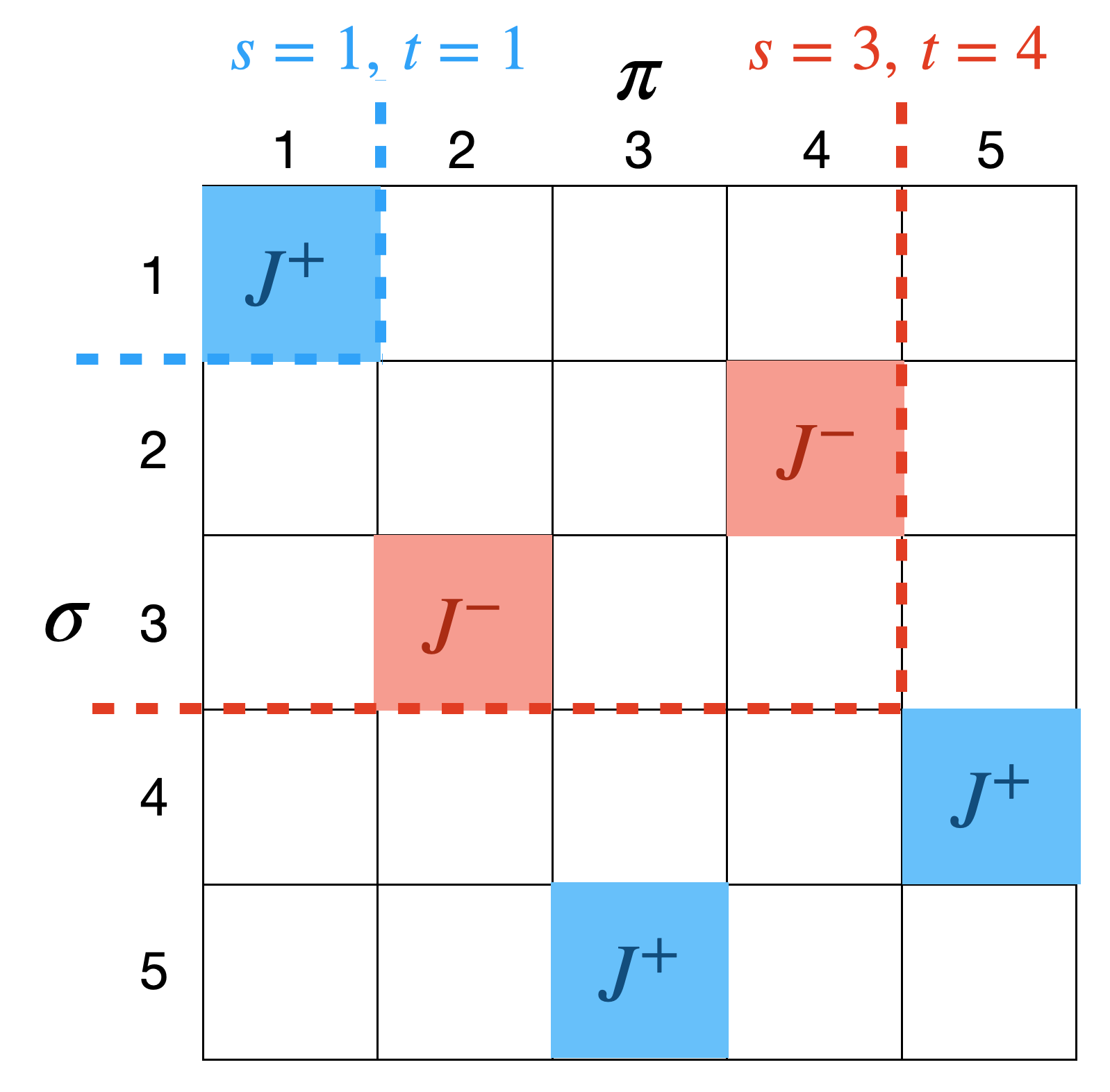}
\subcaption{Neither $J^+$ nor $J^-$ ranking-dominates the other}
\label{fig: rank-dom-neg}
\end{minipage}
\begin{minipage}{0.49\textwidth}
\centering
\includegraphics[width=0.9\linewidth]{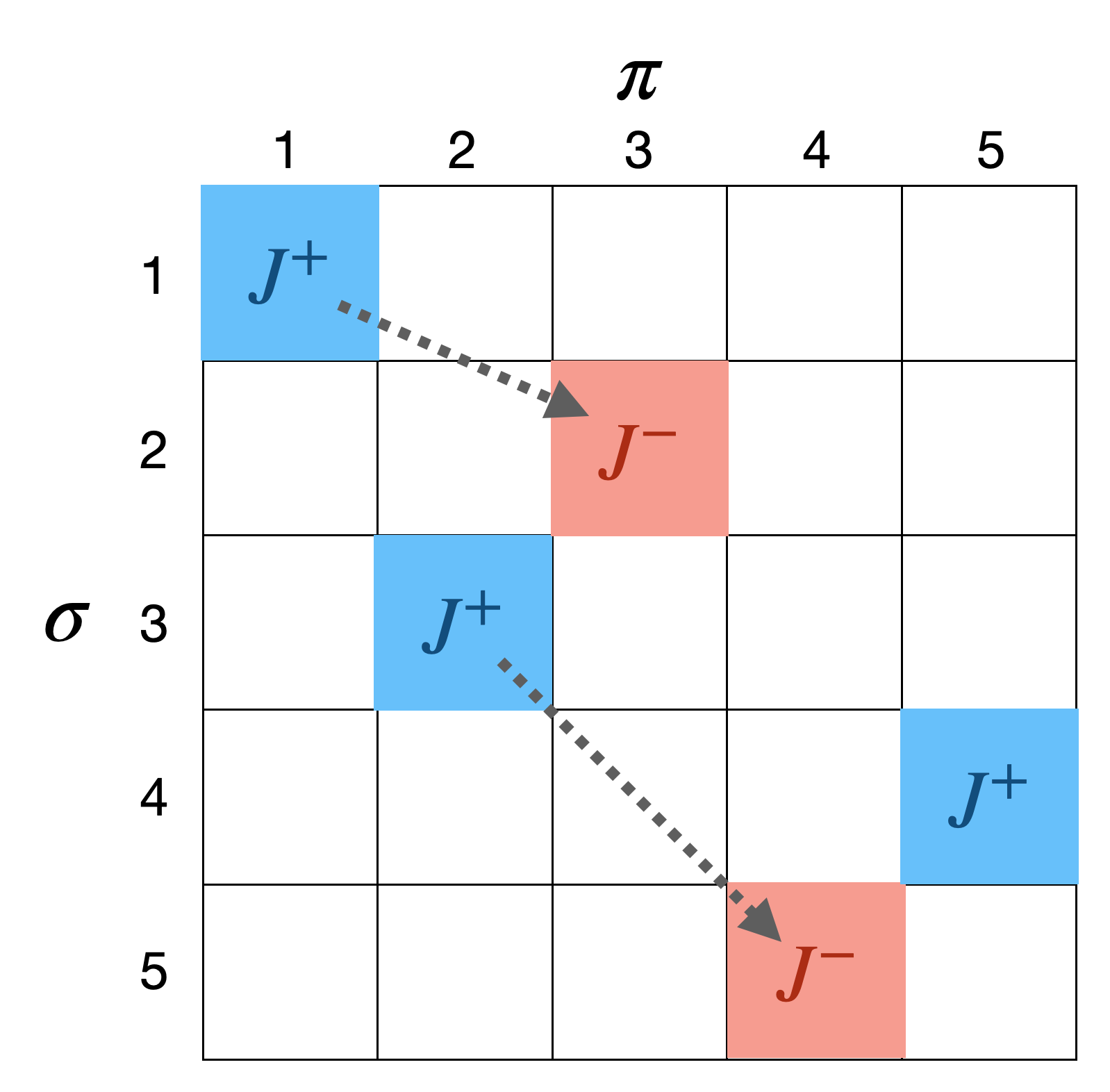}
\subcaption{$J^+$ ranking-dominates $J^-$}
\label{fig: rank-dom-pos}
\end{minipage}
\caption{Illustration of the ranking-domination condition.}
\end{figure}

Figure \ref{fig: rank-dom-neg} shows that in \textit{Example a}, neither $J^+$ nor $J^-$ ranking-dominates the other. Here the blue dashed lines correspond to a top-left $s\times t$ submatrix with $(s,t)=(1,1)$, which contains 1 blue and 0 red cells. Meanwhile, the red dashed lines correspond to a top-left $s\times t$ submatrix with $(s,t)=(3,4)$, which contains 1 blue and 2 red cells. This implies that choosing $(s,t)=(1,1)$ makes the left-hand-side of \eqref{eqn:ranking-domination-def} strictly greater, while $(s,t)=(3,4)$ makes the right-hand-side of \eqref{eqn:ranking-domination-def} strictly greater, which concludes that neither $A$ nor $B$ ranking-dominates the other. From Proposition \ref{prop:loop-rank}, this implies that the choice rates data \textit{can} be represented as WU with this specific ranking of comparison difficulty.

Figure \ref{fig: rank-dom-pos} on the other hand, shows that in \textit{Example b}, $J^+$ ranking-dominates $J^-$. One can verify by enumerating all possible pairs of $(s,t)$, to see that every top-left $s\times t$ submatrix has at least as many blue cells as red, with strict inequality holding in some cases (for example $(s,t)=(1,1)$). Another way to see ranking-dominance is from the fact that every cell in $J^-$ has a unique corresponding cell in $J^+$ on its top-left direction, as indicated by the dashed arrows. From Proposition \ref{prop:loop-rank}, this implies that the choice rates data \textit{cannot} be represented as WU with this specific ranking of comparison difficulty.

\subsubsection{Example: Ranking of Comparison Difficulty for Three Options}\label{sec: cplx-ranking-exp}

I now illustrate the idea in the context of three options $\{x_1,x_2,x_3\}$. Note that while our result above is limited to choice rates on a loop instead of full pairing, this distinction vanishes for the case of three options, affording us the complete analysis.\footnote{For a general number of choice options and full pairing, we can obtain necessary conditions for iterating over all possible loops.}

\textbf{Ranking of comparison difficulty without heterogeneity.} Recall the analysis in Proposition \ref{prop:loop-rank}, I need to check the ranking-dominance between the ranking functions, $\pi$ for comparison difficulty $c_j$ and $\sigma$ for choice magnitude $\mu_j$ over the index subsets with different signs of preferences $J^+$ and $J^-$. Without loss of generality, consider the ranking of comparison difficulty $c(x_1,x_2)>c(x_1,x_3)>c(x_2,x_3)$. Using our simplified notation $c_j := c(x_j,x_{j+1})$, this is $c_1 > c_3 > c_2$, and thus corresponds to the ranking function $\pi(1)=1, \pi(2)=3, \pi(3)=2$. The strategy is to examine each possible preference ordering (determining $J^+$ and $J^-$), then derive what constraints the magnitude ranking $\sigma$ must satisfy to avoid ranking-dominance and thus ensure the compatibility with $\pi$.

As an example, consider the case of $J^+ = \{1,2\}, \ J^- = \{3\}$, which means $\rho(x_1,x_2) > \frac{1}{2}$, $\rho(x_2,x_3) > \frac{1}{2}$, and $\rho(x_3,x_1) < \frac{1}{2}$. Since $\#J^+=2>\#J^-=1$, $J^-$ cannot ranking-dominate $J^+$, which can be seen by taking $s=t=3$ in inequality \eqref{eqn:ranking-domination-def}. Hence it remains to choose rankings $\sigma$ such that $J^+$ does not ranking-dominate $J^-$, which requires
\begin{equation}
    \#\{j\in J^+\mid\sigma(j)\le s,\pi(j)\le t\}\  <\  \#\{j\in J^-\mid\sigma(j)\le s,\pi(j)\le t\}
    \label{eqn: eg-ranking}
\end{equation}
holds for some $s,t \in \{1,2,3\}$. Since $\#J^- = 1$, \eqref{eqn: eg-ranking} holds when the left side equals 0 while the right side equals 1. Given $J^-=\{3\}$ and $\pi(3) = 2$ and $\pi(1)=1$, this forces either:
\begin{itemize}
    \item $\sigma(3) = 1\quad \implies\quad \mu_3 > \max(\mu_1, \mu_2)$, or
    \item $\sigma(3) = 2$, $\sigma(1) = 3$, $\sigma(2) = 1\quad \implies\quad \mu_2 > \mu_3 > \mu_1$.\footnote{For simplicity, I ignore cases where $\bm\rho$ contains duplicate values, hence $\sigma$ is a permutation. In the full analysis in Proposition \ref{prop:loop-rank}, duplicate values of $\rho$ are allowed.}
\end{itemize}
Combined together, this simply requires $\mu_3 > \mu_1$, namely $\rho(x_1, x_3) > \rho(x_1, x_2)$. 

Intuitively, the case $J^+ = \{1,2\}, \ J^- = \{3\}$ corresponds to the preference ordering  $x_1 \succ x_2 \succ x_3$, which means $u(x_1) - u(x_3)>u(x_1) - u(x_2)$. Combined with the given requirement $c(x_1, x_3)<c(x_1, x_2)$, rationalization requires
$$
\frac{u(x_1)-u(x_3)}{c(x_1, x_3)} > \frac{u(x_1)-u(x_2)}{c(x_1, x_2)}\implies \rho(x_1,x_3)>\rho(x_1,x_2).
$$

Applying this analysis to all six preference orderings yields the complete characterization:
\begin{itemize}
    \item $x_1\succ x_2\succ x_3$ or $x_3\succ x_2\succ x_1$.\\ The additional rationalizability constraint is $\left|\rho(x_1,x_3)-\tfrac{1}{2}\right|>\left|\rho(x_1,x_2)-\tfrac{1}{2}\right|$. This yields a rationalizable region comprising half of the small cube corresponding to the preference ordering, depicted in green in Figure \ref{fig:cplx-ranking-indiv}.
    \item $x_1 \succ x_3 \succ x_2$ or $x_2 \succ x_3 \succ x_1$.\\ No additional constraints are imposed beyond the hypercube boundaries. The entire cube is rationalizable under these orderings, as the pair with maximal utility difference coincides with the pair of maximal comparison difficulty. These regions are depicted in blue.
    \item $x_3 \succ x_1 \succ x_2$ or $x_2 \succ x_1 \succ x_3$.\\ The additional rationalizability constraint is $\left|\rho(x_2, x_3)-\frac{1}{2}\right| > \max\{ \left|\rho(x_1, x_2)-\frac{1}{2}\right|, \left|\rho(x_1, x_3)-\frac{1}{2}\right| \}$. This yields the most restrictive rationalizable region, comprising one-third of the cube. These correspond to the case where maximal utility difference coincides with minimal comparison difficulty, depicted in orange.
\end{itemize}

\begin{figure}[H]
    \centering
    \begin{minipage}{0.48\linewidth}
    \includegraphics[width=\linewidth,trim={5cm 9cm 1cm 4cm},clip]{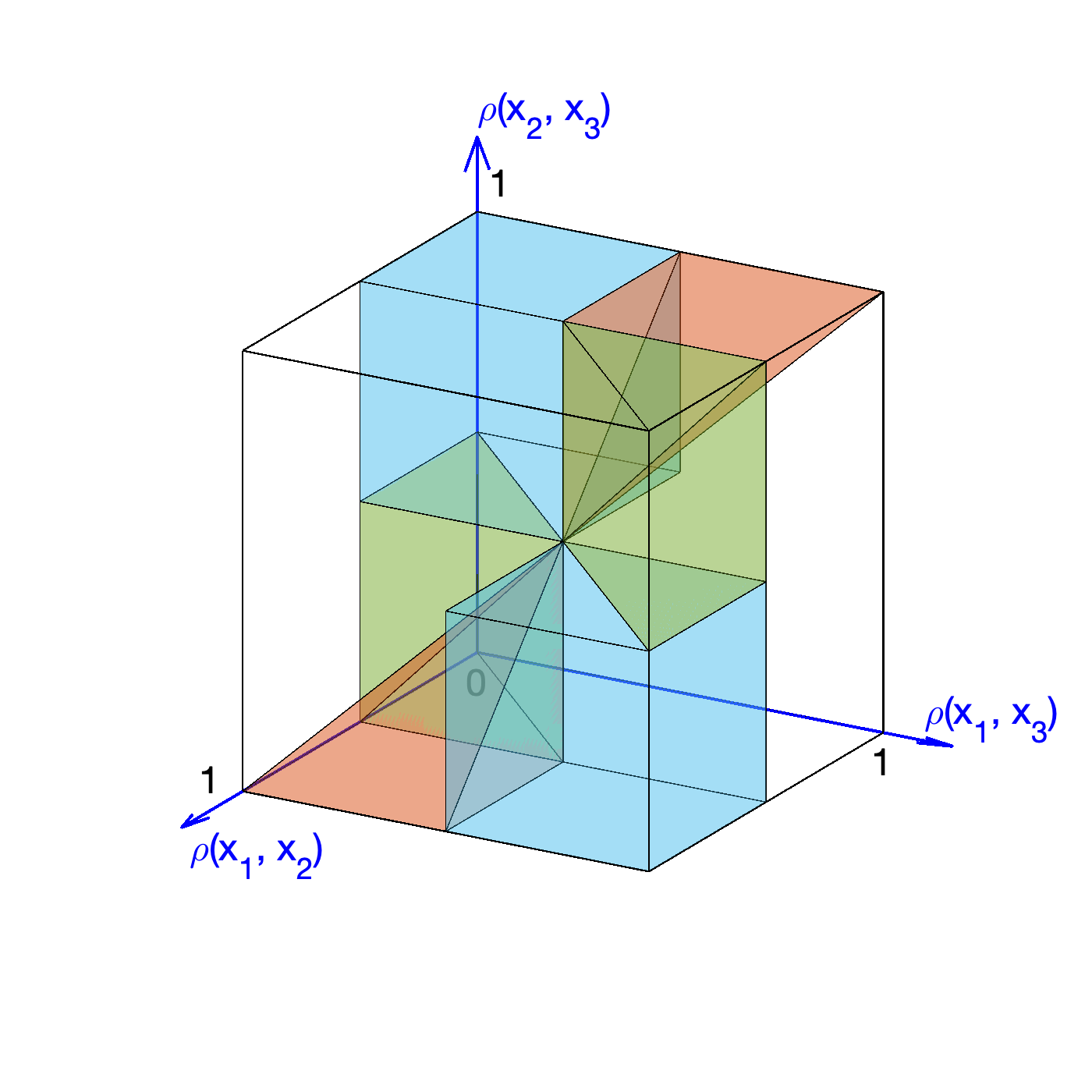}
    \caption{Individual WU-rationalizable \\ with $c(x_1,x_2)>c(x_1,x_3)>c(x_2,x_3)$}
    \label{fig:cplx-ranking-indiv}
    \end{minipage}
    \begin{minipage}{0.48
    \linewidth}
    \includegraphics[width=\linewidth,trim={5cm 9cm 1cm 4cm},clip]{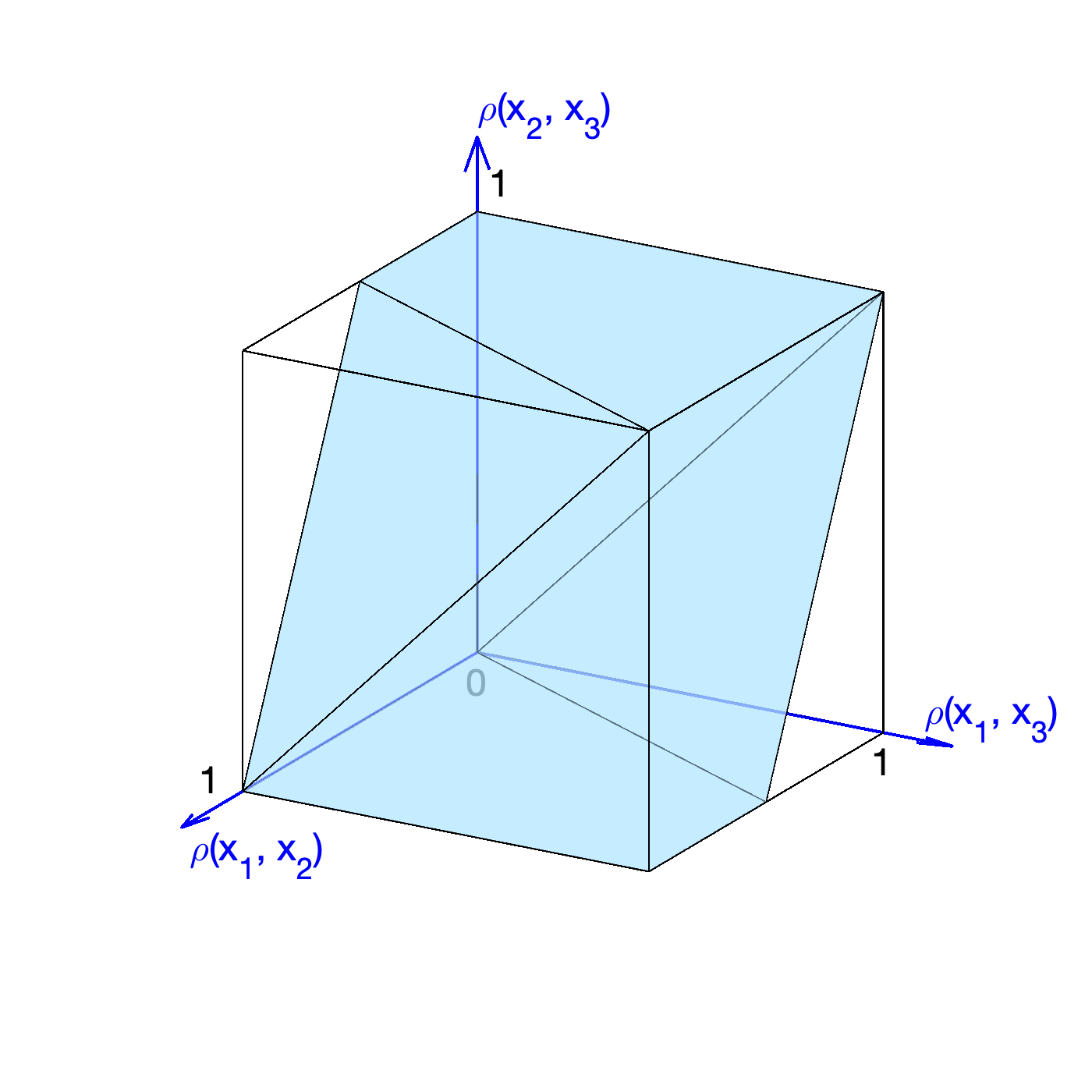}
    \caption{Collective  WU-rationalizable \\ with $c_i(x_1,x_2)>c_i(x_1,x_3)>c_i(x_2,x_3),\forall i$}
    \label{fig:cplx-ranking-collective}
    \end{minipage}
\end{figure}

\textbf{Ranking of comparison difficulty with heterogeneity.} From individual to collective rationalizability, I extend the given ranking of comparison difficulty to a heterogeneous population. In particular, I allow each individual in the population to have possibly different comparison difficulty functions $c_i(x,y)$, as long as all $c_i$ share a common ranking function. 
\begin{definition}[\textbf{Collective Weak Utility with a Common Ranking of Difficulty}]
\

    A \textit{Collective WU representation with a common ranking of comparison difficulty} has the form
        \begin{equation}
            \rho(x,y)= \sum_{i=1}^N \lambda_i F_i \left(\frac{u_i(x)-u_i(y)}{c_i(x,y)}\right) ,\quad \forall(x,y)\in Z^2
        \end{equation}
        with utility functions $u_i:Z\to\R$, semimetrics $c_i:Z^2\to\R^+$, and functions $F_i$ strictly increasing and satisfying $F_i(x)=1-F_i(-x)$, weights $\lambda_i\ge 0$ that sum to 1, and $c_i$ follows the same ranking function $\pi$ for any $i$.
\end{definition}
It naturally follows that the characterization for collective rationalizability with a common ranking of comparison complexity is the convex hull of that of the individual rationalizability. For the previous example, the collective WU-rationalizable set under the commonly shared ranking $c(x_1,x_2)>c(x_1,x_3)>c(x_2,x_3)$ is visualized in Figure \ref{fig:cplx-ranking-collective}, with the facet-defining inequalities
\begin{equation}
    1 \le 2\rho(x_1, x_2) + \rho(x_2, x_3) + \rho(x_3, x_1) \le 3.
\end{equation}
Recall that for collective WU without ranking requirements, the characterizing condition is $ \frac{1}{2} \le \rho(x_1, x_2) + \rho(x_2, x_3) + \rho(x_3, x_1) \le \frac{5}{2}$, which is cyclic symmetric and representing the stochastic transitivity while allowing for heterogeneity. Here with the ranking on comparison difficulty, we see an increased weight on the term $\rho(x_1, x_2)$ in the cyclic inequality. Intuitively, this is because $\rho(x_1, x_2)$ is pushed toward $\frac{1}{2}$ due to the large comparison difficulty, and dampens the information it reveals about the utility difference $u(x_1) - u(x_2)$. The increased weight compensates for this dampened information in the rationalizability.

\begin{figure}[H]
    \centering
    \begin{minipage}{0.32\linewidth}
    \centering
    \includegraphics[width=\linewidth,trim={5cm 9cm 1cm 4cm},clip]{figs/3d_viz_figs/collective_c12.png}
    \subcaption{\centering Collective WU with\\ $c(x_1,x_2)$ the largest}
    \end{minipage}
    \begin{minipage}{0.32\linewidth}
    \centering
    \includegraphics[width=\linewidth,trim={5cm 9cm 1cm 4cm},clip]{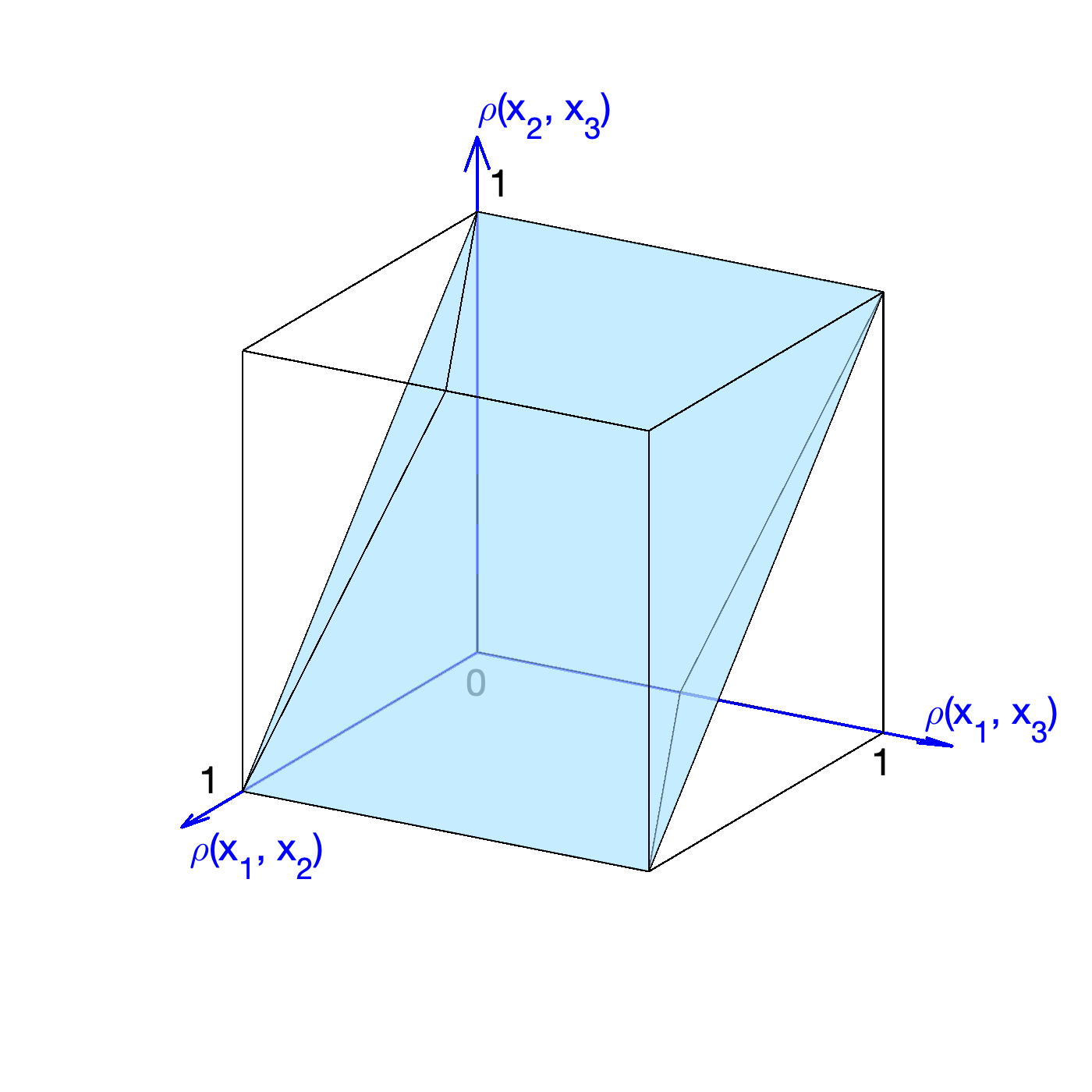}
    \subcaption{\centering Collective WU with\\ $c(x_1,x_3)$ the largest}
    \end{minipage}
    \begin{minipage}{0.32\linewidth}
    \centering
    \includegraphics[width=\linewidth,trim={5cm 9cm 1cm 4cm},clip]{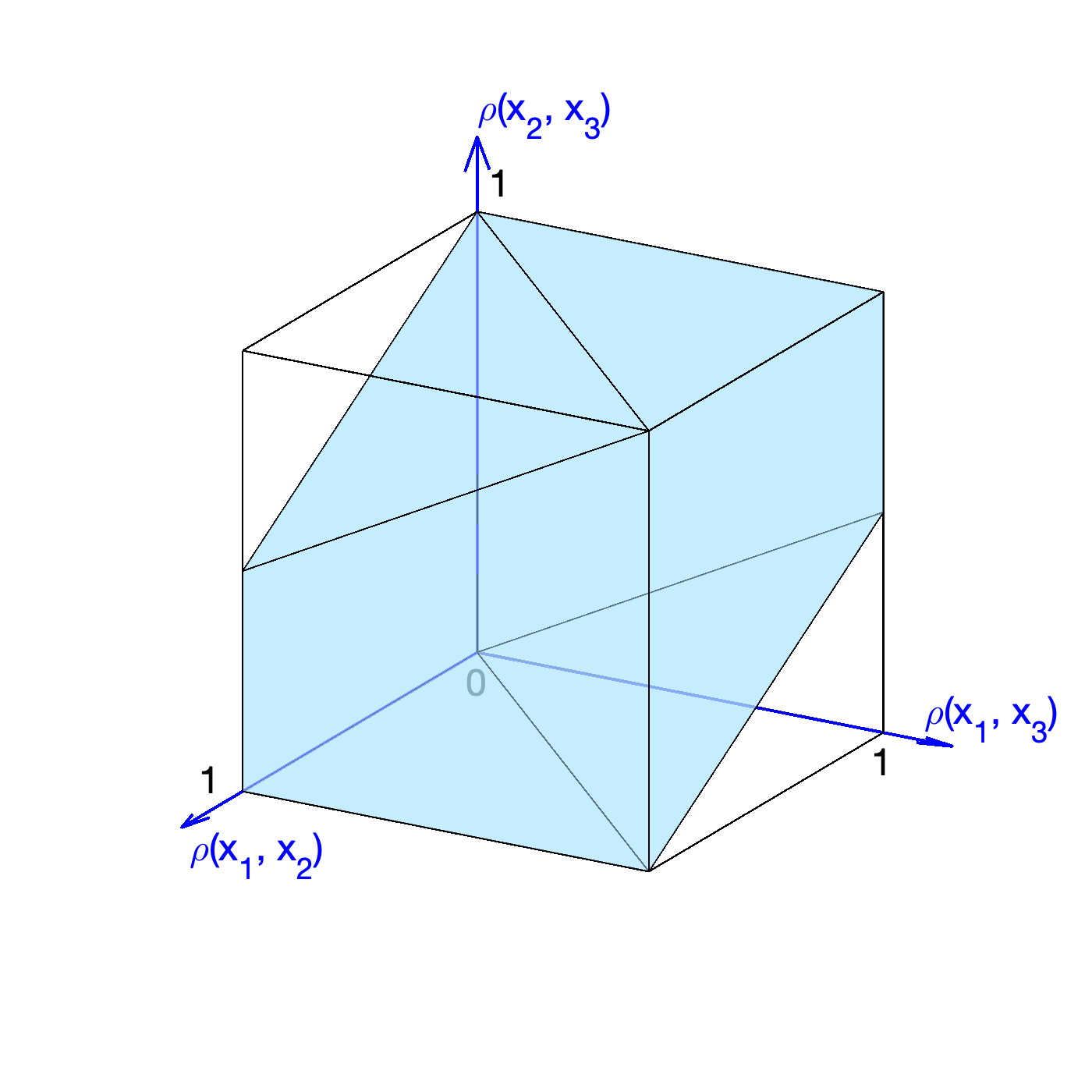}
    \subcaption{\centering Collective WU with\\ $c(x_2,x_3)$ the largest}
    \end{minipage}
    \caption{Collective WU rationalizable with different rankings of comparison difficulty}
    \label{fig:cplx-ranking}
\end{figure}
For all 6 possible rankings of $c$, we see that the rationalizable set only depends on which comparison is the most difficult, as shown in Figures \ref{fig:cplx-ranking}.

Furthermore, by comparing the difference between these rationalizable sets, one can identify when the pair with highest comparison difficulty can be \textit{uniquely} determined from the aggregate choices while allowing for different tastes. I formalize this result in the following Lemma \ref{lem:coll-wu-cplx-ranking}, with visualization in Figure \ref{fig:collective-1-ranking}. While Lemma \ref{lem:coll-wu-cplx-ranking} only states the conditions for $c(x_1, x_2) > \max(c(x_2, x_3), c(x_3,x_1))$, the other two cases can be obtained by simply re-indexing.

\begin{lemma}[\textbf{Collective WU with a Uniquely Identified Common Ranking of Difficulty}]
\label{lem:coll-wu-cplx-ranking}
\

Consider $n=3$ options $\{x_1,x_2,x_3\}$. Let the triple $\bm\rho=(\rho(x_1,x_2),\rho(x_2,x_3),\rho(x_3,x_1))\in[0,1]^3$ denote the population choice probability. Assume everyone shares the ranking of comparison difficulty. $\bm\rho$ is represented by collective WU  exclusively with the following common ranking of comparison difficulty
\[
c_i(x_1,x_2)>\max\{c_i(x_2,x_3),\,c_i(x_3,x_1)\},
\]
\textit{if and only if} either of the following systems of linear inequalities holds:

\medskip
\noindent
\begin{minipage}{0.45\textwidth}
\textbf{(A)}
\begin{equation*}
    \left\{\begin{matrix}
        \ 2\rho_{12}+\rho_{23}+\rho_{31} \le 3\\
        \ \rho_{12}+2\rho_{23}+\rho_{31} \ge 3 \\
        \ \rho_{12}+\rho_{23}+2\rho_{31} \ge 3
    \end{matrix}\right.
\end{equation*}
\end{minipage}
\hfill
\begin{minipage}{0.45\textwidth}
\textbf{(B)}
\begin{equation*}
    \left\{\begin{matrix}
        \ 2\rho_{12}+\rho_{23}+\rho_{31} \ge 1\\
        \ \rho_{12}+2\rho_{23}+\rho_{31} \le 1 \\
        \ \rho_{12}+\rho_{23}+2\rho_{31} \le 1
    \end{matrix}\right.
\end{equation*}
\end{minipage}
\end{lemma}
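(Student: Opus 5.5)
The plan is to reduce the lemma to set algebra on three explicit polytopes. For $k\in\{12,23,31\}$, let $C_k\subseteq[0,1]^3$ be the closure of the collective WU-rationalizable set when every individual's hardest comparison is pair $k$. The text above notes that the collective set depends only on which comparison is hardest. So $C_k$ is well defined, and being represented ``exclusively'' with $c_i(x_1,x_2)$ largest means $\bm\rho\in C_{12}\setminus(C_{23}\cup C_{31})$. As in the rest of the paper, I work with closures, so strict versus weak inequalities on boundaries are ignored (a measure-zero issue, as in the footnote on closures). The proof has two steps. First I establish the half-space description
\[
C_{12}=\{\bm\rho\in[0,1]^3:\ 1\le 2\rho_{12}+\rho_{23}+\rho_{31}\le 3\},
\]
together with its cyclic re-indexings for $C_{23}$ and $C_{31}$. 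Second I compute the set difference.

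\textbf{Step 1 (facets of $C_{12}$).} I take the individual characterization obtained from Proposition~\ref{prop:loop-rank}: the six cubes $\mathrm{Cube}(\bm\rho^*)$, with the extra constraints listed in Section~\ref{sec: cplx-ranking-exp} (half, all, or one third of each cube), taken over both rankings in which $c_{12}$ is largest.

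For \emph{necessity}, I check that every point of each such region satisfies $1\le 2\rho_{12}+\rho_{23}+\rho_{31}\le 3$. Since these are linear inequalities, it suffices to check the vertices of each region, whose entries lie in $\{0,\tfrac12,1\}$ by the same argument as Lemma~\ref{lem:extreme-point}. The inequalities are then preserved under convex combination. As a sanity check of which regions are binding: the forbidden corners $(1,1,1)$ and $(0,0,0)$ are excluded cubes anyway, and points like $(1,1,\tfrac12)$ are in the orange/green regions exactly when allowed.

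For \emph{sufficiency}, I enumerate the vertices of the polytope $\{1\le 2\rho_{12}+\rho_{23}+\rho_{31}\le 3\}\cap[0,1]^3$. These are the cube corners other than $(1,1,1)$ and $(0,0,0)$, plus the cut points $(\tfrac12,1,1)$, $(1,0,1)$, $(1,1,0)$, $(\tfrac12,0,0)$, $(0,1,0)$, $(0,0,1)$. I then verify that each one is individually rationalizable under some admissible ranking, using the three-case description above. This enumeration is the main obstacle. It is a finite check, but one must be careful that half-integral vertices such as $(\tfrac12,1,1)$ lie in the closure of a permitted region. Here $\rho_{12}=\tfrac12$ is consistent with $c_{12}$ being huge, which is the intuition for the weight $2$ on $\rho_{12}$.

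\textbf{Step 2 (set difference).} Write $a=2\rho_{12}+\rho_{23}+\rho_{31}$, $b=\rho_{12}+2\rho_{23}+\rho_{31}$, $c=\rho_{12}+\rho_{23}+2\rho_{31}$. Membership in $C_{12}\setminus(C_{23}\cup C_{31})$ means $1\le a\le 3$, $b\notin[1,3]$, and $c\notin[1,3]$. The mixed cases are impossible on the box. If $b>3$ and $c<1$, then $b-c=\rho_{23}-\rho_{31}>2$, which contradicts $\rho\in[0,1]^3$; the symmetric mixed case fails the same way. This leaves two cases.
\begin{itemize}
\item If $b,c\ge 3$: then $b\ge3$ and $\rho_{23}\le1$ give $\rho_{12}+\rho_{23}+\rho_{31}\ge 2$, so $a\ge1$ holds automatically. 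What remains is system (A).
\item If $b,c\le 1$: then $\rho_{12}+\rho_{23}+\rho_{31}\le 1$, so $a\le 2\le 3$ holds automatically. What remains is system (B).
\end{itemize}
Conversely, (A) or (B) together with $\bm\rho\in[0,1]^3$ puts $\bm\rho$ in $C_{12}$ and, up to closure, outside $C_{23}\cup C_{31}$. This gives the equivalence.
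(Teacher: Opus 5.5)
Your proposal is correct and follows the route the paper indicates. The paper gives no separate proof: it obtains the lemma by comparing the three collective sets $\{1\le 2\rho_{12}+\rho_{23}+\rho_{31}\le 3\}$ and their cyclic re-indexings, which is your Step 2, and your Step 1 supplies the verification of those facets that the paper only asserts (necessity via the $\{0,\tfrac12,1\}$ vertices of the individual regions, sufficiency via the eight vertices of the candidate polytope). The boundary caveat you flag is inherent in the lemma's non-strict inequalities and is handled by working with closures, as the paper does throughout; for example, a point with $\rho_{12}+2\rho_{23}+\rho_{31}=3$ satisfies (A) but also lies in the closed $C_{23}$.
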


\begin{figure}[H]
    \centering
    \includegraphics[width=0.5\linewidth,trim={5cm 9cm 1cm 4cm},clip]{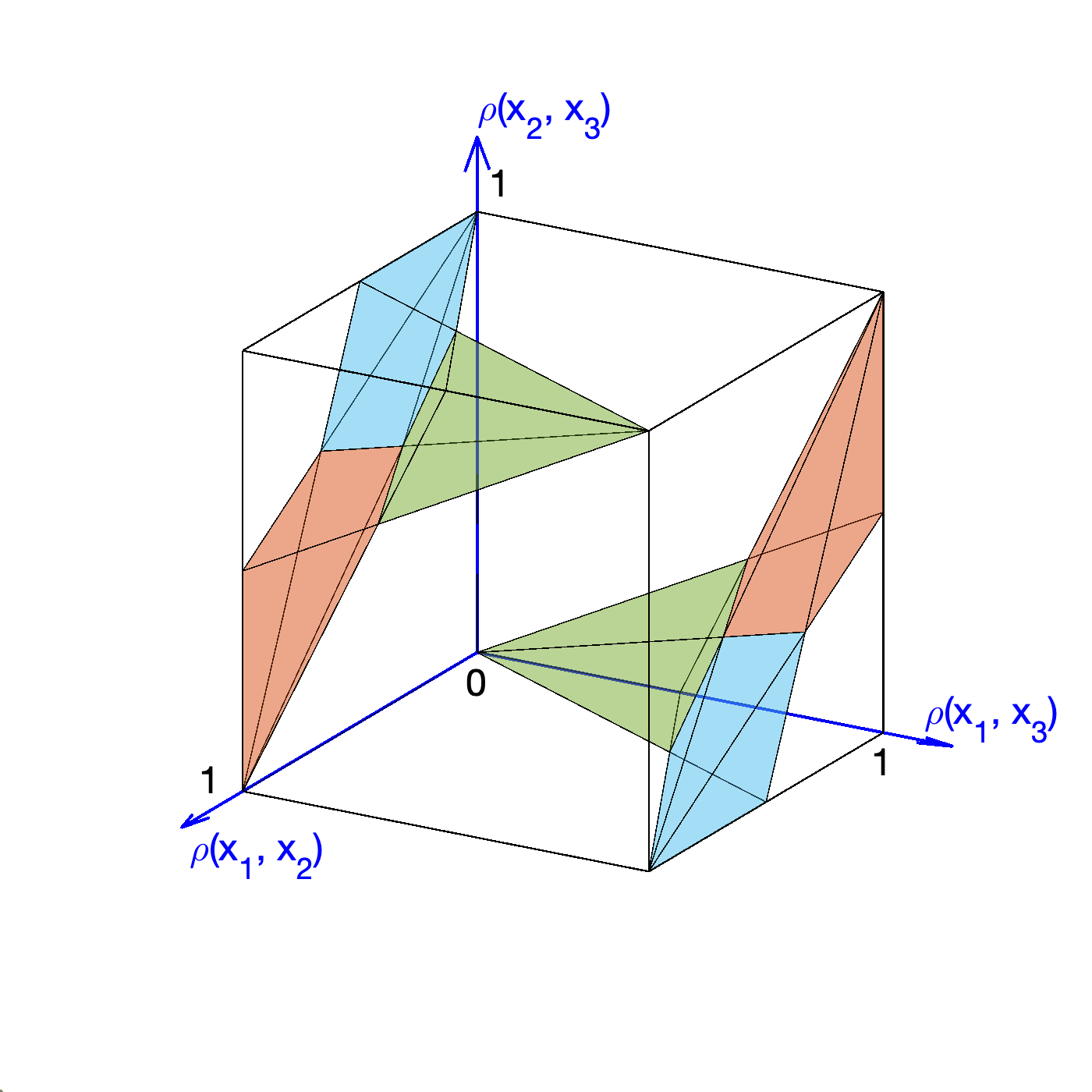}
    \caption{Collective WU representation where the pair with highest comparison difficulty can be determined. Blue: $c(x_1,x_2)>\max \{c(x_2,x_3),c(x_3,x_1)\}$. Green: $c(x_1,x_3)>\max\{c(x_1,x_2),c(x_2,x_3)\}$. Orange: $c(x_2,x_3)>\max\{c(x_1,x_2),c(x_1,x_3)\}$}
    \label{fig:collective-1-ranking}
\end{figure}

\newpage
\section{Statistical Testing of Collective Rationalizability}\label{sec: test}
The theoretical characterization in \autoref{thm:collective} establishes when choice patterns are collectively rationalizable while explicitly allowing for heterogeneity. This section addresses the question of whether estimated choice frequencies may, up to \textit{sampling variation}, have arisen from a collectively rationalizable model.

Testing collective rationalizability involves testing multiple inequalities, a problem extensively studied in econometrics. The highlight of this section is twofold: first, to provide a broad class of d.g.p.s over which the resulting test based on collective rationalizability is uniformly valid, where stochastic transitivity cannot be directly tested; second, to illustrate how to overcome the main challenge that the limiting distribution of the test statistic depends discontinuously on the null hypothesis. I adopt the numerical delta method \citep{fang2019inference, hong2018numerical} to provide uniformly valid inference, by using the fact that the target function is directionally differentiable.

The remainder proceeds as follows. Section \ref{sec: dgp} specifies the general class of d.g.p.s. Section \ref{sec: numerical-delta} develops the test statistic and establishes uniform validity. Section \ref{sec: monte-carlo} analyzes finite-sample performance through Monte Carlo simulations.

\subsection{Data Generating Process}\label{sec: dgp}
Let $Z$ be a finite set of $n$ choice options, generating $m:=\binom{n}{2}$ binary choice problems. Consider $N$ subjects randomly drawn from a population, where each subject faces a potentially different subset of these problems.
For each subject $i$ and binary choice problem $j$, researchers observe:
\begin{enumerate}[(i).]
    \item An indicator $I_{ij}=\mathbbm{1}[\text{subject $i$ faces problem $j$}]$.
    \item The average choice frequency $\bar{\rho}_{ij}$ when $I_{ij}=1$.
\end{enumerate} 
Note that complete observations are not required. Subjects may face only subsets of possible choices, and repetitions may be insufficient to estimate individual choice probabilities. These limitations prevent testing based on individual rationalizability. Instead, I construct the population choice frequency from the available data:
\begin{equation}
    \hat{\bm\rho}=[\hat{\rho}_1,\dots,\hat\rho_m]^T,\quad \text{where}\quad \hat\rho_j = \frac{\sum_{i=1}^N \bar{\rho}_{ij} I_{ij}}{\sum_{i=1}^N I_{ij}}, \ \forall j=1,\dots,m.
        \label{eqn:estimator-standard-dgp}
\end{equation}

The framework covers three key sources of variation. (i) \textit{Heterogeneity}: aggregating choices while assuming a representative agent may produce apparently irrational patterns simply due to heterogeneous preferences, rather than irrational individuals. (ii) \textit{Assignment variation}: subjects may face different subsets of choice problems. In experiments, this reflects time or budget constraints where each subject is assigned only a subset of questions; in observational data, this captures consumers encountering different products. (iii) \textit{Repeated choices}: subjects' responses may exhibit varying degrees of correlation when they face the same choice problem multiple times. This variation highlights the need for careful experimental design when researchers assume independent decisions to estimate individual choice probabilities.

I specify the first two sources of variation in Assumption \ref{assumption_1}, and the third in Assumption \ref{assumption_2}.

\begin{assumption}[\textbf{Sampling and Random Assignment}]\label{assumption_1}
\

\begin{enumerate}[(a)]
    \item Individual choice probabilities: $\bm\rho_i \sim_{\text{iid}} F$, where $F$ is a distribution over $[0,1]^m$ with mean $\bm\mu$ and covariance $\bm\Sigma_0$.
    \item Assignments: $\bm K_i \sim_{\text{iid}} G$, where $G$ is a distribution over $\{0, 1, \dots, \bar{K}\}^m$ and $K_{ij}$ denotes how many times subject $i$ faces problem $j$.
    \item Independence: $\bm\rho_i \perp \bm K_i$ (no selection bias between preferences and assignments).
\end{enumerate}
\end{assumption}

\begin{remark}{(For Assumption \ref{assumption_1})}
\
\begin{enumerate}[(i).]
    \item Assumption \ref{assumption_1}(a) is standard in the literature. (b) and (c) generalize \cite{kitamura2018nonparametric} to allow repeated observations. This framework encompasses various assignment mechanisms, such as random sampling with replacement, where subjects face randomly drawn problems; or lab experiments, where subjects are randomly assigned to sessions with predetermined choice problems.
    \item The independence assumption in (c) rules out selection bias between preferences and the problems subjects face. This assumption is naturally satisfied under random assignment in experiments, but may be violated in observational settings where consumers self-select into choice situations. For instance, price-sensitive consumers may seek out discount stores and thus more frequently encounter choice problems there.
\end{enumerate}
\end{remark}

When subjects face the same choice problem multiple times, their responses may exhibit varying degrees of correlation. I accommodate the full spectrum: from independent responses (treating each as a fresh decision) to perfect correlation (always repeating the initial choice) and any intermediate pattern. The test remains valid regardless of the actual correlation structure, offering robustness in settings where the degree of correlation is unknown or may vary across subjects.

Formally, let $\hat{\rho}_{ij}^{(k)}$ denote subject $i$'s response the $k^{\text{th}}$ time facing problem $j$, and let $$\hat{\bm\rho}_{ij} = [\hat{\rho}_{ij}^{(1)}, \dots, \hat{\rho}_{ij}^{(K_{ij})}]^T\in\{0,1\}^{K_{ij}}$$ denote the vector of all responses from subject $i$ facing problem $j$. For any $k$ and $p\in[0,1]$, let $H(k,p)$ denote a distribution over valid correlation matrices for a $k$-dimensional Bernoulli($p$) random vector.

\begin{assumption}[\textbf{Repeated Choice Responses}]\label{assumption_2}
\ 

For subject $i$ facing problem $j$ multiple times ($K_{ij} > 0$):
\begin{enumerate}[(a)]
    \item Each response: $\hat{\rho}_{ij}^{(k)} | \rho_{ij} \sim \text{Bernoulli}(\rho_{ij})$ for $k=1,\dots, K_{ij}$
    \item Correlation structure: $\text{Cov}(\hat{\bm\rho}_{ij} | K_{ij}, \bm\rho_{i}) = \rho_{ij}(1-\rho_{ij}) \bm V_{ij}$, where $\bm V_{ij} \sim H(K_{ij}, \rho_{ij})$
\end{enumerate}
\end{assumption}

The observed average choice frequency for subject $i$ on problem $j$ is:
\begin{equation}
    \bar{\rho}_{ij} = \left\{\begin{matrix}
            \frac{1}{K_{ij}} \sum_{k=1}^{K_{ij}} \hat{\rho}_{i j}^{(k)} & \text{ if } K_{ij} \neq 0 \\
            0 & \text{ if } K_{ij} = 0
        \end{matrix} \right. \ ,\quad \forall i=1,\dots,N,\ \forall j=1,\dots, m.
\end{equation}

\begin{lemma}\label{lemma:rho_hat}
Under Assumptions \ref{assumption_1}, \ref{assumption_2}, and Regularity Assumptions \autoref{assumption-regularity} in Appendix \ref{appdx: regularity}
    \begin{equation}
        \sqrt{N} \left( \hat{\bm\rho} - \bm\mu \right) \overset{d}{\longrightarrow}
    \mathcal{N}\big(\bm 0, \ \bm \Sigma \big),
    \end{equation}
    where $\bm\Sigma$ is defined in Appendix \ref{pf:rho_hat}.
\end{lemma}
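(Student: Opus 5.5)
The estimator $\hat\rho_j$ is a ratio of two sample means, so I would apply the multivariate CLT to the joint vector of numerators and denominators and then use the delta method. Set $D_{ij}=\bar\rho_{ij}I_{ij}$ with $I_{ij}=\mathbbm{1}[K_{ij}>0]$, and write
\[
\hat\rho_j=\frac{N^{-1}\sum_i D_{ij}}{N^{-1}\sum_i I_{ij}}.
\]
By Assumption \ref{assumption_1}(a)--(b), the tuples $(\bm\rho_i,\bm K_i,\{\hat{\bm\rho}_{ij}\}_j)$ are i.i.d. across subjects. This holds provided the correlation draws $\bm V_{ij}$ and the Bernoulli responses are generated independently across $i$; I would make this explicit, presumably as part of the Regularity Assumptions. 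Consequently the $2m$-vector $W_i=(D_{i1},\dots,D_{im},I_{i1},\dots,I_{im})$ is i.i.d. Its entries lie in $[0,1]$, so all moments exist, and the Lindeberg--L\'evy CLT gives
\[
\sqrt N\,(\bar W_N-\E W_i)\overset{d}{\longrightarrow}\mathcal N(\bm 0,\Cov(W_i)).
\]

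Next I would compute the means. Let $q_j:=\Pr(K_{ij}>0)$; the regularity conditions should give $q_j>0$. Using the law of iterated expectations, Assumption \ref{assumption_2}(a) gives $\E[\bar\rho_{ij}\mid K_{ij},\bm\rho_i]=\rho_{ij}$ whenever $K_{ij}>0$. Independence $\bm\rho_i\perp\bm K_i$ from Assumption \ref{assumption_1}(c) then yields $\E D_{ij}=q_j\mu_j$. Applying the delta method to $g(a,b)=(a_j/b_j)_{j}$ at $(q\odot\mu,q)$ gives asymptotic normality centered at $\bm\mu$, with $\bm\Sigma=\nabla g\,\Cov(W_i)\,\nabla g^T$. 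Equivalently, I would linearize directly:
\[
\sqrt N(\hat\rho_j-\mu_j)=\frac{N^{-1/2}\sum_i I_{ij}(\bar\rho_{ij}-\mu_j)}{N^{-1}\sum_i I_{ij}},
\]
whose denominator tends to $q_j$ by the LLN, so Slutsky applies. This gives
\[
\Sigma_{jk}=\frac{\E\big[I_{ij}I_{ik}(\bar\rho_{ij}-\mu_j)(\bar\rho_{ik}-\mu_k)\big]}{q_jq_k}.
\]

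The main obstacle is evaluating these covariances explicitly, which is needed to define $\bm\Sigma$ in Appendix \ref{pf:rho_hat}. I would condition on $(\bm K_i,\bm\rho_i)$ and use the law of total covariance. For $j\ne k$, I would assume responses to different problems are conditionally independent given $\bm\rho_i$ (again part of the regularity assumptions). The covariance term then reduces to
\[
\E[I_{ij}I_{ik}]\,(\bm\Sigma_0)_{jk},
\]
using $\bm\rho_i\perp\bm K_i$. For $j=k$, the variance splits into two parts. The between-subject part is $q_j(\bm\Sigma_0)_{jj}$. The within-subject part is
\[
\E\!\left[I_{ij}\,\rho_{ij}(1-\rho_{ij})\,\frac{\bm 1^T\bm V_{ij}\bm 1}{K_{ij}^2}\right],
\]
which comes from Assumption \ref{assumption_2}(b).

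The delicate point is that $\bm V_{ij}$ may depend on $\rho_{ij}$ through $H$, so this last expectation must be kept as an integral over $F$, $G$ and $H$ rather than factored. Since $0\le \bm 1^T\bm V\bm 1/K^2\le 1$, the integral is bounded, and finiteness is immediate. This explains why the result holds regardless of the correlation structure: $\bm\Sigma$ simply absorbs that structure, with independent responses and perfectly repeated responses as the two extreme cases.
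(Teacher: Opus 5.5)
Your proposal is correct and follows the paper's route: a multivariate CLT for the vector of numerators and denominators $(\bar\rho_{ij}I_{ij}, I_{ij})_{j}$, then the delta method for the ratio map, with the covariances computed by conditioning on $(\bm K_i,\bm\rho_i)$. This gives exactly the paper's $\Sigma_{jj}=\bar v_j^2/p_j^2+(\bm\Sigma_0)_{jj}/p_j$ and $\Sigma_{jj'}=p_{jj'}(\bm\Sigma_0)_{jj'}/(p_jp_{j'})$; your Slutsky linearization is just a compact form of the same delta-method algebra. The paper additionally proves $\bm\Sigma$ is positive definite via a Schur-product argument under Regularity (d)--(e), which the lemma's statement itself does not require.
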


The asymptotic covariance $\bm\Sigma$ captures the above three sources of variation: population distribution ($\bm\Sigma_0$), assignment patterns ($G$), and within-subject correlation across repeated observations. Under the Regularity Assumptions \ref{assumption-regularity} (d) and (e), $\bm\Sigma$ is positive definite.

\subsection{Test Statistic and Uniform Validity}\label{sec: numerical-delta}
Let $Q \subseteq [0,1]^m$ denote the polytope of collective rationalizability (SS, MU, or WU). By \autoref{thm:collective}, the testing problem is
\begin{equation*}
    H_0: \quad \bm\mu \in Q, \quad \text{against}\quad H_1: \quad \bm\mu \not\in Q.
\end{equation*}

The null hypothesis holds when the mean of population distribution lies within the rationalizability polytope. The test statistic measures the distance from the estimated frequencies to this polytope.
\begin{equation}
     T_N = \sqrt{N} \left\| \hat{\bm\rho} - \proj_Q(\hat{\bm\rho},\bm\Omega)\right\|_\Omega,
     \label{eqn: Tn}
\end{equation}
where $\left\|\bm x\right\|_\Omega = \sqrt{\bm x^T\bm \Omega \bm x}$, and $\proj_Q(\cdot,\bm\Omega)$ is the projection onto $Q$ with $\bm\Omega$-induced distance, defined as
\begin{equation}
    \proj_Q(\bm x,\bm\Omega) = \arg\min_{\bm q \in Q} \ \left\|\bm x - \bm q\right\|_\Omega.
\end{equation}

Define the transformation $\phi(\bm x) = \| \bm x - \proj_{Q}(\bm x) \|$,\footnote{Note that since the numerical delta method that will be introduced shortly provides a uniformly valid testing procedure, the choice of weighting matrix $\bm\Omega$ in the distance is immaterial: rescaling by $\bm\Omega$ corresponds to a linear transformation of the data and the null polytope. Hence, I set $\bm\Omega = I$ without loss of generality. See Appendix \ref{appdx: test-addi-norm} for more detailed argument.} then the null can then be written as
\begin{equation*}
    H_0:\quad \phi(\bm\mu)\le 0\quad \text{against}\quad H_1:\quad \phi(\bm\mu)> 0,
\end{equation*}
and $T_N = \sqrt{N}\left(\phi(\hat{\bm\rho})\right)$, where from Lemma \autoref{lemma:rho_hat}, $\sqrt{N}\left(\hat{\bm\rho} - \bm\mu\right)\to \G_0:= \mathcal{N}\big(\bm 0, \ \bm \Sigma \big)$. The function $\phi$ measures the Euclidean distance projecting onto a convex polytope. Its limiting distribution does not vary continuously in $\bm\mu$ when $\bm\mu$ lies on the boundary of $Q$: intuitively, as $\bm\mu$ moves across edges and facets of the polytope, the distribution changes abruptly. This discontinuity invalidates the consistency of standard bootstrap methods in this setting \citep{andrews2000inconsistency, fang2019inference}.\footnote{See Appendix \ref{appdx: subsampling} for the proof of \textit{pointwise} validity of the subsampling procedure in our case.} Fortunately, the function $\phi$ is \textit{directionally differentiable} and, moreover, Lipschitz and convex. These properties allow me to adopt the numerical delta method from \citet{hong2018numerical}, which provides \textit{uniformly valid inference}.

Formally, the key insight is that for directionally differentiable $\phi$,
\begin{equation}
    \sqrt{N}\left(\phi(\hat{\bm\rho}) - \phi(\bm\mu)\right) \to \phi'_{\bm\mu}(\G_0),
\end{equation}
where $\phi'_{\bm\mu}$ denotes the directional derivative of $\phi$ at $\bm\mu$ \citep{shapiro1991asymptotic}.\footnote{See Appendix \ref{appdx: Tn} for the detailed analysis on the pointwise asymptotic distribution of $T_N$, which relates closely to \citet{fang2019inference}'s discussion on convex set projections.} Define 
\begin{equation}
    \Z_N^* = \sqrt{N}\left(\hat{\bm\rho}_b - \hat{\bm\rho}\right),\label{eqn:bootstrap}
\end{equation}
where $\hat{\bm\rho}_b$ is parameter estimates obtained using standard bootstrap. Then for positive step size $\epsilon_N\to 0,\sqrt{N}\epsilon_N\to\infty$, the pointwise validity of the inference is obtained as
\begin{equation}
    \hat{\phi}'_N(\Z_N^*) := \frac{1}{\epsilon_N}\left(\phi(\hat{\bm\rho}+\epsilon_N\Z_N^*)-\phi(\hat{\bm\rho})\right)\to \phi'_{\bm\mu}(\G_0).
\end{equation}
Moreover, for the uniform inference, $\phi$ being convex implies that $\frac{1}{\epsilon_N}\left(\phi(\bm\mu+\epsilon_N\G_0)-\phi(\bm\mu)\right)$ first-order stochastically dominates $\phi'_{\bm\mu}(\G_0)$, and $\phi$ being Lipschitz implies that $\frac{1}{\epsilon_N}\left(\phi(\bm\mu+\epsilon_N\G_0)-\phi(\bm\mu)\right)$ can be replaced by its feasible sample version $\hat{\phi}'_N(\Z_N^*)$;
these together imply that the $1-\alpha$ quantile of $\hat{\phi}'_N(\Z_N^*)$ provides uniform size control \citep{romano2012uniform}.

\begin{theorem}[\textbf{Uniformly Valid Inference for Collective Rationalizability}]
\label{thm:uniform-test}
\

Consider the following numerical delta method with $S$ bootstrap replications. For each $s=1,\dots,S$, draw $\Z_s$ from the distribution of $\Z_N^*$ defined in (\ref{eqn:bootstrap}). For a given $\epsilon_N$ such that $\epsilon_N\to 0,\sqrt{N}\epsilon_N\to\infty$, evaluate for each $s$:
\begin{equation}
        \hat{\phi}'_N(\Z_s^*) := \frac{1}{\epsilon_N}\left(\phi(\hat{\bm\rho}+\epsilon_N\Z_s^*)-\phi(\hat{\bm\rho})\right).
\end{equation}
Let $\hat{c}_{1-\alpha}$ denote the $1-\alpha$ quantile of the empirical distribution of $\hat{\phi}'_N(\Z_s^*)$.
Under Assumptions \autoref{assumption_1}, \autoref{assumption_2}, and \autoref{assumption-regularity},
    \begin{equation}
        \liminf_{N\to \infty} \inf_{\bm\mu\in Q} \Pr[T_N\le \hat{c}_{1-\alpha}] \ge 1-\alpha.
    \end{equation}
\end{theorem}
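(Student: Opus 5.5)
The plan follows the Hong and Li (2018) numerical delta method together with the uniform argument of Romano and Shaikh sketched just before the statement. Since $\Pr[T_N\le \hat c_{1-\alpha}]$ is monotone in the data only through $\phi$, we reduce everything to two approximations. Under $H_0$, $\phi(\bm\mu)=0$. We can therefore write
\[
T_N=\sqrt N\bigl(\phi(\hat{\bm\rho})-\phi(\bm\mu)\bigr).
\]
We then compare $T_N$ with the infeasible object $\frac{1}{\epsilon_N}\bigl(\phi(\bm\mu+\epsilon_N\G_0)-\phi(\bm\mu)\bigr)$ and compare the bootstrap statistic with the same object.

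\textbf{Step 1 (uniform CLT and bootstrap).} Using Lemma \ref{lemma:rho_hat} and the regularity conditions, we would first upgrade the CLT to hold uniformly over the class of d.g.p.s. The conditions bound moments, keep the eigenvalues of $\bm\Sigma$ away from $0$, and keep assignment probabilities bounded below. We then show the same uniformity for the bootstrap analogue $\Z_N^*$, in bounded-Lipschitz distance, uniformly over the class. The tool is a Lyapunov/Berry--Esseen bound for i.i.d. sums of the bounded vectors $(\bar\rho_{ij}I_{ij}, I_{ij})$, followed by the delta method for the ratio estimator \eqref{eqn:estimator-standard-dgp}. This step is uniform because the denominators are bounded away from zero.

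\textbf{Step 2 (properties of $\phi$).} The distance $\phi(\bm x)=\mathrm{dist}(\bm x,Q)$ to the closed convex polytope $Q$ given by \autoref{thm:collective} is $1$-Lipschitz, convex, and positively homogeneous in its directional derivative. Its derivative is $\phi'_{\bm\mu}(h)=\mathrm{dist}(h,T_Q(\bm\mu))$, where $T_Q(\bm\mu)$ is the tangent cone. Convexity gives, for every $\bm\mu\in Q$ and every $h$,
\[
\phi(\bm\mu+\epsilon h)/\epsilon\ \ge\ \phi'_{\bm\mu}(h).
\]
This holds because the difference quotient of a convex function is nondecreasing in $\epsilon$. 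Under the null, $T_N=\sqrt N\phi(\hat{\bm\rho})$. Since $\hat{\bm\rho}=\bm\mu+h_N/\sqrt N$ with $h_N=\sqrt N(\hat{\bm\rho}-\bm\mu)$, the same monotonicity with step $1/\sqrt N\le\epsilon_N$ eventually gives
\[
T_N=\sqrt N\phi(\bm\mu+h_N/\sqrt N)\le \phi(\bm\mu+\epsilon_N h_N)/\epsilon_N.
\]
The key point is that this bound is deterministic and holds at every $\bm\mu\in Q$. Consequently, boundary points such as edges and vertices cause no discontinuity problem.

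\textbf{Step 3 (feasible replacement).} The Lipschitz property gives
\[
\Bigl|\hat\phi'_N(\Z^*_N)-\tfrac{1}{\epsilon_N}\bigl(\phi(\bm\mu+\epsilon_N\Z^*_N)-\phi(\bm\mu)\bigr)\Bigr|\le \frac{2\|\hat{\bm\rho}-\bm\mu\|}{\epsilon_N}=O_p\bigl(1/(\sqrt N\epsilon_N)\bigr)\to0
\]
uniformly, using $\sqrt N\epsilon_N\to\infty$. Combined with Step 1, the map $h\mapsto\phi(\bm\mu+\epsilon_N h)/\epsilon_N$ is $1$-Lipschitz uniformly in $\bm\mu$ and $N$. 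Hence the conditional law of $\hat\phi'_N(\Z^*_N)$ is uniformly close to the law of $\phi(\bm\mu+\epsilon_N\G_0)/\epsilon_N$, and likewise the law of $\phi(\bm\mu+\epsilon_N h_N)/\epsilon_N$ is close to that same law. So $T_N$ is stochastically dominated, up to $o(1)$ uniformly, by a variable whose distribution the bootstrap quantile $\hat c_{1-\alpha}$ approximates.

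\textbf{Main obstacle.} Converting these approximations into quantile statements uniformly requires handling possible atoms or flat spots of the limit law. For example, $\phi'_{\bm\mu}(\G_0)$ has an atom at $0$ when $\bm\mu$ is interior. The plan is the standard device: argue along arbitrary sequences $\bm\mu_N\in Q$ and subsequences where $\bm\Sigma_N\to\bm\Sigma$, so it suffices to prove $\liminf\Pr[T_N\le\hat c_{1-\alpha}]\ge1-\alpha$ along each such sequence. We then use the bounded-Lipschitz closeness from Step 3, together with an $\eta$-perturbation of the quantile, $\hat c_{1-\alpha}+\eta$ versus the $(1-\alpha-\delta)$ quantile, as in Romano and Shaikh (2012, Lemma A.1). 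Finally we let $\eta,\delta\downarrow0$. The inequality direction works only because Step 2 yields a one-sided dominance; this gives $\ge$ rather than equality.
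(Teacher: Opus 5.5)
Your Steps 1--3 are correct, and they are the ingredients the paper uses. The paper checks four things and then invokes Theorem 3.5 of Hong and Li:
\begin{itemize}
\item $\phi$ is convex;
\item $\phi$ is $1$-Lipschitz;
\item $\sqrt N(\hat{\bm\rho}-\bm\mu)$ and its bootstrap analogue converge uniformly (Hong and Li's Assumption 3.1);
\item the Gaussian limit puts zero mass on the boundaries of the sublevel sets $\mathcal C_{a,d,x}$ (their Assumption 3.2).
\end{itemize}
The proof of that theorem is essentially the dominance-plus-Lipschitz argument you reconstruct. The gap is in your closing step. This is exactly where the paper relies on Assumption 3.2, and your sketch puts nothing in its place.

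Bounded-Lipschitz closeness plus an $\eta$-shift only gives $\Pr[T_N\le\hat c_{1-\alpha}]\ge\Pr[V_N\le q_N-2\eta]-o(1)$, where $V_N=\phi(\bm\mu+\epsilon_N\G_0)/\epsilon_N$ and $q_N$ is its $(1-\alpha-\delta)$-quantile. To let $\eta\downarrow0$ you need $\Pr[q_N-2\eta<V_N\le q_N]\to0$ uniformly, i.e.\ anti-concentration of $V_N$ at the critical value. The one-sided dominance from Step 2 does not provide this.

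Two additions close the argument.

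First, anti-concentration does hold on $(0,\infty)$. The sets $\{h:\phi(\bm\mu+\epsilon_N h)\le\epsilon_N x\}$ are convex, so their boundaries are Lebesgue-null, and $\bm\Sigma$ is nonsingular. If you argue along subsequences, you must also take one on which the rescaled polytopes $(Q-\bm\mu_N)/\epsilon_N$ converge, not just $\bm\Sigma_N\to\bm\Sigma$. This is possible because $Q$ has finitely many facets. It gives a fixed limit $\mathrm{dist}(\G,C_\infty)$ whose CDF is continuous on $(0,\infty)$.

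Second, the argument genuinely fails at $0$ without an extra observation.
\begin{itemize}
\item $V_N$ has an atom at $0$, with mass near one when $\bm\mu$ is deep inside $Q$.
\item Bootstrap draws equal $-\phi(\hat{\bm\rho})/\epsilon_N\le0$ whenever $\hat{\bm\rho}+\epsilon_N\Z_s^*\in Q$, so $\hat c_{1-\alpha}$ can be slightly negative.
\item When the relevant quantile is $0$, your bound collapses to $\Pr[V_N\le-2\eta]=0$.
\end{itemize}
The missing observation is that $T_N=0$ exactly when $\hat{\bm\rho}\in Q$. On that event $\phi(\hat{\bm\rho})=0$, so every $\hat\phi'_N(\Z_s^*)\ge0$ and hence $\hat c_{1-\alpha}\ge0$. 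Therefore $\{T_N\le\hat c_{1-\alpha}\}=\{T_N\le\max(\hat c_{1-\alpha},0)\}$, and you only need CDF comparisons on $[0,\infty)$. In the zero-quantile case, coverage is at least $\Pr[T_N=0]\ge\Pr[\phi(\bm\mu+\epsilon_N h_N)=0]$. This converges to the atom mass of the limit, which is at least $1-\alpha$.
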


\subsection{Monte Carlo Simulations}\label{sec: monte-carlo}
The performance of the proposed testing procedure is evaluated through Monte Carlo simulations, focusing on Collective Simple Scalability rationalizability with three choice options. 
I construct a power curve by varying the distance between the population mean $\bm\mu$ and the rationalizability polytope $Q\subset \mathbb{R}^3$. Starting from a boundary point $\bm\mu_0=[2/3,\ 1/3,\ 2/3]^T$ on a facet of $Q$, I move  along the normal vector $[1,-1,1]^T$ in both directions, setting $\bm\mu = \bm\mu_0 + d\cdot [1,\ -1,\ 1]^T$ where $d$ controls the distance from the boundary. For $d<0$, $\bm\mu$ lies inside $Q$ (collectively rationalizable); for $d>0$, $\bm\mu$ lies outside $Q$ (not collectively rationalizable). The simulation examines  nine values $d\in\{0,\ \pm0.01,\ \pm0.02,\ \pm0.04,\ \pm0.08\}$. Small deviations such as $d=0.01$ represent near-rational behavior, while $d=0.08$ indicates substantial departures from rationalizability.

\begin{table}[H]
    \centering
    \begin{threeparttable}
    \caption{Population means $\bm\mu$ used for the Monte Carlo simulations\tnote{a}}
    \begin{tabular}{|cccccccccc|}
    \hline
         & $\bm\mu_4$ & $\bm\mu_3$ & $\bm\mu_2$ & $\bm\mu_1$ & $\bm\mu_0$ & $\bm\mu_1'$ & $\bm\mu_2'$ & $\bm\mu_3'$ & $\bm\mu_4'$ \\
         \hline
         &&&&&&&&&\\[-0.2cm]
        $\bm\mu$ & $\begin{bmatrix}
            0.59\\0.41\\0.59
        \end{bmatrix}$ & $\begin{bmatrix}
           0.63\\ 0.37\\ 0.63
        \end{bmatrix}$ & $\begin{bmatrix}
           0.65\\ 0.35\\ 0.65
        \end{bmatrix}$ & $\begin{bmatrix}
           0.66\\ 0.34\\ 0.66
        \end{bmatrix}$ & $\begin{bmatrix}
           0.67\\ 0.33\\ 0.67
        \end{bmatrix}$ & $\begin{bmatrix}
           0.68\\ 0.32\\ 0.68
        \end{bmatrix}$ & $\begin{bmatrix}
           0.69\\ 0.31\\ 0.69
        \end{bmatrix}$ & $\begin{bmatrix}
           0.71\\ 0.29\\ 0.71
        \end{bmatrix}$ & $\begin{bmatrix}
           0.75\\ 0.26\\ 0.75
        \end{bmatrix}$\\[1cm]
        \hline
        $d$ & $-0.08$ & $-0.04$ & $-0.02$ & $-0.01$ & 0 & 0.01 & 0.02 & 0.04 & 0.08\\
        \hline
    \end{tabular}
    \label{tab:dgp}
    \begin{tablenotes}
\footnotesize
\item[a] Assume the individual choice frequency $\bm\rho_i\sim_{iid}F$, where $F$ is a distribution over $[0,1]^3$ with mean $\bm\mu$ and covariance $\bm\Sigma_0=0.01\times I_3$. $\bm\mu_0$ lies on the facet of the collective-SS rationalizable polytope $Q$. Define $\bm\mu=\bm\mu_0+d\cdot[1,-1,1]^T$. An increase of $d$ implies $\bm\mu$ getting deeper inside the polytope, with $d=0$ indicating $\bm\mu=\bm\mu_0$ on the boundary, and $d<0$ indicating $\bm\mu$ lies inside $Q$.
\end{tablenotes}
\end{threeparttable}
\end{table}
\vspace{-0.4cm}
I consider 18 d.g.p.s combining nine different values of population mean $\bm\mu$ (Table \ref{tab:dgp}) and two within-subject correlation structures: (i). Independent responses (d.g.p.s 1–9): subjects treat repeated problems as fresh decisions; (ii). Perfect correlation (d.g.p.s 10–18): subjects always repeat their initial choice. Individual choice probabilities are drawn from $\bm\rho_i\sim_{iid} \mathcal{N}(\bm\mu,\bm\Sigma_0)$ with $\bm\Sigma_0 = 0.01 \times I_3$. Each subject faces 10 randomly selected binary problems (with replacement). Sample sizes are $N\in\{100,\ 200,\ 500\}$. Figure \ref{fig: test_dgp} visualizes the population distributions for three representative cases, illustrating how individual heterogeneity generates a cloud of choice probabilities around the population mean.

\begin{figure}[H]
\centering
\begin{minipage}{0.3\textwidth}
\centering
\includegraphics[width=0.9\linewidth,trim={6cm 11cm 1cm 5cm},clip]{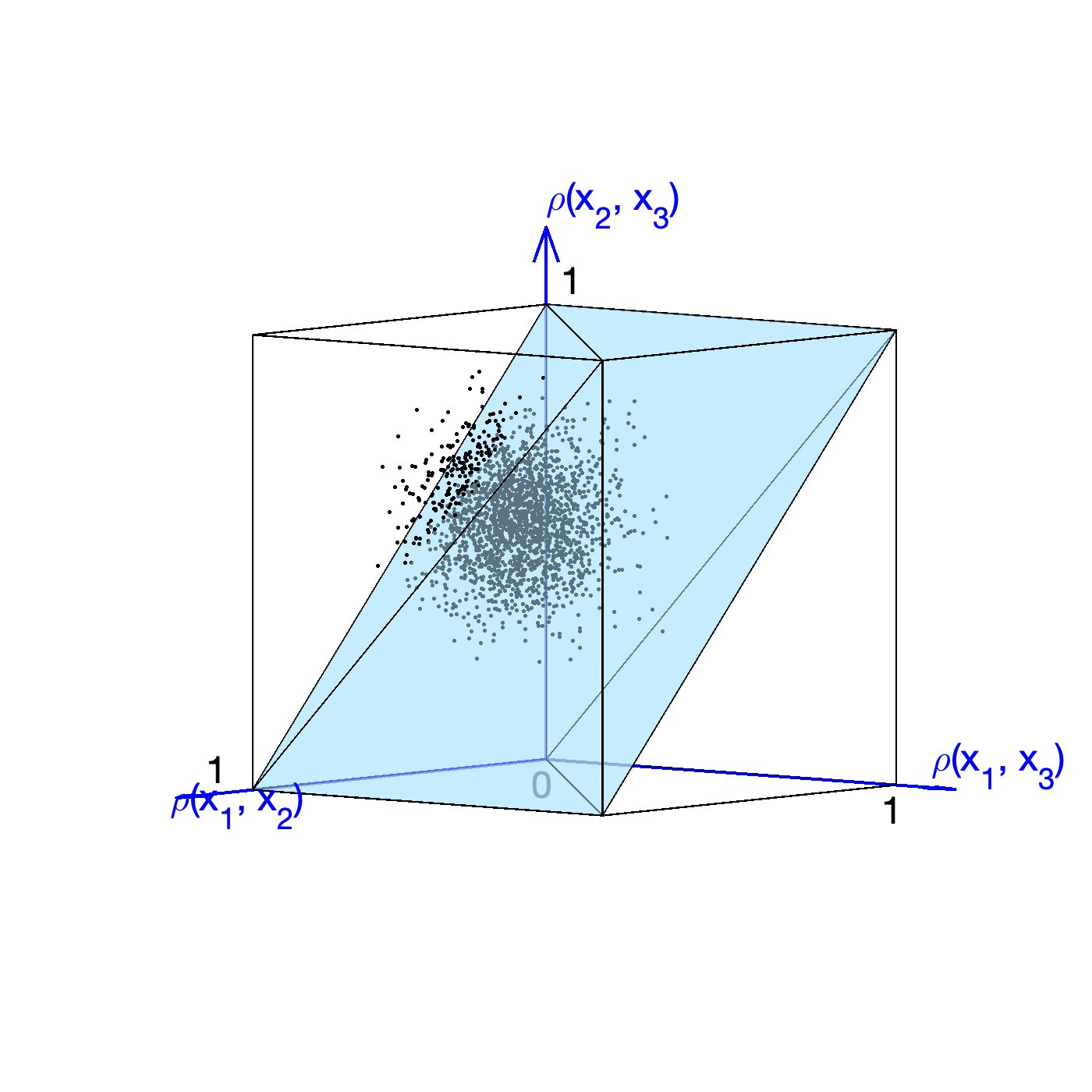}
\subcaption{$\bm\rho_i\sim_{iid}\mathcal{N}(\bm\mu_4,\bm\Sigma_0)$.}
\label{fig: test_interior}
\end{minipage}
\begin{minipage}{0.3\textwidth}
\centering
\includegraphics[width=0.9\linewidth,trim={6cm 11cm 1cm 5cm},clip]{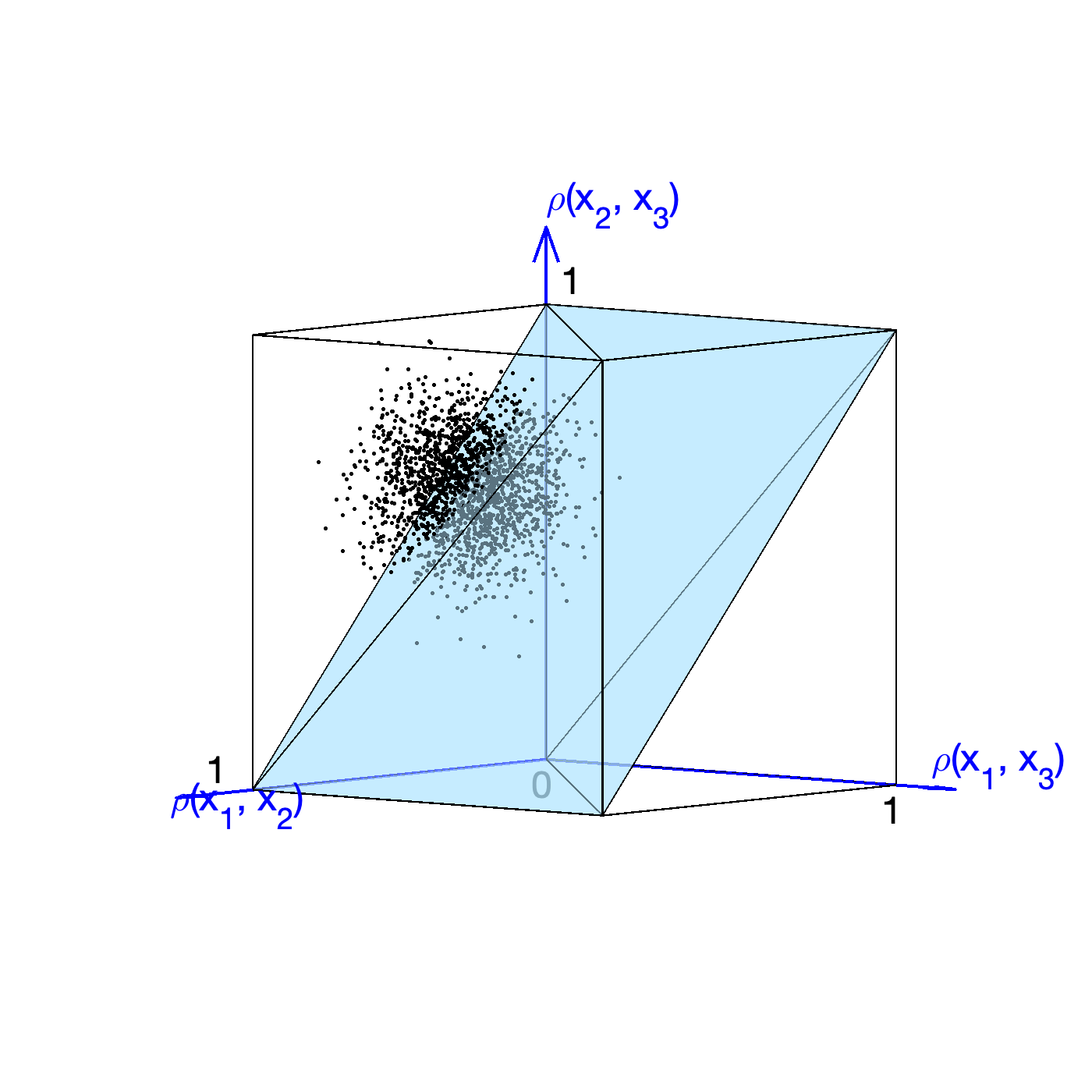}
\subcaption{$\bm\rho_i\sim_{iid}\mathcal{N}(\bm\mu_0,\bm\Sigma_0)$.}
\label{fig: test_boundary}
\end{minipage}
\begin{minipage}{0.3\textwidth}
\centering
\includegraphics[width=0.9\linewidth,trim={6cm 11cm 1cm 5cm},clip]{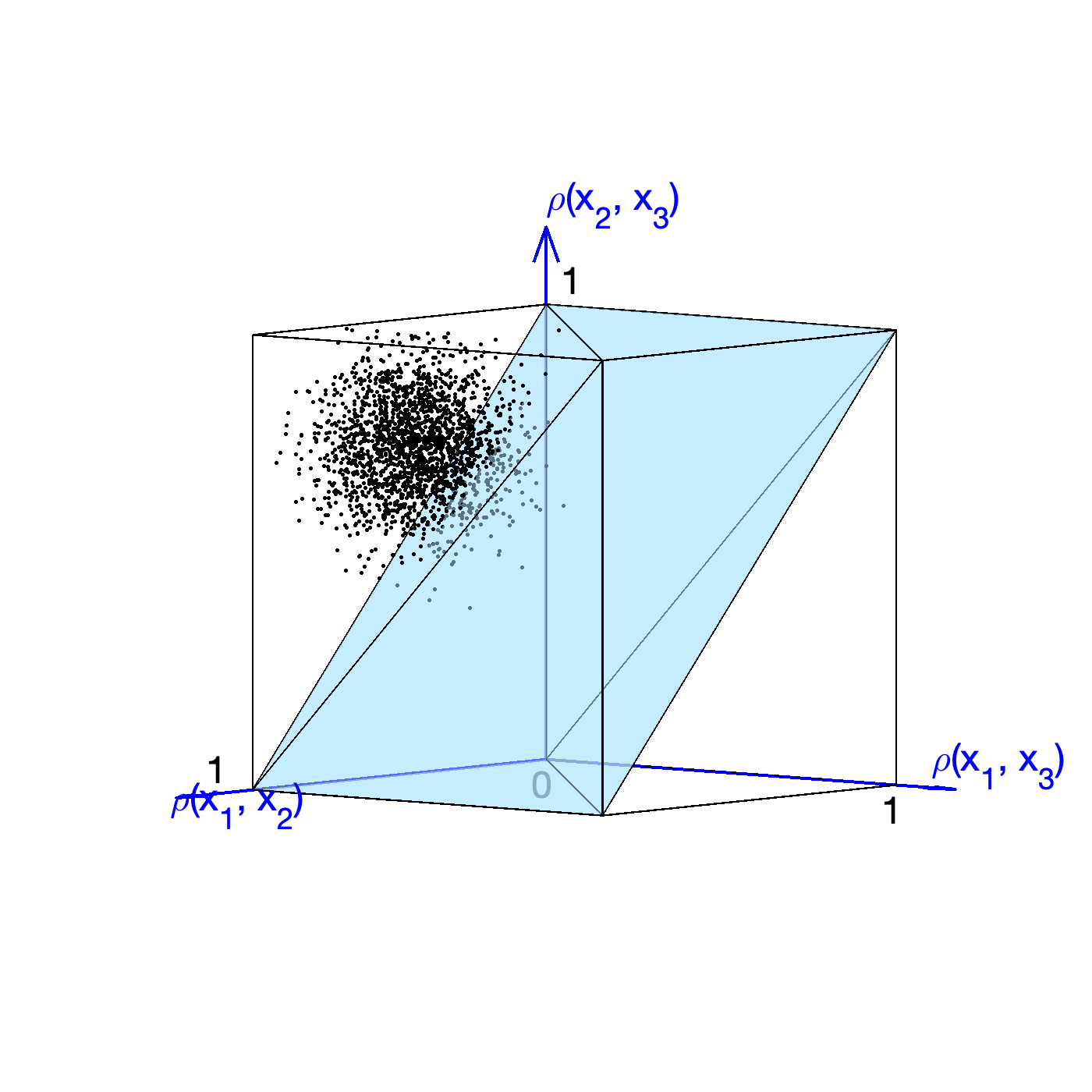}
\subcaption{$\bm\rho_i\sim_{iid}\mathcal{N}(\bm\mu_4',\bm\Sigma_0)$.}
\label{fig: test_outside}
\end{minipage}
\caption{Visualization of population distribution $\bm\rho_i$, with 2000 sample size, under different $\bm\mu$. \textbf{(a).} $\bm\mu_4$ is in the interior of $Q$. \textbf{(b).} $\bm\mu_0$ is on its boundary. \textbf{(c).} $\bm\mu_4'$ is outside the interior of $Q$.}
\label{fig: test_dgp}
\end{figure}
\vspace{-0.4cm}
I implement the numerical delta method with 5000 replications per d.g.p, using bootstrap sample size $N_b=N$, and two step sizes $\epsilon_N = N^{-1/3}$ or $\epsilon_N = N^{-1/6}$, satisfying $\epsilon_N\to0,\sqrt{N}\epsilon_N\to\infty$. The nominal size is set at $\alpha=0.05$.

\begin{table}[H]
    \centering
    \setlength{\tabcolsep}{10pt}
    \begin{threeparttable}
    \caption{Monte Carlo results: rejection rates\tnote{a}}
    \begin{tabular}{|cccccccccc|}
    \hline
      $N$  & $\bm\mu_4$ & $\bm\mu_3$ & $\bm\mu_2$ & $\bm\mu_1$ & $\bm\mu_0$ & $\bm\mu_1'$ & $\bm\mu_2'$ & $\bm\mu_3'$ & $\bm\mu_4'$\\
      \hline
       \multicolumn{10}{|p{12cm}|}{d.g.p. 1--9: Each decision is independent; step size $\epsilon_N = N^{-1/3}$.}\\
      \hline
      100  & 0.000 & 0.000 & 0.000 & 0.018 & 0.054 & 0.147 & 0.319 & 0.757 & 0.999\\
       200 & 0.000 & 0.000 & 0.001 & 0.010 & 0.048 & 0.214 & 0.495 & 0.950 & 1.000\\
       500 & 0.000 & 0.000 & 0.000 & 0.002 & 0.051 & 0.371 & 0.828 & 1.000 & 1.000\\
       \hline
       \multicolumn{10}{|l|}{d.g.p. 1--9: Each decision is independent; step size $\epsilon_N = N^{-1/6}$.}\\
       \hline
       100  & 0.000 & 0.000 & 0.002 & 0.017 & 0.056 & 0.154 & 0.334 & 0.761 & 0.999\\
       200 & 0.000 & 0.000 & 0.001 & 0.008 & 0.049 & 0.216 & 0.499 & 0.948 & 1.000\\
       500 & 0.000 & 0.000 & 0.000 & 0.002 & 0.050 & 0.360 & 0.821 & 1.000 & 1.000\\
       \hline
       \multicolumn{10}{|l|}{d.g.p. 10--18: Each decision replicates the first encounter; step size $\epsilon_N = N^{-1/3}$.}\\
       \hline
       100  & 0.000 & 0.000 & 0.007 & 0.019 & 0.053 & 0.119 & 0.245 & 0.590 & 0.980\\
       200 & 0.000 & 0.000 & 0.003 & 0.013 & 0.056 & 0.167 & 0.370 & 0.836 & 1.000\\
       500 & 0.000 & 0.000 & 0.000 & 0.004 & 0.053 & 0.263 & 0.652 & 0.994 & 1.000\\
       \hline
       \multicolumn{10}{|l|}{d.g.p. 10--18: Each decision replicates the first encounter; step size $\epsilon_N = N^{-1/6}$.}\\
       \hline
       100  & 0.000 & 0.000 & 0.006 & 0.020 & 0.056 & 0.137 & 0.260 & 0.602 & 0.981\\
       200 & 0.000 & 0.000 & 0.003 & 0.016 & 0.056 & 0.172 & 0.374 & 0.828 & 1.000\\
       500 & 0.000 & 0.000 & 0.000 & 0.003 & 0.050 & 0.261 & 0.656 & 0.995 & 1.000\\
       \hline
    \end{tabular}
    \label{tab:power-curve}
    \begin{tablenotes}
\footnotesize
\item[a] See Table \ref{tab:dgp} for definition of $\bm\mu$ vectors. Recall that $\bm\mu_0$ is on the boundary of $Q$. From $\bm\mu_0$ to $\bm\mu_4$, the population mean moves away from the polytope; from $\bm\mu_0$ to $\bm\mu_4'$, the population mean moves into the interior of the polytope. We keep the variance of the population distribution $\bm\Sigma_0$ the same across all d.g.p.s at $\bm\Sigma_0=0.01\times I_3$. In the d.g.p.s 1--9, subjects make each decision independently; while in the d.g.p.s 10--18, subjects replicate the decision they made when first encountering the choice problem.
    All entries are rejection rates computed from 5000 simulations, with sample sizes $N=\{100,200,500\}$ and bootstrap sample size $N_b=N$. The step size for the numerical delta is $\epsilon_N = N^{-1/3}$ or $\epsilon_N = N^{-1/6}$.
\end{tablenotes}
\end{threeparttable}
\end{table}

Table \ref{tab:power-curve} reports rejection rates under different d.g.p.s. The test maintains appropriate size control at the boundary $(\bm\mu_0)$, with empirical rejection rates close to the nominal $5\%$. Power increases monotonically with distance from the polytope and sample size.
In particular, comparing across d.g.p.s 1--9 and d.g.p.s 10--18, when decisions are independent across repetitions, a smaller sample size is required to achieve $95\%$ power compared to that when repetitions are perfectly correlated. Comparing within d.g.p.s 1--9 and d.g.p.s 10--18, the step size choice $\epsilon_N=N^{-1/6}$ yields slightly higher power than $\epsilon_N=N^{-1/3}$, although overall both choices perform similarly well. 

\section{Empirical Application: Two Existing Experiments}\label{sec: application}
The main contribution of this paper is the characterization of collective rationalizability (\autoref{thm:collective}) and its corresponding statistical testing (\autoref{thm:uniform-test}). Together, they enable testing for random choice models while allowing for heterogeneity, which is especially important when choices are collected from possibly heterogeneous subjects but researchers have to assume representative agent for estimation or testing. The framework addresses the empirical challenge that researchers are often limited to finite observations of individuals making few or no repeated choices from the same choice set. In the following, I demonstrate the practical applicability of collective rationalizability by re-analyzing data from two existing experiments, and show that comparison difficulty can be not only theoretically but also empirically confused with heterogeneity in both cases.

\textbf{Experiment 1: Snack food choices \citep{clithero2018improving}.} This paper studies binary choices among snack foods. Subjects $(N=31)$ choose between all possible pairs from 17 snacks, yielding 136 binary comparisons per subject. Each subject faces each pair exactly once, resulting in 4,216 total observations. The within-subject design provides complete choice data across all pairs, but with only single observation per subject-pair combination.\footnote{In particular, I only use the Two-Alternative Forced-Choice Task (2AFC-Task) for the test. More discussion on the Yes-No Choice Task (YN-Task) in \citet{clithero2018improving} will be discussed later.}

\textbf{Experiment 2: Perceptual similarity of images \citep{prashnani2018pieapp}.} The experiment examines perceptual judgments of image similarity. Given a reference image and two distorted versions based on it, subjects indicate which distortion appears more similar to the reference. The experiment uses 40 reference images, each evaluated alongside 15 distorted versions of itself, yielding 16 total versions per reference. With 40 responses collected per comparison across the resulting 4,800 binary choice problems, this yields 192,000 total observations.\footnote{In particular, I only use their test set, where 40 human responses for each pairwise comparison are collected.}

These datasets illustrate two key strengths of our collective rationalizability approach:

First, neither experiment has natural ``ground truth'' preferences. In food choices, researchers want to remain agnostic about which snacks people should prefer. For image similarity, the complexity of distortion types (blur, noise, rotations, etc.) makes it hard to assess the objective similarity.\footnote{Actually the primary research question asked in the original paper \citet{prashnani2018pieapp} is: what is the human perception of these different distortions, and how does it differ from existing metrics? Therefore, researchers of course do not want to impose strong assumptions about image similarities before the experiment.} Other existing methods \citep{tversky1969substitutability, enke2023quantifying} require either known preferences or specially designed tasks with objective answers, neither of which can be applied here.

Second, both datasets exemplify common data limitations that prevent direct application of existing revealed-preference approaches. In \citet{clithero2018improving}'s data, single observations per subject-pair makes estimating individual probabilities nontrivial. In \citet{prashnani2018pieapp}'s data, individual identifiers are unavailable publicly. Tests of stochastic transitivity require estimating individual choice probabilities, which is impossible in both cases. Our collective framework is designed precisely for such scenarios, testing models using aggregate patterns rather than requiring repeated individual observations.

\subsection{Application 1: Snack Food Choices}
I first apply the methods to the experimental data from \cite{clithero2018improving}, who studies binary choices among $n=17$ different snack foods. In their two-alternative forced choice (2AFC) task, subjects $(N=31)$ chose which of two items they would prefer to eat at the end of the experiment. Each subject faced all 136 possible binary pairs exactly once. Examples of the snacks are shown in Figure \ref{fig:candy-pictures}.

\begin{figure}[H]
\centering
\begin{minipage}{0.21\textwidth}
\centering
\includegraphics[width=\linewidth]{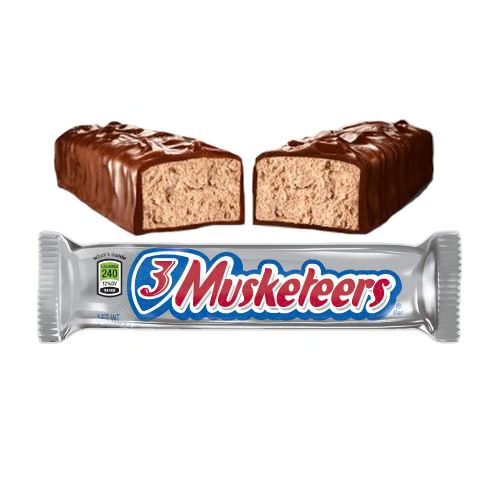}
\subcaption{3 Musketeers}
\end{minipage}
\begin{minipage}{0.21\textwidth}
\centering
\includegraphics[width=\linewidth]{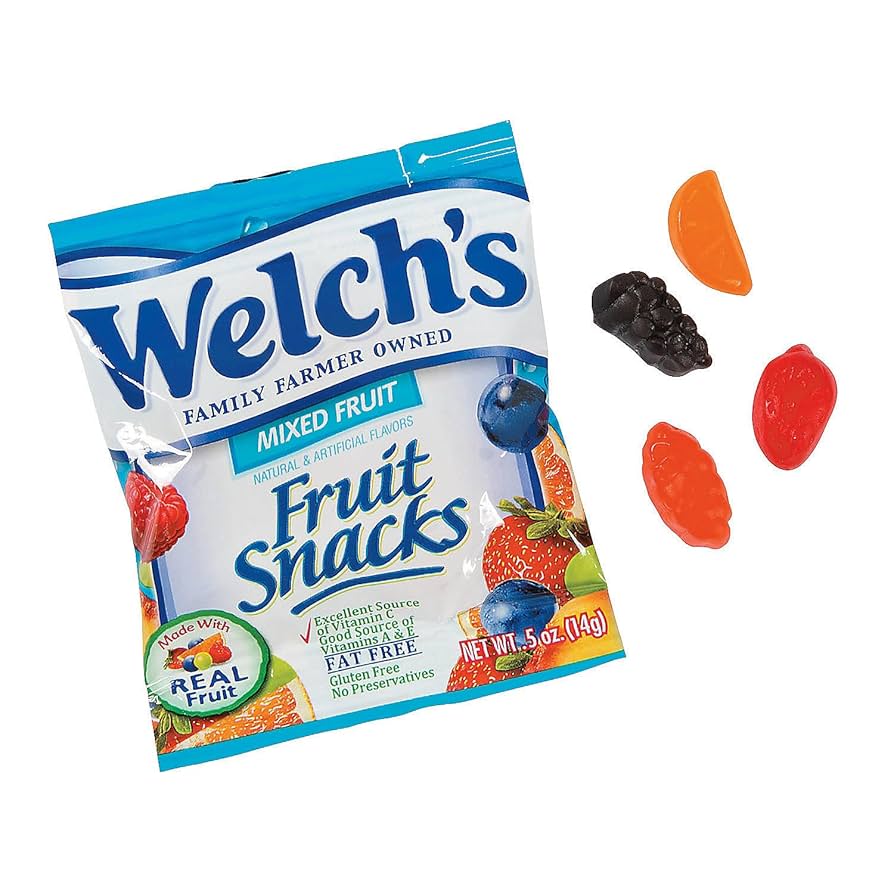}
\subcaption{Fruit Snacks}
\end{minipage}
\begin{minipage}{0.21\textwidth}
\centering
\includegraphics[width=\linewidth]{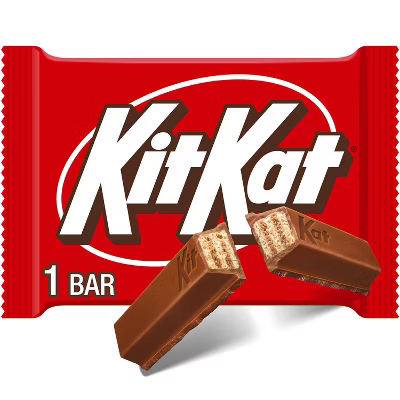}
\subcaption{Kit Kat}
\end{minipage}
\begin{minipage}{0.25\textwidth}
\centering
\includegraphics[width=\linewidth]{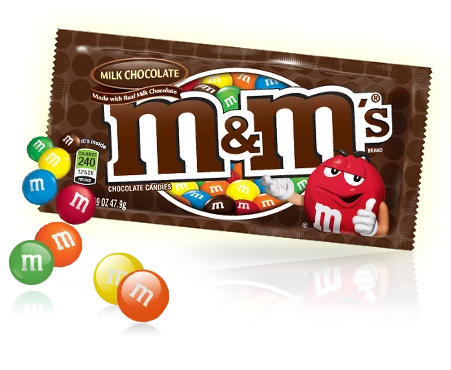}
\subcaption{M\&M's Chocolate}
\end{minipage}
\caption{Examples of the types of snack foods involved in the experiment.}
\label{fig:candy-pictures}
\end{figure}

\textbf{Comparison of the various models.} I analyze the aggregate choice frequencies, i.e., the fraction of subjects choosing snack $i$ over $j$, and examine the violation rate under the various models, SS/MU/WU and Collective SS. The results are summarized in Table \ref{tab:snack-data-violation-rate}. We see that Simple Scalability is heavily violated when treating the data as coming from a representative agent: among the 680 possible triples of snacks, 41.8\% violate strong stochastic transitivity, and this number increases to 99\% when we check all subsets of size 5. This raises a key question: does this violation stem from varying comparison difficulty across pairs, or can heterogeneity alone explain it? 

\begin{table}[H]
  \centering
  \begin{threeparttable}
  \caption{Rationalizability of the dataset under different models\tnote{a}}
  \label{tab:snack-data-violation-rate}
    \begin{tabular}{|cccccc|}
    \hline
      \makecell{Choice \\ [-3pt] subset size} & \makecell{Total subset \\ [-3pt] count} & \makecell{Individual SS \\ [-3pt] violation rate} & \makecell{Individual MU \\ [-3pt] violation rate} & \makecell{Individual WU \\ [-3pt] violation rate} & \makecell{Collective SS \\ [-3pt] violation rate} \\
      \hline
      3 & 680 & 0.418 & 0.107 & 0.040 & 0.000 \\
      4 & 2,380 & 0.856 & 0.324 & 0.130 & 0.000 \\
      5 & 6,188 & 0.990 & 0.577 & 0.264 & 0.000 \\
      \hline
    \end{tabular}
     \begin{tablenotes}
\footnotesize
\item[a] We enumerate all choice option subsets of size 3, 4 and 5 in the dataset with full pairwise choice rates, and report the total number of subsets, and the proportion with population choice rates violating individual SS/MU/WU and collective SS.
\end{tablenotes}
\end{threeparttable}
\end{table}

A researcher not accounting for heterogeneity across subjects may observe the significant reduction in violation rates after introducing comparison difficulty and conclude that comparison difficulty is the key factor. However, using the characterization developed in this paper, we see that all subsets with sizes 3, 4, and 5 are \textit{collectively} rationalizable by Simple Scalability, which cannot be rejected with statistical power, implying that heterogeneity alone can be sufficient for explaining all violations of Simple Scalability in this case.

\textbf{Examples of transitivity violation.} Consider a concrete example: Fruit Snacks, M\&M's, and KitKat. The aggregate choice rates are:
\begin{equation*}
    \rho(\mathrm{KitKat}, \mathrm{M\&M's}) = 0.839,\quad \rho(\mathrm{M\&M's}, \mathrm{FruitSnacks}) = 0.581,\quad \rho(\mathrm{KitKat}, \mathrm{FruitSnacks}) = 0.613.
\end{equation*}
They violate strong stochastic transitivity if assuming a representative agent: the  preference for KitKat over M\&M's and M\&M's over Fruit Snacks should imply an even stronger preference for KitKat over Fruit Snacks, hence $\rho(\mathrm{KitKat}, \mathrm{FruitSnacks})>0.839$, yet we observe only 0.613. This pattern can be explained by \textit{comparison difficulty} alone, that KitKat is more difficult to be compared with Fruit Snacks than with M\&M's. However, it can also be explained by \textit{heterogeneity} alone. The choice rates are \textit{collectively} rationalized by Simple Scalability: imagine around half the subjects prefer chocolate-based snacks while half prefer fruit flavors, and among the two chocolate-based snacks, most prefer KitKat. Each individual could satisfy Simple Scalability perfectly, yet their aggregation produces the observed violation.

In another concrete example: Fruit Snacks, M\&M's, and Reese's, the aggregate choice rates are:
\begin{equation*}
    \rho(\mathrm{M\&M's}, \mathrm{FruitSnacks}) = 0.581,\quad \rho(\mathrm{FruitSnacks}, \mathrm{Reese's}) = 0.581,\quad \rho(\mathrm{Reese's}, \mathrm{M\&M's}) = 0.581.
\end{equation*}
They violate even weak stochastic transitivity, which implies that violations of Simple Scalability in this example cannot be explained by comparison difficulty alone. However, they are again collectively rationalizable by Simple Scalability. In fact, one way to collectively rationalize the choice is reminiscent of the Condorcet paradox \citep{condorcet1785}; the only difference is that instead of deterministic choices, each individual chooses the preferred alternative with probability 0.87.

\textbf{Evidence of heterogeneity.} The collective rationalizability reveals that heterogeneity alone suffices to explain all violations of Simple Scalability in this dataset. Fortunately, independent validation is available: \citet{clithero2018improving}'s experimental design includes a separate Yes-No (YN) task that allows testing for heterogeneity directly. In the YN task, subjects are asked whether they preferred each of the 17 snacks over a common reference snack (3 Musketeers candy), with 10 repetitions per snack. Although one cannot directly check stochastic transitivity using the YN task since there is no cycle structure, these repetitions help us look at individual preferences more closely. Table \ref{tab:snack-data-heterogeneity} summarizes subjects' responses over the two questions involving Fruit Snacks and M\&M's.

\begin{table}[H]
  \centering
  \begin{threeparttable}
  \caption{Summary of subjects' responses in the YN task for certain questions\tnote{a}}
  \label{tab:snack-data-heterogeneity}
    \begin{tabular}{|ccc|}
    \hline
      Fruit Snacks $\succ$ 3 Musketeers? & M\&M's $\succ$ 3 Musketeers? & Number of subjects \\
      \hline
      ``Yes'' all 10 times & ``No'' all 10 times & 9 \\
      ``No'' all 10 times & ``Yes'' all 10 times & 6 \\
      ``Yes'' all 10 times & ``Yes'' all 10 times & 7 \\
      ``No'' all 10 times & ``No'' all 10 times & 4 \\
      \multicolumn{2}{|c}{Other response patterns} & 5 \\
      \hline
      \multicolumn{2}{|c}{Total} & 31 \\
      \hline
    \end{tabular}
     \begin{tablenotes}
\footnotesize
\item[a] The majority of subjects (26 out of 31) provided consistent responses to both questions across all 10 repetitions. Among these 26 subjects, responses were distributed across all four possible Yes/No combinations, indicating substantial heterogeneity in individual preferences.
\end{tablenotes}
\end{threeparttable}
\end{table}

Table \ref{tab:snack-data-heterogeneity} shows that the majority of subjects provided \textit{consistent} responses across all 10 repetitions of both questions. Among these subjects, 9 exhibited the preference ordering Fruit Snacks $\succ$ 3 Musketeers $\succ$ M\&M's, while 6 others displayed the exactly opposite ordering M\&M's $\succ$ 3 Musketeers $\succ$ Fruit Snacks. This finding aligns with our earlier conjecture that the violation of Simple Scalability may stem from a population with polarized preferences between Fruit Snacks and M\&M's.  Additionally, 11 subjects responded with either ``Yes'' or ``No'' consistently for both questions across all 10 repetitions. While these responses are inconclusive regarding the relative preference between Fruit Snacks and M\&M's, they provide further evidence of substantial preference heterogeneity among subjects.

Furthermore, using the YN task data, I perform a statistical test of whether all individuals share the same preference signs (i.e., whether $\sgn(\rho_i-\tfrac{1}{2})$ is identical across individuals). A likelihood-ratio test rejects this hypothesis ($p<0.001$), confirming the preference heterogeneity in the population (details in Appendix \ref{appdx:test-heter}).

\subsection{Application 2: Image Similarity Perception}
I next apply the framework to data from \citet{prashnani2018pieapp}, who study how people judge similarity between images. In their experiment, subjects compare pairs of distorted images and identify which is more similar to a reference image (Figure \ref{fig: image-eg}). The dataset contains 40 reference images with 15 distortion types each, yielding 4,800 binary comparisons with 40 responses per comparison (192,000 total observations).\footnote{In particular, we only use their test set, where 40 human responses for each pairwise comparison are collected.}

In \citet{prashnani2018pieapp}, they model these judgments using Simple Scalability with a representative agent, estimating a single ``similarity score'' for each distorted image. 
\begin{align}
    \rho(A,B) = F(s_A,s_B) = \frac{1}{1+e^{s_A-s_B}}.
\end{align}
While this perceptual task differs from economic choice, the framework can be applied by treating similarity judgments analogously to preferences. Each person's perception of how similar image $A$ is to the reference constitutes their ``preference'' for $A$. The comparison difficulty between images $A$ and $B$ captures the ``similarity of similarity'': how difficult is it to compare the similarity of image $A$ versus that of image $B$ to the same reference image? This framework allows testing whether the representative-agent Simple Scalability model adequately captures the data, or whether heterogeneity and comparison difficulty play important roles.

Table \ref{tab:empirical-result-direct-check} shows substantial violations of Simple Scalability if assuming a representative agent. Among size-3 subsets, approximately 30\% violate strong stochastic transitivity; this rises to 70\% for size-4 subsets and exceeds 90\% for size-5 subsets. The representative-agent SS model clearly fails to capture the aggregate patterns.

\begin{table}[H]
  \centering
  \begin{threeparttable}
  \caption{Rationalizability of the dataset under different models\tnote{a}}
  \label{tab:empirical-result-direct-check}
    \begin{tabular}{|cccccc|}
    \hline
      \makecell{Choice \\ [-3pt] subset size} & \makecell{Total subset \\ [-3pt] count} & \makecell{Individual SS \\ [-3pt] violation rate} & \makecell{Individual MU \\ [-3pt] violation rate} & \makecell{Individual WU \\ [-3pt] violation rate} & \makecell{Collective SS \\ [-3pt] violation rate} \\
      \hline
      3 & 22,400 & 0.298 & 0.034 & 0.009 & 0.001 \\
      4 & 72,800 & 0.700 & 0.117 & 0.032 & 0.003 \\
      5 & 174,720 & 0.926 & 0.246 & 0.073 & 0.007 \\
      \hline
    \end{tabular}
     \begin{tablenotes}
\footnotesize
\item[a] We enumerate all choice option subsets of size 3, 4 and 5 in the dataset with full pairwise choice rates, and report the total number of subsets, and the proportion with population choice rates violating individual SS/MU/WU and collective SS.
\end{tablenotes}
\end{threeparttable}
\end{table}

Remarkably, allowing for heterogeneity virtually eliminates these violations. Collective SS is violated in only 0.1\% of size-3 subsets and remains below 1\% even for size-5 subsets. 
This finding has important implications: researchers modeling perceptual judgments may not need complex models with varying comparison difficulty. Instead, acknowledging that people have heterogeneous perceptions of similarity may suffice. Here heterogeneous perceptions mean that different individuals genuinely perceive image similarity differently, and what appears as intransitivity in aggregate data might reflect diverse but internally consistent individual perceptions.

\begin{table}[H]
  \centering
  \begin{ThreePartTable}
  \caption{Statistical test result of collective Simple Scalability\tnote{a}}
  \label{tab:empirical-stat-test}
    \begin{tabular}{|cc|}
    \hline
      \makecell{Choice \\ subset size} & \makecell{Rejection instances \\ at $\alpha=0.05$} \\
      \hline
      3 & 1 \\
      4 & 13 \\
      5 & 66 \\
      \hline
    \end{tabular}
    \begin{tablenotes}
    \footnotesize
    \vspace{0.3cm}
    \parbox{\textwidth}{%
      \item[a] I conduct the statistical testing for Collective Simple Scalability from \autoref{thm:uniform-test}, and report the number of instances where collective SS is rejected at the 0.05 significance level. To avoid some p-hacking concerns, I employ sample splitting. Using the first half of the data, I identify which choice subsets violate Collective SS based on the theoretical characterization. For these subsets that exhibit theoretical violations, I then conduct the statistical test from \autoref{thm:uniform-test} on the second half of the sample, reporting the number of rejections at the 0.05 significance level.
    }
    \end{tablenotes}
  \end{ThreePartTable}
\end{table}

\begin{figure}[H]
\centering
\begin{minipage}{0.32\textwidth}
    \centering
    \includegraphics[width=0.9\linewidth]{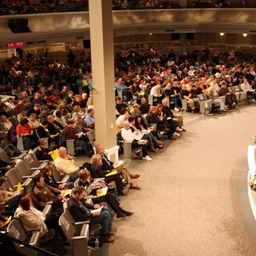}
    \subcaption*{Reference image $R$}
    \label{fig:ref}
\end{minipage}
\begin{minipage}{0.64\textwidth}
    \centering
    \begin{minipage}{0.48\textwidth}
        \centering
        \includegraphics[width=0.9\linewidth]{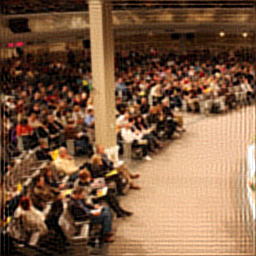}
        \subcaption{Distorted image $A$}
        \label{fig:distort-1}
    \end{minipage}
    \begin{minipage}{0.48\textwidth}
        \centering
        \includegraphics[width=0.9\linewidth]{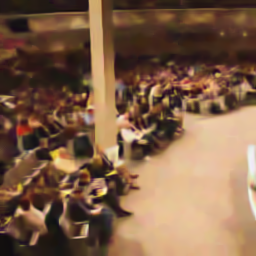}
        \subcaption{Distorted image $B$}
        \label{fig:distort-2}
    \end{minipage}\\[0.5em]
    \begin{minipage}{0.48\textwidth}
        \centering
        \includegraphics[width=0.9\linewidth]{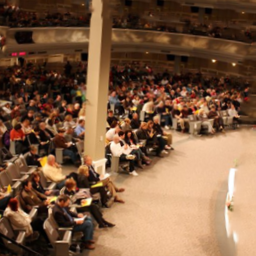}
        \subcaption{Distorted image $C$}
        \label{fig:distort-3}
    \end{minipage}
    \begin{minipage}{0.48\textwidth}
        \centering
        \includegraphics[width=0.9\linewidth]{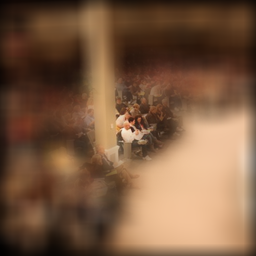}
        \subcaption{Distorted image $D$}
        \label{fig:distort-4}
    \end{minipage}
\end{minipage}
\caption{Examples of reference and distorted images. In \citet{prashnani2018pieapp}'s Amazon Mechanical Turk (MTurk) experiments, there are 40 reference images, and 15 different distortions are applied to each reference image. Subjects are asked to choose between all \textit{pairwise comparison}: Which image is more similar to the reference? In total, there are 4800 binary choice problems, and 40 responses for each pairwise comparison are collected. Individual-level data regarding which responses come from the same individual is not available. \\
\textbf{(a)(b).} Which image, $A$ or $B$, is more similar to the reference $R$? In this example of pairwise image comparison, 66.7\% of the subjects choose image $A$ as the more similar.\\ \textbf{(c)(d).} Which image, $C$ or $D$, is more similar to the reference $R$? In this example of pairwise image comparison, most people (92.5\%) choose image $C$ as the more similar. }
\label{fig: image-eg}
\end{figure}


\subsection{Comparison of Model Restrictiveness}
At this point, one might wonder: does Collective Simple Scalability fit the empirical data well simply because it is too permissive? To explore this question, I compute the relative volumes of the rationalizable sets for each model, or equivalently, the proportion of uniformly random choice data that each model would accept.

\begin{table}[H]
  \centering
  \begin{threeparttable}
  \caption{Relative volume of the rationalizable set of different models\tnote{a}}
  \label{tab:empirical-result-random-data}
    \begin{tabular}{|ccccc|}
    \hline
      \makecell{Choice \\ [-3pt] subset size} & Individual SS & Individual MU & Individual WU & Collective SS \\
      \hline
      3 & 0.250 & 0.500 & 0.750 & 0.667 \\
      4 & 0.008 & 0.092 & 0.375 & 0.250 \\
      5 & 0.00002 & 0.005 & 0.117 & 0.048 \\
      \hline
    \end{tabular}
     \begin{tablenotes}
\footnotesize
\item[a] 
Volumes computed analytically from model characterizations, except Individual MU for n=4,5 which required Monte Carlo simulation due to non-convexity.
\end{tablenotes}
\end{threeparttable}
\end{table}

Table \ref{tab:empirical-result-random-data} shows that Weak Utility (adding comparison difficulty only) is actually \textit{more permissive} than Collective Simple Scalability (adding heterogeneity only). For $n=3$, Collective SS accepts 66.7\% of the choice space compared to WU's 75\%. This gap widens dramatically as $n$ increases: for $n=5$, Collective SS accepts only 4.8\% versus WU's 11.7\%.
This finding has important implications. Despite being more restrictive, Collective Simple Scalability achieves lower violation rates on the real experimental datasets than Weak Utility for both empirical applications (see, Tables \ref{tab:snack-data-violation-rate}, \ref{tab:empirical-result-direct-check}). Following the idea of \citet{selten1991properties} and \citet{beatty2011demanding}, heterogeneity does a better job explaining the violation of Simple Scalability than comparison difficulty in both ways. 

\section{Conclusion}\label{sec: conclusion}
This paper develops a revealed preference framework for testing random choice models while explicitly allowing for heterogeneity. The central contribution, collective rationalizability, provides both theoretical characterizations and statistical tests to determine whether heterogeneity alone, comparison difficulty alone, or both required, explains violations of Simple Scalability when data is collected from possibly heterogeneous subjects. This framework addresses practical data limitations, the absence of known preferences and limited individual observations, that existing approaches often face in practice.

The empirical applications demonstrate the framework's applicability. In both the food choice and image similarity experiments, violations of Simple Scalability that appear substantial under representative-agent assumptions largely disappear when heterogeneity is properly accounted for through Collective Simple Scalability. Without this framework, these violations might be attributed to comparison difficulty. This demonstrates that not only theoretically, but also empirically, heterogeneity can be confused with comparison difficulty to explain the apparent intransitivities.

Several extensions would enhance the framework's theoretical foundations and empirical applicability:

\textbf{Theoretical extensions to multinomial choices.} Extending collective rationalizability from binary to multinomial choice with comparison difficulty would significantly broaden empirical applications. Many experiments involve choices from sets larger than two, and the framework's insights about heterogeneity versus comparison difficulty would prove valuable in these richer settings.

\textbf{Identification of difficulty with new experiments.} Future research could validate the test for collective rationalizability through carefully designed experiments. Building on the theoretical results for recovering the ranking of comparison difficulty developed in this paper, such experiments could empirically test different difficulty metrics in the laboratory settings.

\textbf{Uniformly valid statistical testing for transitivity.} Developing uniformly valid tests for individual rationalizability remains an open challenge. Such tests would enable more precise comparisons between heterogeneity-based and difficulty-based explanations of choice patterns, particularly when individual-level data is available.

\textbf{Micro-foundations.} The violations of Collective Simple Scalability provide valuable cases where heterogeneity alone cannot explain observations. New theories understanding when and why such violations of Collective Simple Scalability arise, and what they reveal about the underlying cognitive foundations of decision-making, is an interesting question for the future research.

\newpage
\bibliographystyle{apalike}
\bibliography{ref}

\newpage
\appendix
\section{Appendix}
\subsection{Half-space Representation for Rationalizability without Heterogeneity}\label{appdx: prf-a-b}

We first consider $\bm A_k$, $\bm b_k$ for SS, which requires us to write the inequality constraints from strong stochastic transitivity into the standard linear inequality form $\bm A \bm \rho \le \bm b$. Let $\pi_k: [1,\dots,n] \to [1,\dots,n]$, $k=1,\dots,n!$ be the different permutations over the set $[1,\dots,n]$. Also, write $\pi_k(u)$ alternatively as $\pi_k^u$ for conciseness. The inequality constraints can thus be written as 
\begin{align}
    d(\pi_k^u \pi_k^v, \pi_k^u\pi_k^w; \bm\rho) &\le c(\pi_k^u \pi_k^v, \pi_k^u\pi_k^w) \quad \forall 1\le u < v < w \le n, \ k=1,\dots, n! \\
    d(\pi_k^v \pi_k^w, \pi_k^u\pi_k^w; \bm\rho) &\le c(\pi_k^v \pi_k^w, \pi_k^u\pi_k^w) \quad \forall 1\le u < v < w \le n, \ k=1,\dots, n!
\end{align}
over all triplets of rankings $(u,v,w)$ and all $n!$ permutations.
The quantities $d$ and $c$ are defined as
\begin{equation}
    d(xy,zw; \bm\rho) = \left\{ \begin{matrix}
        \rho(x,y) - \rho(z,w) & \text{if } x < y \text{ and } z < w \\
        -\rho(y,x) - \rho(z,w) & \text{if } x > y \text{ and } z < w  \\
        \rho(x,y) + \rho(w,z) & \text{if } x < y \text{ and } z > w \\
        -\rho(y,x) + \rho(w,z) & \text{if } x > y \text{ and } z > w \\
    \end{matrix}
    \right.,\ 
    c(xy,zw) = \left\{ \begin{matrix}
        0 & \text{if } x < y \text{ and } z < w \\
        -1 & \text{if } x > y \text{ and } z < w \\
        1 & \text{if } x < y \text{ and } z > w \\
        0 & \text{if } x > y \text{ and } z > w \\
    \end{matrix}
    \right.
\end{equation}
in such a way that $d(xy,zw; \bm\rho) \le c(xy,zw)$ is equivalent to the expression $\rho(x,y) \le \rho(z,w)$, but rewritten to follow the $\frac{n(n-1)}{2}$-dimensional vector representation of $\bm\rho$, namely only containing values $\rho(x_i, x_j)$ with indices $i<j$. For a given set of values $(x,y,z,w)$, $d(xy,zw; \bm\rho)$ is a linear function of $\bm\rho$, whereas $c(xy,zw)$ is a constant. Therefore, these are linear inequalities of the form $\bm A \bm\rho \le \bm b$. For each given permutation $\pi_k$, stacking all such linear inequalities together for all triplets $(u,v,w)$ gives the expression for $\bm A_k$ and $\bm b_k$.

We then consider $\bm A_{k,\ell}$, $\bm b_{k,\ell}$ for MU. Recall that for MU we need to break down the entire space of choice rates into simplexes based on different $\bm\rho^* \in R_{\mathrm{deta}}$ and different orderings of the choice magnitudes in $\bm\rho$.

Let $\Pi$ denote all permutations over the set $\{1,2,\dots,n\}$. For each of the $n!$ possible permutations $\pi\in \Pi$, we correspond it to one strict preference over the $n$ options, representing $x_{\pi(1)} \succ \dots \succ x_{\pi(n)}$. In fact, we can choose any strict preference as the ``canonical preference'' (WLOG we choose $x_1 \succ \dots \succ x_n$), and then any strict preference $\sigma$ can be converted to this canonical preference by re-mapping the indices $1,\dots,n$ to $\pi(1),\dots,\pi(n)$.

Furthermore, let $\mathcal{T}$ denote all permutations over the set $\{1,2,\dots,\frac{n(n-1)}{2}\}$.  For each of the $(\frac{n(n-1)}{2})!$ possible permutations $\tau\in \mathcal{T}$, we correspond it to one ordering over the $\frac{n(n-1)}{2}$ pairwise choice rates. representing $\left| \rho(x_{\tau_1(1)}, x_{\tau_2(1)}) - \frac{1}{2} \right| \le \dots \le \left| \rho(x_{\tau_1(\frac{n(n-1)}{2})}, x_{\tau_2(\frac{n(n-1)}{2})}) - \frac{1}{2} \right|$. Here we identify each value in $\{1,2,\dots,\frac{n(n-1)}{2}\}$ with an index pair in $\{(i,j) \,|\, 1\le i< j\le n\}$ in the same way that $\bm\rho \in \R^{\frac{n(n-1)}{2}}$ corresponds to $\rho(x_i, x_j)$, and define $(\tau_1(k),\tau_2(k))$ as the index pair corresponding to $\tau(k)$ for $k=1,\dots,\frac{n(n-1)}{2}$.

Not all $\tau \in \mathcal{T}$ satisfy the MUM constraints. In fact, under the canonical preference we chose above, we have $\rho(i,j)\ge \frac{1}{2}$ for all $1\le i < j \le n$, and hence the choice rate ordering reduces to $\rho(x_{\tau_1(1), \tau_2(1)}) \le \dots \le \rho(x_{\tau_1(\frac{n(n-1)}{2}), \tau_2(\frac{n(n-1)}{2})})$. In order for $\tau$ to satisfy MUM, it must satisfy $\tau^{-1}(i,k) \ge \min\left(\tau^{-1}(i,j), \ \tau^{-1}(j,k)\right)$ for all triplets $1\le i < j < k \le n$. Denote this set of $\tau$ satisfying MUM under the canonical preference order as $\mathcal{T}^*$, with $L = |\mathcal{T}^*|$.

Now we are ready to characterize the system of inequalities under any given strict preference $\sigma$ and choice rate ranking $\tau$. We first characterize the conditions under the canonical preference, and then generalize it to arbitrary canonical preference via index re-mapping.

For the canonical strict preference $x_1 \succ \dots \succ x_n$, we have a total of $L$ polytopes in $\R^{\frac{n(n-1)}{2}}$, each of which corresponds to one $\tau \in \mathcal{T}^*$ and is a simplex. the conditions are
\begin{align*}
    &\frac{1}{2} \le \rho(i,j) \le 1, \quad\quad \forall 1 \le i < j \le n, \\
    &\rho(\tau_1(k), \tau_2(k)) \le \rho(\tau_1(k+1), \tau_2(k+1)), \quad\quad \forall k=1,\dots, \frac{n(n-1)}{2}-1.
\end{align*}

With a general strict preference $\pi$, we essentially need to replace each $i=1,\dots,n$ with $\pi(i)$. The caveat is that this will lead to $\rho(i,j)$ terms where $i>j$, which do not directly exist in the $\bm\rho$ vector and need to converted into $1-\rho(j,i)$. In fact, let $\rho[i,j]$ denote the ``unordered'' indexing of $\rho$, such that $\rho[i,j] = \rho[j,i] = \rho(i,j)$ for $i<j$, and we can write $\rho(i,j) = \sgn(j-i) \cdot \rho[i,j] + \mathbbm{1}[i>j]$. With this notation, we can perform the index remapping with $\pi$ as
\begin{align*}
    &\frac{1}{2} \le \sgn(\pi(j)-\pi(i)) \cdot \rho[\pi(i),\pi(j)] + \mathbbm{1}[\pi(i)>\pi(j)]  \le 1, \quad\quad \forall 1 \le i < j \le n,
\end{align*}
and 
\begin{align*}
    & \sgn(\pi(\tau_2(k))-\pi(\tau_1(k))) \cdot \rho[\pi(\tau_1(k)),\pi(\tau_2(k))] + \mathbbm{1}[\pi(\tau_1(k))>\pi(\tau_2(k))] \le \\
    &\quad\quad\quad \sgn(\pi(\tau_2(k+1))-\pi(\tau_1(k+1))) \cdot \rho[\pi(\tau_1(k+1)),\pi(\tau_2(k+1))] + \mathbbm{1}[\pi(\tau_1(k+1))>\pi(\tau_2(k+1))], \\ 
    &\quad\quad\quad\quad\quad\quad \forall k=1,\dots, \frac{n(n-1)}{2}-1.
\end{align*}

For each $\pi\in \Pi$ and $\tau\in \mathcal{T}^*$, we have one set of inequality conditions above, in the standard linear inequality form $\bm A \bm\rho \le \bm b$, which corresponds to one simplex. Since $|\Pi| = n!$ and $|\mathcal{T^*}| = L$, this gives the form of all $\bm G_{k,\ell}$ and $\bm b_{k,\ell}$.

\newpage
\subsection{Proof of Lemma \ref{lem:extreme-point}}

\begin{proof}
    We first show the following lemma.

\begin{lemma}
    \label{lem:extreme-point-simple}
    Let $P$ be a convex polytope with affine dimension $n$, defined as 
    \begin{equation}
        P = \left\{ \bm x \in [0,1]^n \ \middle|\ x_i \le x_j, \ \forall (i,j) \in E  \right\},
        \label{eqn:simple-01-polytope}
    \end{equation}
    where $E \subset \{1,2,\dots,n\}^2$ is some collection of index pairs. Then any extreme point of $P$ must have all entries taking value in $\{0, 1\}$.
\end{lemma}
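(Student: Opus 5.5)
The plan is to show that every extreme point of $P$ is $0/1$ by exhibiting, for any feasible point with a fractional coordinate, a nonzero direction $\bm d$ such that both $\bm x+\delta\bm d$ and $\bm x-\delta\bm d$ remain in $P$ for some small $\delta>0$. Then $\bm x$ is the midpoint of two distinct feasible points and cannot be extreme. This is the standard perturbation argument for order polytopes.

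Fix $\bm x\in P$ and suppose some coordinate is fractional. Pick one such value $t\in(0,1)$ and let
\[
S=\{i: x_i=t\}.
\]
Define $\bm d=\mathbbm{1}_S$, the indicator vector of $S$. Choose
\[
\delta=\tfrac12\min\bigl(\{t,\,1-t\}\cup\{|x_i-t|: x_i\neq t\}\bigr)>0.
\]
This is positive because there are finitely many coordinates.

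Next, check feasibility of $\bm x\pm\delta\bm d$.

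\emph{Box constraints.} Coordinates outside $S$ are unchanged. Coordinates in $S$ become $t\pm\delta$, which stay in $(0,1)$ by the choice of $\delta$.

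\emph{Order constraints.} Take $(i,j)\in E$, so $x_i\le x_j$. There are three cases.
\begin{enumerate}[(a)]
\item If $i,j\in S$ or $i,j\notin S$, both coordinates shift by the same amount (possibly zero), so the inequality is preserved.
\item If $i\in S$ and $j\notin S$, then $x_j\neq t$ and $x_j\ge t$ force $x_j>t$, indeed $x_j\ge t+2\delta$. Hence $x_i\pm\delta\le t+\delta<x_j$.
\item If $i\notin S$ and $j\in S$, then symmetrically $x_i\le t-2\delta<t\pm\delta$.
\end{enumerate}

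So $\bm x=\tfrac12\bigl[(\bm x+\delta\bm d)+(\bm x-\delta\bm d)\bigr]$ with $\bm d\neq\bm 0$, and $\bm x$ is not extreme. The affine-dimension hypothesis is not needed.

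The step needing the most care is the cross case in the order constraints. The key point is to move \emph{all} coordinates sharing the fractional value $t$ together, not just one. Otherwise an equality $x_i=x_j$ with $(i,j)\in E$ would be broken in one of the two directions. Grouping by equal value means tight constraints only ever link coordinates within the same group, while every non-tight constraint has slack of at least $2\delta$.
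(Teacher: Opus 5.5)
Your perturbation argument is correct, and it takes a different route from the paper.

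The paper uses the rank characterization of vertices. At an extreme point, $n$ linearly independent constraints must be tight. The tight order constraints $x_i\le x_j$ partition the indices into equivalence classes on which $\bm x$ is constant, and each class $I_k$ can supply at most $|I_k|$ independent tight constraints. Reaching $|I_k|$ requires a tight box constraint in that class, which pins the whole class to $0$ or $1$.

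You instead work directly from the definition of an extreme point and exhibit the explicit witnesses $\bm x\pm\delta\mathbbm{1}_S$. The two proofs are dual views of the same fact. When a class has no tight box constraint, its indicator lies in the null space of all tight constraints. Your $S$ is a union of such classes, so $\mathbbm{1}_S$ is exactly that kind of direction.

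What your version buys:
\begin{itemize}
\item It is self-contained: it does not rely on the vertex/tight-constraint criterion.
\item It shows the full-dimensionality hypothesis is superfluous. The paper uses that hypothesis only to phrase the criterion in terms of facet-defining inequalities.
\item By grouping coordinates by common value rather than by connected components of tight constraints, a single slack bound of $2\delta$ handles every cross constraint at once.
\end{itemize}

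What the paper's version buys is an explicit picture of which constraints are active at a vertex. That fits naturally with its half-space description of the individual polytopes.
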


\begin{proof}
    We know that at any extreme point $\bm x = [x_1, \dots, x_n]^T$ of a convex polytope $P \subset \R^n$ with affine dimension $n$, there exist $n$ linearly independent facet-defining inequalities of $P$ that are attain equality. For a polytope defined as \eqref{eqn:simple-01-polytope}, the facet-defining inequalities must take the following two forms. Type-A constraints are those of the form $x_i \ge 0$ or $x_i \le 1$. Type-B constraints are those of the form $x_i \le x_j$. We now consider the $n$ inequalities constraints that attain equality at extreme point $\bm x$.

    A type-B constraint that attains equality implies $x_i = x_j$. Thus, all the type-B constraints that attain equality collectively partitions the index set $\{1,2,\dots,n\}$ into some equivalence classes, which we denote as $I_1, \dots, I_K$. In other words, whenever two distinct indices $i, j$ belong to the same equivalence class, it means they are connected by some type-B constraints and their values must be equal. Note that none of the constraints contain indices that belong to different equivalence class. This allows us to refer to every constraint as ``corresponding to'' an equivalence class.

    For each equivalence class $I_k$, there can at most be $|I_k|$ linearly independent constraints corresponding to it, because this is the total number of variables concerning any such constraints. Also, this maximum number of $|I_k|$ linearly independent constraints is only achievable if there exists at least one type-A constraint, because if instead all constraints were type-B, the maximum possible rank of the constraints would be $|I_k|-1$, since there is a degree of freedom in the $x_i$'s values that cannot be pinned down by any type-B constraint. 

    In total across all equivalence class, we can have a maximum of $\sum_{k=1}^K |I_k| = n$ linearly independent constraints, with equality obtainable only when every equivalence class has at least one corresponding type-A constraint. This means that the indices in every equivalence class is pinned down to be equal to either 0 or 1, by the corresponding type-A constraint. Therefore, all entries of $\bm x$ must take value in $\{0, 1\}$.
\end{proof}

    Now we prove the result of Lemma \ref{lem:extreme-point} regarding the collective WU/MU/SS polytopes. First consider the specific rational deterministic choice vector $\bm\rho^* = [0,0,\dots,0]^T \in \R^\frac{n(n-1)}{2}$, for which we have 
    \begin{equation}
        \mathrm{Cube}(\bm\rho^*) = [0, \frac{1}{2}]^\frac{n(n-1)}{2}.
    \end{equation}
    As discussed previously, we can partition this small hypercube $\mathrm{Cube}(\bm\rho^*)$ into a total of $\left(\frac{n(n-1)}{2}\right)!$ simplexes, each corresponding to one ordering of the $\frac{n(n-1)}{2}$ entries of $\bm\rho$. Each such simplex is a polytope that satisfies the condition of Lemma \ref{lem:extreme-point-simple} (after scaling by 2), and thus it follows that every extreme point of these simplexes must have all entries taking value in $\{0, \frac{1}{2}\}$.

    For any other rational deterministic choice vector $\bm\rho^*$, due to symmetry, its corresponding $\mathrm{Cube}(\bm\rho^*)$ and the simplexes can be obtained by taking that of $[0,0,\dots,0]^T$ and re-mapping the indices $\{1,\dots,n\}$ according to the permutation that defines $\bm\rho^*$. Some entries of $\rho$ will get reversed (from $\rho(x_i, x_j)$ to $\rho(x_j, x_i) = 1 - \rho(x_i, x_j)$) in this process. Therefore, the resulting extreme points of all simplexes from $\mathrm{Cube}(\bm\rho^*)$ will have all entries taking value in $\{0, \frac{1}{2}, 1\}$.

    Finally, because these simplexes constitute the smallest unit of the space that individual WU/MU/SS has implications over, we know that the rationalizable set for individual WU/MU/SS must be a finite union of such simplexes. Hence, the extreme points of the polytope (of individual rationalizability) will only take value $\{0, \frac{1}{2}, 1\}$.

\end{proof}

\newpage
\subsection{Proof of Theorem \ref{thm:collective}}
\begin{proof}
For all three models, the main part of the proof for (i) $\iff$ (ii) has already been stated the main text, using the key fact that for any finite collection of sets $S_1, S_2, \dots, S_k \subseteq \mathbb{R}^d$, 
\begin{equation}
\conv\left(\bigcup_{i=1}^k \mathrm{conv}(S_i)\right)=\;
\conv\left(\bigcup_{i=1}^k S_i\right).
\end{equation} 
Furthermore, recall that the main text has provided characterization of the closure of the rationalizable set for individual WU/MU/SS. Here we again aim to characterize the closure of the rationalizable set of collective WU/MU/SS, noting that the collective characterization is the convex hull of the \textit{original} (i.e. non-closure) individual WU/MU/SS. Hence we need to use the fact that for a finite number of convex sets, the closure of convex hull is equal to the convex hull of closure. In other words, for convex sets $C_1, \dots, C_K$,
\begin{align}
    \text{cl}\left( \conv\left( \bigcup_{k=1}^{K} C_k \right) \right) =  \conv\left( \bigcup_{k=1}^{K} \text{cl} \left( C_k \right) \right). 
\end{align}
This implies that taking the convex hull of the closure of individual rationalizability set gives the closure of the collective rationalizability set. 

It remains to prove (i) $\iff$ (iii). The proof for all three models follow the same argument, so we will illustrate WU as an example.

As discussed in the main text, the rationalizable set for individual WU can be represented as the union of $n!$ small hypercubes. This union can be modeled as $\bm\rho$ satisfying the following feasibility problem:
\begin{equation}
    \begin{aligned}
         \sum_{k=1}^{n!} \bm\rho_k &= \bm\rho \\
    -\bm\rho_k &\le x_k \left(\frac{1}{2} \bm 1 - \bm\rho^*_k \right)\quad k=1,\dots, n! \\
    \bm\rho_k &\le x_k \left(\frac{1}{2} \bm 1 + \bm\rho^*_k \right) \quad k=1,\dots, n! \\
    \bm 0 &\le \bm \rho_k \le x_k \bm 1 \quad \quad \quad \ 
    k=1,\dots, n! \\
    \sum_{k=1}^{n!} x_k &= 1 \\
    \bm x&\in\{0,1\}^{n!}.
    \end{aligned}
    \label{eqn_integer}
\end{equation}
Here we define $n!$ auxiliary indicator variables $x_k \in \{0,1\}$, indicating whether $\bm\rho$ actually belongs to the $k$-th hypercube. We also define $n!$ copies of the $\bm\rho$ vector as $\bm\rho_k$, each lying in one hypercube and corresponding to one indicator $x_k$.

With the help of techniques from integer programming models and modeling disjunction \citep{balas1971intersection}, the convex hull of solutions to the linear system (\ref{eqn_integer}) is simply obtained by dropping the integrality restriction. Hence, we have $\bm\rho\in\conv\left( \bigcup_{k=1}^{n!} \text{cl} \left(\text{Cube}(\bm\rho^*_k) \right) \right)$ iff
\begin{equation}
    \begin{aligned}
        \sum_{k=1}^{n!} \bm\rho_k &= \bm\rho \\
    -\bm\rho_k &\le x_k \left(\frac{1}{2} \bm 1 - \bm\rho^*_k \right)\quad k=1,\dots, n! \\
    \bm\rho_k &\le x_k \left(\frac{1}{2} \bm 1 + \bm\rho^*_k \right) \quad k=1,\dots, n! \\
    \bm 0 &\le \bm \rho_k \le x_k \bm 1 \quad \quad \quad \ 
    k=1,\dots, n! \\
    \sum_{k=1}^{n!} x_k &= 1 \\
    \bm x&\in[0,1]^{n!}.
    \end{aligned}
\end{equation}
This is exactly the form in Theorem \ref{thm:collective} (iii).

The derivation for MU and SS are entirely analogous, with the only difference being the different inequality constraints that characterize individual rationalizability. Therefore we avoid repetition here.

\end{proof}

\newpage
\subsection{Extra Result on Theorem \ref{thm:collective}}
\begin{lemma}\label{lem: extra-collective}
    For a choice vector $\bm\rho\in[0,1]^m$, the following are equivalent:
\begin{enumerate}
\item $\bm\rho$ is rationalizable by collective WU.
\item The following inequality holds for any $\bm\gamma\in \R^{\frac{n(n-1)}{2}}$:
\begin{equation}
    (\bm\rho - \frac{1}{2}\bm 1)^T \bm\gamma \ge - \frac{1}{2} \sum_{j\,:\, \sgn(\gamma_j) = \sgn(\rho^*_k - \frac{1}{2})} |\gamma_j| \vspace{-.2cm}
    \label{eqn:dualcond}
\end{equation}
where $k\in \{1,\dots,n!\}$ is such that
\vspace{-.2cm}
\begin{equation}
    k =\arg\min_{k'} \ \sum_{j \,:\, \sgn(\rho^*_{k'j} - \tfrac{1}{2}) \neq \sgn(\gamma_j)} |\gamma_j|.
\end{equation}
\end{enumerate}
\end{lemma}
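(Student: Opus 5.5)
The plan is to use the standard support-function (separating hyperplane) description of a closed convex set, applied to the closure of the Collective WU set. By \autoref{thm:collective}, together with the closure conventions in the footnote on $\mathrm{Cube}(\bm\rho^*)$, the Collective WU set is $Q=\conv\bigl(\bigcup_{k=1}^{n!}\mathrm{Cube}(\bm\rho^*_k)\bigr)$. This is a polytope, so it is closed and convex. The separating hyperplane theorem then gives
\[
\bm\rho\in Q \iff \bm\gamma^T\bm\rho \ \ge\ \min_{\bm q\in Q}\bm\gamma^T\bm q \quad \text{for all } \bm\gamma\in\R^{m}.
\]
A linear function attains its minimum over a convex hull at a point of the generating union. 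Hence the right-hand side equals $\min_{k}\min_{\bm q\in \mathrm{Cube}(\bm\rho^*_k)}\bm\gamma^T\bm q$. The argument therefore reduces to computing this double minimum in closed form.

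Next, I would subtract $\tfrac12\bm\gamma^T\bm 1$ from both sides and work with $\bm y=\bm q-\tfrac12\bm 1$. Each cube is a product of intervals, so the inner minimization separates coordinatewise. Coordinate $j$ of $\bm y$ ranges over $[0,\tfrac12]$ when $\rho^*_{kj}=1$ and over $[-\tfrac12,0]$ when $\rho^*_{kj}=0$. So $\min \gamma_j y_j$ is either $0$ or $-\tfrac12|\gamma_j|$, according to whether $\sgn(\gamma_j)$ agrees with the orientation $\sgn(\rho^*_{kj}-\tfrac12)$ of the interval. Summing over $j$, the inner minimum is $-\tfrac12$ times the sum of $|\gamma_j|$ over the ``penalized'' coordinates of cube $k$.

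The outer minimization over $k$ then selects the deterministic order $\bm\rho^*_k$ that maximizes this penalized sum. Equivalently, since $\sum_j|\gamma_j|$ is fixed, it selects the order that minimizes the complementary sum. This is exactly the role of the $\arg\min$ defining $k$ in the statement. Substituting back gives inequality \eqref{eqn:dualcond}. Both directions follow at once: necessity because $\bm\rho\in Q$ satisfies every valid inequality, and sufficiency by the separation theorem applied to $\bm\rho\notin Q$.

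The computation itself is routine. The step I expect to need the most care is the bookkeeping of sign conventions: checking which of ``$\sgn(\gamma_j)=\sgn(\rho^*_{kj}-\tfrac12)$'' versus ``$\neq$'' marks the penalized coordinates. This must be consistent with the $\arg\min$ in the statement, so that the selected $k$ is indeed the cube attaining the minimum of $\bm\gamma^T\bm q$. If the orientations come out reversed, I would first recheck them on the case $n=3$ with $\bm\gamma=\pm\bm 1$-type directions. Those directions must reproduce the facet inequalities \eqref{eqn:facet-defining-triplet-wu}, namely $\tfrac12\le\rho(x_1,x_2)+\rho(x_2,x_3)+\rho(x_3,x_1)\le\tfrac52$. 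I would then fix the convention accordingly, for instance by replacing $\bm\gamma$ with $-\bm\gamma$. A smaller side issue is coordinates with $\gamma_j=0$, which contribute nothing under either convention, and ties in the $\arg\min$, which are harmless because any minimizer gives the same value.
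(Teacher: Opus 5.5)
Your route is genuinely different from the paper's, and it is sound up to the last step. The paper starts from the lifted Balas system of \autoref{thm:collective}(iii), applies Farkas' lemma, and eliminates the multipliers one at a time. What remains is its condition \eqref{eqn:dualcond-v5-2}: for every $\bm\gamma$ there is some $k$ with $\min_{\bm q\in\mathrm{Cube}(\bm\rho^*_k)}\bm\gamma^T\bm q\le\bm\gamma^T\bm\rho$. This is exactly your double minimum, computed coordinatewise. Your support-function argument reaches it directly. It uses only two facts: the (closed) Collective WU set is a polytope, and a linear functional is minimized over a convex hull at one of the generators. This is shorter and shows plainly that the lemma is just the support function of the union of cubes. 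The paper's derivation is longer, but it ties the inequality to the dual of the linear program in \autoref{thm:collective}(iii).

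The step that does not go through as written is the final identification with the statement's $k$. Your coordinatewise computation penalizes the coordinates with $\sgn(\gamma_j)\neq\sgn(\rho^*_{kj}-\tfrac12)$. For example, $\rho^*_{kj}=1$ and $\gamma_j<0$ give $\min \gamma_j y_j=-\tfrac12|\gamma_j|$.

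Write $A_k$ and $D_k$ for the sums of $|\gamma_j|$ over the agreeing and disagreeing coordinates of order $k$. Your computation then gives the bound $-\tfrac12\max_k D_k$, attained at $\arg\min_k A_k$. The statement instead takes $k=\arg\min_{k'}D_{k'}$ and the bound $-\tfrac12 A_k$. Both conventions are flipped, so this is not ``exactly the role of the $\arg\min$''.

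The two expressions agree only because of the reversal symmetry $\bm\rho^*\in R_{\mathrm{det}}\iff \bm 1-\bm\rho^*\in R_{\mathrm{det}}$. Reversing a linear order swaps the agreement and disagreement sets, so $\min_k A_k=\min_k D_k$. Hence $\max_k D_k=\sum_j|\gamma_j|-\min_k A_k=A_{\widehat k}$ with $\widehat k=\arg\min_k D_k$. The paper makes exactly this move when it passes from $k^*$ to $\widehat k$. Without this fact, ``substituting back'' is not justified.

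Your suggested repair, replacing $\bm\gamma$ by $-\bm\gamma$, does not work as a global substitution. It also flips the left-hand side and produces the upper-bound family $(\bm\rho-\tfrac12\bm 1)^T\bm\gamma\le\tfrac12\max_k A_k$. It works only if applied to the sign conditions alone, and that is precisely the reversal argument.

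Your $n=3$ sanity check could never detect this, because the two expressions coincide for every $\bm\gamma$. What is missing is the justification, not a wrong value. Add the complement-closure of $R_{\mathrm{det}}$ explicitly and the proof is complete.
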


\begin{proof}

This Lemma provides an additional characterization for Collective WU, aimed at deriving its exact facet-defining inequalities. This is currently a partial result in that endeavor.

Writing all inequality constraints from (c) into a full matrix form, we want to check whether there exists a vector $\bm v$ such that
\begin{align*}
    \bm A \bm v &\le \bm b \\
    \bm A' \bm v &= \bm b'
\end{align*}
where
\begin{equation}
    \bm v = \begin{bmatrix}
        \bm \rho_1 \\
        \vdots \\
        \bm \rho_{n!} \\
        x_1 \\
        \vdots \\
        x_{n!}
    \end{bmatrix}, \quad
    \bm A' = \left[\begin{array}{ccc|ccc}
        \bm I_{\frac{n(n-1)}{2}} & \cdots & \bm I_{\frac{n(n-1)}{2}} & & & \\
        \hline
        & & & & \bm 1_{n!}^T & \\
    \end{array}\right], \quad
    \bm b' = \begin{bmatrix}
        \bm\rho \\
        1
    \end{bmatrix}
\end{equation}
\begin{equation}
    \bm A = \left[\begin{array}{ccc|ccc}
        & & & -\frac{1}{2}\bm 1_{\frac{n(n-1)}{2}} - \bm\rho^*_1 & & \\
        & \bm I_{n!\frac{n(n-1)}{2}} & &  & \ddots & \\
        & & & & & -\frac{1}{2}\bm 1_{\frac{n(n-1)}{2}} - \bm\rho^*_{n!} \\
        \hline
        & & & -\frac{1}{2}\bm 1_{\frac{n(n-1)}{2}} + \bm\rho^*_1 & & \\
        & -\bm I_{n!\frac{n(n-1)}{2}} & &  & \ddots & \\
        & & & & & -\frac{1}{2}\bm 1_{\frac{n(n-1)}{2}} + \bm\rho^*_{n!} \\
        \hline
        & & & & & \\
        & -\bm I_{n!\frac{n(n-1)}{2}} & & & & \\
        & & & & & \\
        \hline
        & & & -\bm 1_{\frac{n(n-1)}{2}} & & \\
        & \bm I_{n!\frac{n(n-1)}{2}} & &  & \ddots & \\
        & & & & & -\bm 1_{\frac{n(n-1)}{2}} \\
        \hline
        & & & & -\bm I_{n!} & \\
        \hline
        & & & & \bm I_{n!} & \\
    \end{array}\right], \quad 
    \bm b = \left[\begin{array}{c}
        \bm 0 \\
        \hline
        \bm 0 \\
        \hline
        \bm 0 \\
        \hline
        \bm 0 \\
        \hline
        \bm 0 \\
        \hline
        \bm 1_{n!}
    \end{array}\right]
\end{equation}

From Farka's lemma, this is equivalent to there not existing $\bm\lambda \ge \bm 0 \in \R_+^{n!(2n(n-1)+2)}$ and $\bm\lambda' \in \R^{\frac{n(n-1)}{2}+1}$ that satisfies
\begin{align}
    \bm A^T \bm \lambda + \bm A'^T \bm\lambda' &= \bm 0 \label{eqn:dual1} \\
    \bm b^T \bm\lambda + \bm b'^T \bm\lambda' &< 0. \label{eqn:dual2}
\end{align}

\vspace{1cm}

Let 
\begin{equation}
    \bm\lambda = \begin{bmatrix}
        \bm\lambda_1 \\
        \bm\lambda_2 \\
        \bm\lambda_3 \\
        \bm\lambda_4 \\
        \bm\lambda_5 \\
        \bm\lambda_6
    \end{bmatrix}, \quad
    \bm\lambda' = \begin{bmatrix}
        \bm\lambda'_1 \\
        \lambda'_2
    \end{bmatrix}
\end{equation}
where $\bm\lambda_1, \bm\lambda_2, \bm\lambda_3, \bm\lambda_4 \in \R^{n!\frac{n(n-1)}{2}}$, $\bm\lambda_5, \bm\lambda_6 \in \R^{n!}$, $\bm\lambda'_1 \in \R^{\frac{n(n-1)}{2}}$, $\lambda'_2 \in \R$. Further, for $i=1,2,3,4$ write $\bm\lambda_i = [\bm\lambda_{ik}]_{k=1}^{n!}$ with $\bm\lambda_{ik} \in \R^{\frac{n(n-1)}{2}}$, and for $i=5,6$ write $\bm\lambda_i = [\lambda_{ik}]_{k=1}^{n!}$ with $\lambda_{ik} \in \R$.

The conditions \eqref{eqn:dual1} \eqref{eqn:dual2} can be written as
\begin{align}
    \bm\lambda_{1k} - \bm\lambda_{2k} - \bm\lambda_{3k} + \bm\lambda_{4k} + \bm\lambda'_1 = \bm 0, \ \ \forall 1\le k\le n! \label{eqn:dualcond-v1-1}\\
    \left(-\frac{1}{2}\bm 1 - \bm\rho^*_k\right)^T \bm\lambda_{1k} + \left(-\frac{1}{2}\bm 1 + \bm\rho^*_k\right)^T \bm\lambda_{2k} - \bm 1^T \bm\lambda_{4k} - \lambda_{5k} + \lambda_{6k} + \lambda'_2 = 0, \ \ \forall 1\le k\le n! \label{eqn:dualcond-v1-2} \\
    \bm 1^T \bm\lambda_6 + \bm\rho^T \bm\lambda'_1 + \lambda'_2 < 0 \label{eqn:dualcond-v1-3}
\end{align}

We now simplify the system above by variable elimination.

$\bm\lambda_{3k}$ only appears in \eqref{eqn:dualcond-v1-1}, and additionally we require $\bm\lambda_{3k} \ge \bm 0$, hence we can eliminate $\bm\lambda_{3k}$ and \eqref{eqn:dualcond-v1-1} is equivalent to 
\begin{equation}
    \bm\lambda_{1k} - \bm\lambda_{2k} + \bm\lambda_{4k} + \bm\lambda'_1 \ge \bm 0, \ \ \forall 1\le k\le n!
    \label{eqn:dualcond-v1-4}
\end{equation}

$\lambda_{5k}$ only appears in \eqref{eqn:dualcond-v1-2}, and additionally we require $\lambda_{5k} \ge 0$, hence we can eliminate $\lambda_{5k}$ and \eqref{eqn:dualcond-v1-2} is equivalent to 
\begin{equation}
    \left(-\frac{1}{2}\bm 1 - \bm\rho^*_k\right)^T \bm\lambda_{1k} + \left(-\frac{1}{2}\bm 1 + \bm\rho^*_k\right)^T \bm\lambda_{2k} - \bm 1^T \bm\lambda_{4k}  + \lambda_{6k} + \lambda'_2 \ge 0, \ \ \forall 1\le k\le n! \label{eqn:dualcond-v1-5}
\end{equation}

By comparing $\lambda'_2$ in conditions \eqref{eqn:dualcond-v1-3} and \eqref{eqn:dualcond-v1-5}, we can eliminate $\lambda'_2$ and combine these two conditions into the equivalent form
\begin{equation}
    \left(-\frac{1}{2}\bm 1 - \bm\rho^*_k\right)^T \bm\lambda_{1k} + \left(-\frac{1}{2}\bm 1 + \bm\rho^*_k\right)^T \bm\lambda_{2k} - \bm 1^T \bm\lambda_{4k}  + \lambda_{6k} > \bm 1^T \bm\lambda_6 + \bm\rho^T \bm\lambda'_1, \ \ \forall 1\le k\le n!
    \label{eqn:dualcond-v1-6}
\end{equation}

Now the conditions have been transformed to \eqref{eqn:dualcond-v1-4} \eqref{eqn:dualcond-v1-6}. $\lambda_{6k}$ only appears in \eqref{eqn:dualcond-v1-6}, and additionally we require $\lambda_{6k} \ge 0$, hence we can eliminate $\lambda_{6k}$ and \eqref{eqn:dualcond-v1-6} is equivalent to 
\begin{equation}
    \left(-\frac{1}{2}\bm 1 - \bm\rho^*_k\right)^T \bm\lambda_{1k} + \left(-\frac{1}{2}\bm 1 + \bm\rho^*_k\right)^T \bm\lambda_{2k} - \bm 1^T \bm\lambda_{4k} > \bm\rho^T \bm\lambda'_1, \ \ \forall 1\le k\le n!
    \label{eqn:dualcond-v1-7}
\end{equation}

By comparing $\bm\lambda_{4k}$ in \eqref{eqn:dualcond-v1-4} and \eqref{eqn:dualcond-v1-7}, we can eliminate $\bm\lambda_{4k}$ by setting $\bm\lambda_{4k} = \max\!\big(\bm\lambda_{2k}-\bm\lambda_{1k} - \bm\lambda'_1, \ \bm 0\big)$ in \eqref{eqn:dualcond-v1-6}. Hence the system of conditions \eqref{eqn:dualcond-v1-1}-\eqref{eqn:dualcond-v1-3} is equivalent to the following one inequality constraint: 
\begin{equation}
    \left(-\frac{1}{2}\bm 1 - \bm\rho^*_k\right)^T \bm\lambda_{1k} + \left(-\frac{1}{2}\bm 1 + \bm\rho^*_k\right)^T \bm\lambda_{2k} - \bm 1^T \max\!\big(\bm\lambda_{2k}-\bm\lambda_{1k} - \bm\lambda'_1, \ \bm 0\big) > \bm\rho^T \bm\lambda'_1, \ \ \forall 1\le k\le n!
    \label{eqn:dualcond-v2}
\end{equation}

Define a change of variables $\bm\alpha_{k} = \bm\lambda_{1k} + \bm\lambda_{2k}$ and $\bm\beta_{k} = \bm\lambda_{2k} - \bm\lambda_{1k}$, $\forall 1\le k \le n!$. The constraints $\bm\lambda_{1k}, \bm\lambda_{2k} \ge \bm 0$ is equivalent to $\bm\alpha_k \ge |\bm\beta_k|$. Plugging this into \eqref{eqn:dualcond-v2}, and also rename $\bm\lambda'_1$ as $\bm\gamma$, we get
\begin{equation}
    -\frac{1}{2}\bm 1^T \bm\alpha_k + \bm\rho^*_k \!\ ^T \bm\beta_k - \bm 1^T \max\!\big(\bm\beta_k - \bm\gamma, \ \bm 0\big) > \bm\rho^T \bm\gamma, \ \ \forall 1\le k\le n!
\end{equation}
We can eliminate $\bm\alpha_k$ by taking $\bm\alpha_k = |\bm\beta_k|$, and the condition is equivalent to
\begin{equation}
    -\frac{1}{2}\bm 1^T |\bm\beta_k| + \bm\rho^*_k \!\ ^T \bm\beta_k - \bm 1^T \max\!\big(\bm\beta_k - \bm\gamma, \ \bm 0\big) > \bm\rho^T \bm\gamma  \ \ \forall 1\le k\le n! 
    \label{eqn:dualcond-v3}
\end{equation}

Let $\beta_{kj}$ be the $j$-th entry of $\bm\beta_k$, $1\le j \le \frac{n(n-1)}{2}$. The LHS of \eqref{eqn:dualcond-v3} is a piecewise linear function of each individual $\beta_{kj}$, we can find the $\beta_{kj}$ values that maximizes LHS as 
\begin{equation}
    \beta_{kj} = \left\{ \begin{matrix}
        \max(\gamma_j, 0) & \text{ if } \rho^*_{kj} = 1 \\
        \min(\gamma_j, 0) & \text{ if } \rho^*_{kj} = 0
    \end{matrix} \right.
\end{equation}
This further simplifies to 
\begin{equation}
    \beta_{kj} = \left\{ \begin{matrix}
        \gamma_j & \text{ if } \mathrm{sgn}(\rho^*_{kj}-\tfrac{1}{2}) = \mathrm{sgn}(\gamma_j) \\
        0 & \text{otherwise}
    \end{matrix} \right.
\end{equation}
We can eliminate $\bm\beta_k$ by plugging this maximizer into \eqref{eqn:dualcond-v3}, and get
\begin{equation}
    -\frac{1}{2} \sum_{J_k^{++}} \gamma_j + \frac{1}{2} \sum_{J_k^{--}} \gamma_j + \sum_{J_k^{-+}} \gamma_j + \sum_{J_k^{++} \cup J_k^{--}} \rho^*_{kj} \gamma_j  > \sum_j \rho_j \gamma_j, \ \ \forall 1\le k\le n!  \label{eqn:dualcond-v4}
\end{equation}
where we define index subsets
\begin{equation}
    J_k^{s_1s_2} = \left\{ 1\le j \le \frac{n(n-1)}{2} \ \middle|\ \mathrm{sgn}(\gamma_j) = s_1, \ \mathrm{sgn}(\rho^*_{kj}-\tfrac{1}{2}) = s_2  \right\}, \quad s_1, s_2 \in \{+,-\}.
\end{equation}
Because $\rho^*_{kj}\in \{0,1\}$ are known given the index subset $J_k^{s_1s_2}$ (determined by the sign $s_2$), \eqref{eqn:dualcond-v4} can further be organized into
\begin{equation}
    \sum_{j\in J_k^{++}} (\tfrac{1}{2} - \rho_j) \gamma_j + \sum_{j\in J_k^{--}} (\tfrac{1}{2} - \rho_j) \gamma_j + \sum_{j\in J_k^{+-}} (-\rho_j) \gamma_j + \sum_{j\in J_k^{-+}} (1-\rho_j) \gamma_j > 0, \ \ \forall 1\le k\le n!
    \label{eqn:dualcond-v5}
\end{equation}

Up to now, we have shown that a vector $\bm\rho \in \R^{\frac{n(n-1)}{2}}$ lies in $\overline{\mathrm{conv}(S)}$ (i.e. is population-WST) iff for any $\bm\gamma\in \R^{\frac{n(n-1)}{2}}$, there exists $1\le k\le n!$ such that 
\begin{equation}
    \sum_{j\in J_k^{++}} \tfrac{1}{2}\gamma_j + \sum_{j\in J_k^{--}} \tfrac{1}{2}\gamma_j + \sum_{j\in J_k^{-+}}\gamma_j \le \sum_{j}\rho_j\gamma_j.
    \label{eqn:dualcond-v5-2}
\end{equation}

In the following, we characterize which $k$ among the $n!$ possible choices minimizes LHS of $\eqref{eqn:dualcond-v5-2}$, which is the only $k$ we will need to consider.

The admissible values of $\bm\rho^*_k$ is a strict subset of $\{0,1\}^{\frac{n(n-1)}{2}}$, but for now imagine a relaxation where we allow any $\bm\rho^* \in \{0,1\}^{\frac{n(n-1)}{2}}$. In this case, we can determine the optimal $\rho^*_j$ for each index $j$. For a given $j$, if $\gamma_j > 0$ then LHS of \eqref{eqn:dualcond-v5-2} is minimized when $\rho^*_j = 0$; if $\gamma_j < 0$ then LHS of \eqref{eqn:dualcond-v5-2} is minimized when $\gamma_j > 0$; if $\gamma_j = 0$ then $\rho^*_j$ does not matter. In summary, the optimal $\bm\rho^* \in \{0,1\}^{\frac{n(n-1)}{2}}$ that minimizes LHS of \eqref{eqn:dualcond-v5-2} should satisfy $\sgn(\rho^*_j - \tfrac{1}{2}) = - \sgn(\gamma_j)$ whenever $\gamma_j \neq 0$.

Now we return to the actual scenario where $\bm\rho^*$ is only allowed to take a more restricted set of values $P^* = \{\bm\rho^*_k\}_{k=1}^{n!} \subsetneq \{0,1\}^{\frac{n(n-1)}{2}}$. For a given $\bm\rho^*_k$, for each $j$ where $\sgn(\bm\rho^*_{kj} - \tfrac{1}{2})$ and $-\sgn(\gamma_j)$ differ, LHS of \eqref{eqn:dualcond-v5-2} increases by $\tfrac{1}{2} |\gamma_j|$. The optimal $k$ is one that minimizes this total increase, which is 
\begin{equation}
    k^* =\arg\min_k \ \sum_{j \,:\, \sgn(\rho^*_{kj} - \tfrac{1}{2}) = \sgn(\gamma_j)} |\gamma_j|.
\end{equation}
We know that $\bm\rho^*\in P^* \iff \bm 1 - \bm\rho^* \in P^*$, and can let $\widehat{k}$ be such that $\bm\rho^*_{\widehat{k}} = \bm 1 - \bm\rho^*_{k^*}$, namely the ``reverse ordering'' of $k^*$. Then we have 
\begin{equation}
    \widehat{k} =\arg\min_k \ \sum_{j \,:\, \sgn(\rho^*_{kj} - \tfrac{1}{2}) \neq \sgn(\gamma_j)} |\gamma_j|.
    \label{eqn:dualcond-nearest-k}
\end{equation}
This has a very intuitive interpretation. For a given vector $\bm\gamma$, imagine making minimal changes (in the $L_1$ sense) to its elements such that the sign pattern aligns with some WST sign pattern. $\bm\rho^*_{\widehat k}$ is the WST ordering ``nearest'' to $\bm\gamma$. Also notably, $\widehat k$ depends only on $\bm\gamma$ and is independent of $\bm\rho$.

Using the fact that $k^*$ and $\widehat k$ give exactly the reverse signs, we can rewrite the condition as follow. A choice rate $\bm\rho \in \R^{\frac{n(n-1)}{2}}$ is collective WUM-rationalizable iff for any $\bm\gamma\in \R^{\frac{n(n-1)}{2}}$, we have
    \begin{equation}
        (\bm\rho - \frac{1}{2}\bm 1)^T \bm\gamma \ge - \frac{1}{2} \sum_{j\,:\, \sgn(\gamma_j) = \sgn(\rho^*_k - \frac{1}{2})} |\gamma_j|,
        \label{eqn:dualcond}
    \end{equation}
    where $k\in \{1,\dots,n!\}$ is such that
    \begin{align}
        k =\arg\min_{k'} \ \sum_{j \,:\, \sgn(\rho^*_{k'j} - \tfrac{1}{2}) \neq \sgn(\gamma_j)} |\gamma_j|.
        \label{eqn:dualcond-v6}
    \end{align}
\end{proof}

\newpage
\subsection{Additional Results on General Non-Binary Menus}\label{sec:appdx-multinomial}

So far, I have focused on stochastic choice over binary menus in $Z^2$ over a choice set $Z$. Meanwhile, some of the random choice models we discussed can also be generalized to non-binary menus, namely the family of all subsets $2^Z$. Here I take Simple Scalability as an example, and discuss the application of collective rationalizability to Simple Scalability over general menus.

A natural generalization of Simple Scalability to general menus is given in \citep{tversky1972choice}. Let $\rho(x,A)$ denote the probability of choosing $x \in A$ in a menu $A \subseteq Z$. Let $u$ and $F_a,\ 2\le a\le n$, be real-valued functions, then a choice rule satisfies Simple Scalability if for any choice option in any choice subset $A = \{x, y,\dots,z\} \subseteq Z$, we have
\begin{equation}
    \rho(x,A) = F_a(u(x), u(y), \dots, u(z)),
\end{equation}
with $F_a$ symmetric in its last $a-1$ arguments, and satisfying: $F_a$ is strictly increasing in the first argument and strictly decreasing in the remaining $a-1$ arguments provided $\rho(x,A)\neq 0,1$. If $\rho(x,A)$ is 0 or 1, $F_a$ is non-decreasing in the first argument and non-increasing in the others.

Generalized to multiple choices, Simple Scalability asserts that the alternatives can be scaled so that choice probability is represented as a monotonic function of the respective scale values.
Simple Scalability with multiple choices is equivalent to the \textit{order-independence} property \citep{tversky1972choice}, which requires that for any two menus $A, B \subseteq Z$, $\forall x, y \in B\backslash A$, $\forall z \in A$, we have 
\begin{equation}
    \rho(x, B) \ge \rho(y, B) \iff \rho(z, A\cup \{x\}) \le \rho(z, A\cup \{y\}),
\end{equation}
provided the choice probabilities on the two sides of either inequality are not 0 or 1. This characterization provides a directly testable condition for Simple Scalability over general menus.

Thus, following the procedure developed in this paper, one can obtain the collective rationalizability characterization by taking the convex hull of the individual rationalizable set. Below, I provide a concrete example for the case of three options $Z = \{x,y,z\}$.

With the inclusion of general non-binary menus, a choice rule $\rho$ now corresponds to a 5-dimensional choice vector:
\begin{equation}
    \bm\rho = [\rho(x,\{x,y\}),\ \rho(x,\{x,z\}),\ \rho(y,\{y,z\}),\ \rho(x,\{x,y,z\}),\ \rho(y,\{x,y,z\})]^T.
\end{equation}
Similar to the characterization of individual SS with binary menus, here the rationalizable set can again be written as the union of $3!=6$ convex polytopes, each of which is 5-dimensional. For example, for the convex polytope corresponding to the preference ordering $x \succ y \succ z$, in addition to the constraints $\rho(x, \{x,z\}) \ge \rho(x, \{x,y\})$ and $\rho(x, \{x,z\}) \ge \rho(y, \{y,z\})$ from strong stochastic transitivity, we furthermore have constraints $\rho(x, \{x,y,z\}) \ge \rho(y, \{x,y,z\})$ and $\rho(y, \{x,y,z\}) \ge 1 - \rho(x, \{x,y,z\}) - \rho(y, \{x,y,z\})$. Finally, applying the convex hull operation on the half-space representation, one can readily obtain the characterization for collective SS over general menus.

Interestingly, aggregating choices from heterogeneous individuals under representative agent assumption can generate \textit{choice reversal}, even when each individual satisfies Simple Scalability.
\begin{example}
    Consider two individuals with choice probabilities
    \begin{align*}
        \bm\rho_1 &= [\rho_1(x,\{x,y\}),\ \rho_1(x,\{x,z\}),\ \rho_1(y,\{y,z\}),\ \rho_1(x,\{x,y,z\}),\ \rho_1(y,\{x,y,z\})]^T = [0.2,0.6,0.9,0.3,0.6]^T,\\
        \bm\rho_2 &= [\rho_2(x,\{x,y\}),\ \rho_2(x,\{x,z\}),\ \rho_2(y,\{y,z\}),\ \rho_2(x,\{x,y,z\}),\ \rho_2(y,\{x,y,z\})]^T = [0.7,0.6,0.4,0.8,0.05]^T.
    \end{align*}
    It is straightforward to verify that both individuals satisfy Simple Scalability, with Subject 1 having the preference ordering $y\succ x\succ z$, while Subject 2 $x\succ z\succ y$. However, aggregating their choice probabilities with equal weights yields
    \begin{align*}
        \rho(y,\{x,y\}) >0.5,\ \rho(x,\{x,y,z\}) >0.5.
    \end{align*}
    This is a choice reversal that $y$ is chosen more often than $x$ in the binary menu $\{x,y\}$, but $x$ is chosen more than 50\% in the menu $\{x,y,z\}$.
\end{example}
Importantly, this choice reversal violates regularity: adding alternative $z$ increases $x$'s choice probability from 0.45 to 0.55. Such regularity violations cannot arise from heterogeneity in random utility models where individuals choose deterministically, but
can emerge from aggregating stochastic choices that individually satisfy Simple Scalability with heterogeneous tastes.

\newpage
\subsection{Proof of Proposition \ref{prop:loop-rank}}
\begin{proof}
The loop choice can be rationalized by WU iff we can find the comparison difficulty $\{c_j\}_{j=1}^m$ and a function $F$ such that
\begin{align}
    \sum_{j\in J^+} c_{j} F^{-1}(\rho_{j}) = -\sum_{j'\in J^-} c_{j'} F^{-1}(\rho_{j'}).
\label{eqn:loop-balance}
\end{align}
Here $F^{-1}$ can be any strictly increasing function with $F^{-1}(\tfrac{1}{2}) = 0$ and $|F^{-1}(x)| \ge |F^{-1}(y)| \iff |x-\tfrac{1}{2}| \ge |y-\tfrac{1}{2}|$. Hence \eqref{eqn:loop-balance} can be rewritten as
\begin{equation}
    \sum_{j\in J^+} c_{j} G(\mu_{j}) = \sum_{j'\in J^-} c_{j'} G(\mu_{j'}),
    \label{eqn:loop-balance-2}
\end{equation}
where $G$ can be any strictly increasing function with $G(0)=0$.

\vspace{.5cm}
We first prove part 1 of the result.
Because $J^+$ and $J^-$ do not ranking-dominate each other, it follows that either strict inequality in \eqref{eqn:ranking-domination-def} in both directions holds for some $(s,t)$, or that equality holds for all $(s,t)$. That latter immediately implies that $J^+$ and $J^-$ have the identical set of values, and hence an arbitrary function $G$ with arbitrary constant $\{c_j\}_{j=1}^m$ sequence will satisfy \eqref{eqn:loop-balance-2}. In the following, we consider the other case where strict inequality in both directions must hold for some $(s,t)$.

Pick an arbitrary function $G$ (strictly increasing with $G(0)=0)$ and an arbitrary sequence of values of $\{c_j\}_{j=1}^m$ that satisfies the ordinal ranking $\pi$. If this arbitrary choice does not satisfy equality in \eqref{eqn:loop-balance-2}, suppose we have $\sum_{j\in J^+} c_{j} G(\mu_{j}) > \sum_{j'\in J^-} c_{j'} G(\mu_{j'})$. Our assumptions imply that there exist some $s,t\in\{1,\dots,m\}$ such that
\begin{align}
    \#\{j\in J^+\mid\sigma(j)\le s, \pi(j)\le t\}\  <\  \#\{j\in J^-\mid\sigma(j)\le s, \pi(j)\le t\}.
    \label{eqn:rank-domination-implication}
\end{align}

Consider the following operation: for a given a constant $M>0$, increase all values $c_j$ with $\pi(j) \le t$ by $M$, and increase all function values $G(\mu_j)$ with $\sigma(j) \le s$ by $M$. The former preserves the ranking $\pi$ of $\{c_j\}_{j=1}^m$, and the latter can be done in a way that preserves the monotonicity and continuity of $G$ because $\sigma$ is the ranking function of $\{\mu_j\}$. Define
\begin{align*}
    \delta(M) &= \sum_{j\in J^+} \big(c_j + M\mathbbm{1}[\pi(j)\le t]\big) \big(G(\mu_j) + M\mathbbm{1}[\sigma(j)\le s]) \ - \\
    &\quad\quad \sum_{j\in J^-} \big(c_j + M\mathbbm{1}[\pi(j)\le t]\big) \big(G(\mu_j) + M\mathbbm{1}[\sigma(j)\le s])
\end{align*}
as the difference between the two sides of \eqref{eqn:loop-balance-2} after the operation. We have $\delta(0) > 0$ by our assumption. As $M$ becomes sufficiently large, $\delta(M)$ is dominated by the quadratic term of $M$, and \eqref{eqn:rank-domination-implication} implies that the coefficient of the quadratic term is negative. Hence for sufficiently large $M$ we have $\delta(M) < 0$, and by continuity there must exist some $M$ such that $\delta(M) = 0$. Therefore the resulting modified $\{c_j\}_{j=1}^m$ and $G$ satisfied the desired equation \eqref{eqn:loop-balance-2}, and hence the data can be rationalized by RUM with the ranking of comparison difficulty $\pi$.

The argument is identical if instead the initial arbitrary choice $\{c_j\}_{j=1}^m$ gives the opposite sign.

\vspace{.5cm}
We then prove part 2 of the result. 
Define
\begin{equation}
    \delta_j = \mathbbm{1}[j \in J^+] - \mathbbm{1}[j \in J^-], \quad j=1,\dots,m.
\end{equation} and
\begin{equation} 
    \Delta = \sum_{j=1}^m c_j G(\mu_j) \delta_j.
    \label{eqn:balance-rewrite-proof2}
\end{equation}
It suffices to prove that if $J^+$ ranking-dominates $J^-$, and no choice rates equal $\frac{1}{2}$, then we have $\Delta > 0$. The other case where $J^-$ ranking-dominates $J^+$ is symmetric.

We sort the values $c_j$ and $G(\mu_j)$ by groups. Denote the distinct values of $c_j$ as $x_1 > x_2 > \dots > x_K > 0$, and the distinct values of $G(\mu_j)$ as $y_1 > y_2 > \dots > y_L > 0$. Here $x_K > 0$ because complexity $c_j$ must be positive, and $y_L > 0$ due to our assumption that $\mu_j\neq 0, \forall j$. For a given $j \in \{1,\dots,m\}$, define the corresponding indices $k_j\in \{1,\dots,K\}$ such that $x_{k_j} = c_j$, and $l_j\in \{1,\dots,L\}$ such that $y_{l_j} = G(\mu_j)$. Then, the sum $\Delta$ can be rewritten as 
\begin{equation}
    \Delta = \sum_{j=1}^m x_{k_j} y_{l_j} \delta_j.
\end{equation}

The key step is using summation by parts twice. First, based on the sorted values $x_k$, rewrite the sum $\Delta$  as
\begin{equation}
    \Delta = \sum_{k=1}^K x_k  \left( \sum_{j=1}^m  y_{l_j} \delta_j \mathbbm{1}[k_j = k] \right) = \sum_{k=1}^K x_k z_k,
\end{equation}
where we define $z_k = \sum_{j}  y_{l_j} \delta_j \mathbbm{1}[k_j = k]$. Let $Z_k = \sum_{i=1}^k z_i$ be its cumulative sum with $Z_0=0$. Applying summation by parts to the sum $\sum x_k z_k$:
\begin{align}
\Delta &= \sum_{k=1}^K x_k (Z_k - Z_{k-1}) \nonumber \\
&= x_K Z_K + \sum_{k=1}^{K-1} (x_k - x_{k+1}) Z_k. \label{eq:sum_of_W}
\end{align}

Furthermore, the term $Z_k$ can be rewritten as
\begin{equation}
Z_k = \sum_{i=1}^k \sum_{j=1}^m  y_{l_j} \delta_j \mathbbm{1}[k_j = i] = \sum_{j=1}^m  y_{l_j} \delta_j \mathbbm{1}[k_j \le k] = \sum_{l=1}^L y_l \left( \sum_{j=1}^m \delta_j \mathbbm{1}[k_j \le k] \mathbbm{1}[l_j = l] \right) = \sum_{l=1}^L y_l w_l,
\end{equation}
where we define $w_l = \sum_{j=1}^m \delta_j \mathbbm{1}[k_j \le k] \mathbbm{1}[l_j = l]$. Let $W_l = \sum_{i=1}^l w_i$ be its cumulative sum with $W_0=0$. Applying summation by parts to the sum $\sum y_l w_l$:
\begin{align}
Z_k &= \sum_{l=1}^L y_l (W_l - W_{l-1})  \nonumber \\
&= y_L W_L + \sum_{l=1}^{L-1} (y_l - y_{l+1}) W_l. \label{eq:sum_of_D}
\end{align}

Now, notice that the term $W_l$ can be rewritten as 
\begin{equation}
    W_l = \sum_{i=1}^l \sum_{j=1}^m \delta_j \mathbbm{1}[k_j \le k] \mathbbm{1}[l_j = i] = \sum_{j=1}^m \delta_j \mathbbm{1}[k_j \le k] \mathbbm{1}[l_j \le l].
\end{equation}
This is exactly the term in the ranking-dominance condition!

From the assumption that $J^+$ ranking-dominates $J^-$, we know that $W_k = \sum_{j=1}^m \delta_j \mathbbm{1}[k_j \le k] \mathbbm{1}[l_j \le l] \ge 0$ for all $(k,l)$, with strict inequality for some $(k,l)$. Furthermore, because all terms $x_K, x_k-x_{k+1}, y_L, y_l-y_{l+1}$ are strictly positive, we conclude that $\Delta > 0$. This completes the proof.
\end{proof}

\newpage

\subsection{Additional Results on Statistical Testing}\label{appdx: test-addi}
\subsubsection{Regularity Conditions for Data Generating Processes}\label{appdx: regularity}
Define index subsets based on whether the distribution is degenerate:
\begin{align*}
    J &= \left\{ j=1,\dots,m \ \middle| \ \mu_j \not\in \{0,1\} \right\}, \\
    J' &= \left\{ j=1,\dots,m \ \middle| \ \mu_j \in \{0,1\} \right\}.
\end{align*}

When $j\in J'$, $\rho_{ij}$ is degenerate distributed, which follows that $\hat\rho_{ij}$ and hence $\hat\rho_j$ are also degenerate, equaling 0 or 1 almost surely. In other words, the subset $J'$ consists of indices where all individuals choose one specific option with probability 1. This is the set of ``trivial'' questions, and will contain no variation in the data. Hence for the purpose of analysis this subset can be omitted, as it contributes nothing to the data distribution. In the rest of this subsection, we will only focus on the index subset $J$ for clarity of exposition, as if $J'=\varnothing$. Importantly this is with the understanding that when $J' \neq \varnothing$, we simply add degenerate constant 0 or 1 dimensions to the asymptotic distribution of $\hat{\bm\rho}$, which does not change the result of statistical testing.
\begin{assumption}[\textbf{Regularity Conditions}]\label{assumption-regularity}
\

    \begin{enumerate}[(a)]
        \item $\Pr[K_{ij}>0]>0,\ \forall j$.
        \item $\bm V_{ij}$ and $\bm V_{i'j'}$ are independent for $i\neq i'$ or $j\neq j'$.
        \item $\hat{\bm\rho}_{i j} | \bm\rho_i$ and $\hat{\bm\rho}_{i' j'} | \bm\rho_{i'}$ are conditionally independent for $i \neq i'$ or $j \neq j'$.
        \item The distribution $F$ is non-degenerate.
        \item The distribution of $[I_{i1}, \dots, I_{im}]^T$ is non-degenerate.
    \end{enumerate}
\end{assumption}
\begin{remark}[For Regularity Conditions]
    \
    \begin{enumerate}[(i).]
        \item Assumption \ref{assumption-regularity}(a) rules out the case where the $j$-th problem will have zero observations with probability 1, in which case we can simply drop that problem index.
        \item Assumption \ref{assumption-regularity}(c) combined with Assumption \ref{assumption_1} (c) implies that $\hat{\bm\rho}_{i j}$ and $\hat{\bm\rho}_{i' j'}$ are (unconditionally) independent when $i\neq i'$.
        \item Assumption \ref{assumption-regularity}(d) is a non-degeneracy requirement over the distributions $F$, where non-degeneracy of a random vector means that there exist no nonzero linear combination of its entries that is almost surely constant. Intuitively, here it requires that there exist no deterministic relation among choice rates of different problems, except for problems where all choice rates are constant 0 or 1. This condition is equivalent to that $\bm\Sigma_0$ is invertible. Also note that having $J'\neq\varnothing$ will inevitably lead to degeneracy of $F$ on those dimensions, and as we discussed above, we omit those dimensions here and handle separately.
        \item Assumption \ref{assumption-regularity}(e) is a non-degeneracy requirements over the distributions $G$, namely requiring there exists no nonzero linear combination of its entries that is almost surely constant. Intuitively, this requires there exists no deterministic relation among whether a subject faces various problems. 
    \end{enumerate}
\end{remark}

\subsubsection{Normalizing $\bm\Omega$ to $\bm I$}\label{appdx: test-addi-norm}
Recall \autoref{eqn: Tn}, I use the statistic $T_N = \sqrt{N} \left\| \hat{\bm\rho} - \proj_Q(\hat{\bm\rho},\bm\Omega)\right\|_\Omega$, where $\left\|\bm x\right\|_\Omega = \sqrt{\bm x^T\bm \Omega \bm x}$, and $\proj_Q(\cdot,\bm\Omega)$ is the projection onto $Q$ with $\bm\Omega$-induced distance, defined as $\proj_Q(\bm x,\bm\Omega) = \arg\min_{\bm q \in Q} \ \left\|\bm x - \bm q\right\|_\Omega$.

Note that the expression with $\bm\Omega$-induced norm can be transformed into one with standard Euclidean norm by redefining the variables.

By letting $\hat{\bm\rho}' = \bm\Omega^{1/2} \hat{\bm\rho}$ and $Q' = \bm\Omega^{1/2} Q = \{\bm\Omega^{1/2} \bm x \ |\ \bm x \in Q\}$. We have 
$$
    \proj_Q(\hat{\bm\rho}, \Omega) = \arg\min_{\bm q \in Q} \ \left\| \bm \hat{\bm\rho} - \bm q \right\|_\Omega = \bm\Omega^{-1/2} \arg\min_{\bm q' \in Q'} \ \left\| \bm \hat{\bm\rho}' - \bm q' \right\| =  \bm\Omega^{-1/2} \proj_{Q'}(\hat{\bm\rho}'),
$$ 
where $\|\bm x\| = \sqrt{\bm x^T \bm x}$ and $\proj_Q(x)$ are the Euclidean norm and projection. Hence $T_n$ can be rewritten as 
$$
    T_n = \sqrt{n} \left\| \hat{\bm\rho}' - \proj_{Q'}(\hat{\bm\rho}') \right\|,\
    \text{where }\sqrt{n}(\hat{\bm\rho}' - \bm\Omega^{1/2} \bm\mu) \overset{d}{\longrightarrow}
    \mathcal{N}\big(\bm 0, \ \bm\Omega^{1/2} \bm\Sigma \bm\Omega^{1/2} \big).
$$
Hence without loss of generality we only need to consider the case $\bm\Omega = \bm I$.
\subsubsection{Pointwise Asymptotic Distribution of Statistic}\label{appdx: Tn}
\begin{lemma}\label{lemma:T_n}
Under Assumption \ref{assumption_1}, \ref{assumption_2}, and \ref{assumption-regularity},
    \begin{equation}
        T_N\overset{d}{\longrightarrow} T,
    \end{equation}
    where the limiting distribution satisfies either one of the following:
    \begin{enumerate}[i.]
        \item $T=0$ with probability 1, or
        \item $T^2$ is the mixture between a point mass at 0 and a mixture of generalized $\chi^2$ distributions, i.e.
        \begin{align}
            \Pr[T=0] &= \pi_0,\\
            T^2\ |\ T>0 &\overset{d}{=} \bm Z^T\bm\Lambda \bm Z,
        \end{align}
        where $\pi_0\in[0,1/2]$ is the probability mass at 0, and
         $\bm\Lambda = \diag(\lambda_1,\dots,\lambda_m)$ is positive semi-definite, and $\bm Z \sim \mathcal{N}(\bm 0, \bm I)$.
    \end{enumerate}
\end{lemma}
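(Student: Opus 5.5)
The plan is a delta-method argument: $\phi(\bm x)=\|\bm x-\proj_Q(\bm x)\|$ is the Euclidean distance to the closed convex polytope $Q$. By Lemma \ref{lemma:rho_hat}, $\sqrt N(\hat{\bm\rho}-\bm\mu)\to\G_0=\mathcal N(\bm 0,\bm\Sigma)$ with $\bm\Sigma$ positive definite. The analysis splits on the location of $\bm\mu$.

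\textbf{Case $\bm\mu\notin Q$.} Here $\phi(\bm\mu)>0$, so $T_N=\sqrt N\phi(\hat{\bm\rho})\to\infty$. This case is degenerate and is not the null case, so the statement should be read under $H_0$.

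\textbf{Case $\bm\mu\in\interior Q$.} Then $\hat{\bm\rho}\in Q$ with probability tending to one, so $T_N\to 0$ in probability. This gives alternative (i).

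\textbf{Case $\bm\mu\in\partial Q$.} First I use the directional differentiability of the projection onto a polyhedron (Shapiro; Fang and Santos). Because $Q$ is polyhedral and $\phi(\bm\mu)=0$, exactness holds locally: for $\bm h$ small, $\phi(\bm\mu+\bm h)=\|\bm h-\proj_{T}(\bm h)\|$, where $T=T_Q(\bm\mu)$ is the tangent cone. This cone is polyhedral, namely $T=\{\bm h:\bm a_i^T\bm h\le 0,\ i\in I(\bm\mu)\}$ over the active constraints.

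Since $\phi$ is positively homogeneous in $\bm h$ near $\bm\mu$, we get $T_N=\phi(\sqrt N(\hat{\bm\rho}-\bm\mu)+\ldots)$ exactly with probability tending to one. The continuous mapping theorem then gives $T_N\to T:=\operatorname{dist}(\G_0,T_Q(\bm\mu))$.

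Next I identify the law of $T$. Note that $T=0$ exactly when $\G_0\in T_Q(\bm\mu)$. At a boundary point the cone $T_Q(\bm\mu)$ lies in some half-space $\{\bm a^T\bm h\le 0\}$. Because $\G_0$ is symmetric, $\pi_0=\Pr[\G_0\in T]\le\Pr[\bm a^T\G_0\le 0]=1/2$.

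For $T^2\mid T>0$, I decompose $\R^m$ by the finitely many faces $F$ of the polyhedral cone $T$. On the region where $\proj_T(\bm g)$ lies in the relative interior of $F$, the residual $\bm g-\proj_T(\bm g)$ equals the orthogonal projection of $\bm g$ onto $\mathrm{span}(F)^\perp$. On that region, therefore, $T^2=\G_0^TP_F\G_0$ with $P_F$ an orthogonal projector.

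Writing $\G_0=\bm\Sigma^{1/2}\bm Z$, each piece becomes $\bm Z^T\bm\Sigma^{1/2}P_F\bm\Sigma^{1/2}\bm Z$. Diagonalizing gives a generalized $\chi^2$ of the form $\bm Z^T\bm\Lambda_F\bm Z$ with $\bm\Lambda_F\succeq0$. Conditional on $T>0$, the law is thus a mixture over faces of such forms.

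\textbf{Main obstacle.} The hardest step is the conditional law $T^2\mid T>0$. The mixture weights and the conditioning on the face region are not independent of $\bm Z$, so the result is a mixture of truncated quadratic forms. I would handle this by stating it as a mixture of generalized $\chi^2$-type laws indexed by faces. In the simplest situation, where $\bm\mu$ lies on a single facet, $T=(\bm a^T\G_0)_+/\|\bm a\|$ exactly. Then $\pi_0=1/2$ and $T^2\mid T>0\sim(\bm a^T\bm\Sigma\bm a/\|\bm a\|^2)\chi^2_1$, which is of the stated form. A secondary technical point is justifying the local exactness of the cone approximation; polyhedrality of $Q$ from \autoref{thm:collective} gives this directly.
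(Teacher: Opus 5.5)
Your proposal is correct and follows essentially the paper's route. Both arguments locally replace $\proj_Q$ by projection onto the tangent cone $T_Q(\bm\mu)$, which the paper writes via Moreau's decomposition as a residual $\proj_{N_Q(\bm\mu)}(\cdot)$. Both then decompose the space piecewise-linearly into projection matrices $\bm P_k$, eigen-decompose $\bm\Sigma^{1/2}\bm P_k\bm\Sigma^{1/2}$, and bound $\pi_0\le 1/2$ with a separating half-space.

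Your polyhedral local-exactness argument is more careful than the paper's unproved "first-order expansion." So is your observation that conditioning on $T>0$ yields truncated quadratic forms, which the paper does not address. The paper also treats $\bm\mu\in\partial[0,1]^m$ separately by dropping the degenerate coordinates; your assumption of a positive-definite $\bm\Sigma$ implicitly rules that case out.
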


\begin{remark}[For Lemma \autoref{lemma:T_n}]
\

\begin{enumerate}[(i).]
    \item The intuition behind the asymptotic distribution of $T_N$ hinges crucially on the location of the true population mean $\bm\mu$ relative to the polytope $Q$. If $\bm\mu$ lies strictly in the interior of the polytope, the limiting distribution of $T_N$ degenerates to a point mass at zero. If $\bm\mu$ lies on the boundary of $Q$, the limiting distribution becomes non-degenerate: it is characterized as a mixture of a point mass at zero and a mixture of $\chi^2$ distribution. The non-trivial portion of this distribution emerges from the possibility of projecting onto different facets, edges, and vertices of the polytope.
    \item The probability mass at zero never exceeds 1/2, which makes simulating a critical value via subsampling pointwise valid. However, since the distribution of $T_N$ does not vary continuously in $\bm\mu$, it is unclear whether uniform validity is obtained via subsampling.
\end{enumerate}
\end{remark}

\begin{proof}
Recall that $\hat{\bm\rho}$ has asymptotic distribution 
$$
    \sqrt{N} \left( \hat{\bm\rho} - \bm\mu \right) \overset{d}{\longrightarrow}
    \mathcal{N}\big(\bm 0, \ \bm \Sigma \big),
$$
or
\begin{equation}
    \hat{\bm\rho} \approx \bm\mu + \frac{1}{\sqrt{N}} \bm\Sigma^{\frac{1}{2}} \bm Z,
    \label{eqn:asymptotic-rhohat}
\end{equation}
where $\bm Z \sim \mathcal{N}(\bm 0, \bm I)$ follows standard normal distribution. 

\vspace{0.5cm}

We will now derive the asymptotic distribution of $T_N$. Let $C$ be the hypercube $[0,1]^m$, and $Q$ is a $m$-dimensional polytope with $Q\subset C$. We will consider different cases of $\bm\mu \in Q$ under the null hypothesis, and show that the asymptotic distribution of $T_N^2$ is either a constant $0$, or a mixture between $\chi^2$ distributions and a point mass at 0 where the weight of the point mass does not exceed $\frac{1}{2}$. Interestingly, the following analysis relates closely to \citet{fang2019inference}'s discussion on convex set projections.

\begin{enumerate}[\text{\textbf{Case}} I.]

\item $\bm\mu \in \interior(Q) \cap \interior(C)$. As $N\to\infty$, we have $\Pr[\hat{\bm\rho} \in Q] \to 1$. Because $\hat{\bm\rho} \in Q$ implies $\pi_Q(\hat{\bm\rho}) = \hat{\bm\rho}$ and thus $T_N = 0$, this means $\Pr[\hat{\bm\rho} = 0] \to 1$ and thus $T_N$ converges to 0 in probability.

\item $\bm\mu \in \partial Q \cap \interior(C)$. In this case, the distribution of $T_N$ is more complicated and depends on the face of $Q$ that $\bm\mu$ lies on. We provide a characterization below. As discussed in the main text, we only need to consider the case of $\bm\Omega = \bm I$.

We now analyze the distribution of $T_N^2 = N \left\| \hat{\bm\rho} - \proj_{Q}(\hat{\bm\rho}) \right\|^2$, the squared projection distance of $\hat{\bm\rho}$ onto the polytope $Q$. Consider the normal cone $N_{Q}(\bm\mu)$ and tangent cone $T_{Q}(\bm\mu)$ at $\bm\mu$. 


As $n\to\infty$, $\hat{\bm\rho}$ lies close to $\bm\mu$, and thus we have $\proj_{Q}(\hat{\bm\rho}) = \bm\mu + \proj_{T_{Q}(\bm\mu)} (\hat{\bm\rho} - \bm\mu)$, which is the first-order expansion of the projection operator. Therefore, we have 
$$\hat{\bm\rho} - \proj_{Q}(\hat{\bm\rho}) = \hat{\bm\rho} - ( \bm\mu + \proj_{T_{Q}(\bm\mu)} (\hat{\bm\rho} - \bm\mu) ) = \proj_{N_{Q}(\bm\mu)} (\hat{\bm\rho} - \bm\mu),$$ by Moreau's Decomposition Theorem.

Now from the asymptotic distribution of $\hat{\bm\rho}$ as $\hat{\bm\rho} \approx \bm\mu + \frac{1}{\sqrt{N}} \bm\Sigma^{\frac{1}{2}} \bm Z$ where $\bm Z\sim \mathcal{N}(\bm 0, \bm I)$, it follows that the asymptotic distribution of $T_N^2 = N \left\| \hat{\bm\rho} - \proj_{Q}(\hat{\bm\rho}) \right\|^2$ is equal to the distribution $\| \proj_{N_{Q}(\bm\mu)} (\bm\Sigma^{\frac{1}{2}} \bm Z) \|^2$. Note that since $\mu\in\partial Q$, the normal cone $N_{Q}(\bm\mu)$ is always non-trivial, i.e. $N_{Q}(\bm\mu) \neq \{\bm 0\}$.

We now show the distribution of $T_N$ is a mixture of square root of generalized chi-squared distributions and a point mass at $\bm 0$. Note that the projection operator $\proj_{N_{Q}(\bm\mu)}$ is piecewise linear, namely that the entire space $\R^m$ can be partitioned into subsets $R_1, \dots, R_k$ such that 
$$\proj_{N_{Q}(\bm\mu)}(\bm x) = \bm P_k \bm x\ \text{ for } \bm x\in R_k,$$
where $\bm P_k$ are projection matrices, thus symmetric and idempotent. Each of these regions correspond to the projected point lying on a different face of $Q$. In particular, one of these regions is exactly the tangent cone $T_{Q}(\bm\mu)$ and has $\bm P_k = \bm 0$, giving the point mass at $\bm 0$. 

For the other regions where $\bm P_k \neq \bm 0$, we can write 
$$T_N^2 = \| \bm P_k \bm \Sigma^{\frac{1}{2}} \bm Z \|^2 = \bm Z^T \bm\Sigma^{\frac{1}{2}} \bm P_k \bm\Sigma^{\frac{1}{2}} \bm Z.$$
The matrix $\bm\Sigma^{\frac{1}{2}} \bm P_k \bm\Sigma^{\frac{1}{2}}$ is symmetric and thus has eigenvalue decomposition $\bm U \bm\Lambda \bm U^T$ where $\bm U$ is an orthogonal matrix and $\bm\Lambda = \diag(\lambda_1,\dots,\lambda_m)$. Because $\bm U \bm Z \overset{d}{=} \bm Z \sim \mathcal{N}(\bm 0, \bm I)$, we have 
$$T_N^2 = \bm Z^T \bm U \bm\Lambda \bm U^T \bm Z \overset{d}{=} \bm Z^T \bm\Lambda \bm Z = \sum_{j=1}^m \lambda_j Z_j^2,$$
which is a generalized chi-squared distribution.

Hence, $\| \proj_{N_{Q}(\bm\mu)} (\bm\Sigma^{\frac{1}{2}} \bm Z) \|^2$ follows a mixture distribution of generalized $\chi^2$ distributions and a point mass at $\bm 0$. Since $\bm\Sigma$ is full ranked under Assumption \ref{assumption-regularity} (d) and (e), and hence
$\bm\Sigma^{\frac{1}{2}} \bm Z$ is non-degenerate,
the point mass at $\bm 0$ corresponds only to when $\bm\Sigma^{\frac{1}{2}} \bm Z$ lies in the polar cone of $N_{Q}(\bm\mu)$, namely the tangent cone $T_{Q}(\bm\mu)$. For any closed convex cone $C' \subsetneq \R^m$, we have $\Pr[\bm Z \in C'] \le \frac{1}{2}$, with equality iff $C'$ is a half-space.

Take any point $A \neq C'$, and apply the Separating Hyperplane Theorem to the two sets $C'$ and $A$, namely that, there exists $\bm a\neq 0$ such that
\begin{equation}
    C'\subseteq \{\bm x\in\R^m:\bm a^T \bm x \le b \}.
\end{equation}
It is impossible for any point $\bm x\in C'$ to have $\bm a^T \bm x > 0$, because otherwise conic properties would then imply that $\bm a^T \bm x$ has no upper bound. Hence we know that $C'\subseteq \{\bm x\in\R^m:\bm a^T \bm x \le 0 \}$, i.e., the cone is contained in some closed half-space $H$. Then we have
\begin{equation}
    \Pr[\bm Z\in C']\le \Pr[\bm Z\in H] = \Pr[\bm a^T\bm Z\ge 0]=\Pr\left[\frac{\bm a^T\bm Z}{\|\bm a
    \|}\ge 0\right] = \Pr[\mathcal{N}(0,1)\ge 0]=\frac{1}{2}.
\end{equation}

\item $\bm\mu \in Q \cap \partial C$. $\bm\mu$ being on the boundary of $C$ implies that certain dimensions of $\bm\mu$ equal 0 or 1, and the distributions of $\bm\rho_i$ and $\hat{\bm\rho}_i$ are degenerate on those dimensions. In this case, we can project $Q$ and $C$ onto the non-degenerate dimensions, obtaining polytope $Q'$ and hypercube $C'$. Other quantities such as $\bm\mu$ and $\bm\Sigma$ are marginalized over the non-degenerate dimensions as $\bm\mu'$ and $\bm\Sigma'$. This reduces the problem to a case with smaller $m$ and $\bm\mu' \in \interior(C)$ that is already discussed above.
\end{enumerate}
\end{proof}

\subsubsection{Pointwise Validity of Subsampling}\label{appdx: subsampling}
According to Assumption \ref{assumption_1} (c), $(\bar{\bm \rho}_1,\bm I_1),\dots,(\bar{\bm \rho}_N,\bm I_N)$ is a sample of $n$ independent and identically distributed random variables, where
\begin{align}
    \bar{\bm \rho}_i = [\bar{\rho}_{i1},\dots,\bar{\rho}_{ij}]^T,\
    \bm I_i = [I_{i1},\dots,I_{ij}]^T,\ \forall i=1,\dots,N.
\end{align}

Consider a subsample size of $b<N$, with $b/N \to 0$ and $b\to\infty$ as $N\to \infty$. Label the total of $N_b = {N\choose b}$ size-$b$ subsets from 1 to ${n\choose b}$, and let $T_{N,b,i}$ denote the same statistic evaluated on the $i$-th subset. Let $G_N(x)$ denote the cumulative distribution function of the sampling distribution of $T_N$, namely
\begin{equation}
    G_N(x) = \Pr[T_N \le x].
\end{equation}
Let $\hat{G}_{N,b}$ denotes the subsampling approximation of the distribution above, namely
\begin{equation}
    \hat{G}_{N,b}(x) = \frac{1}{N_b} \sum_{i=1}^{N_b} \mathbbm{1}[T_{N,b,i} \le x].
\end{equation}

Define $g_{N,b}(1-\alpha)$ as the $1-\alpha$ quantile of $\hat{G}_{N,b}(x)$, specifically
\begin{equation}
    g_{N,b}(1-\alpha) = \inf\left\{ x \,\middle|\, \hat{G}_{N,b}(x) \ge 1-\alpha \right\}.
\end{equation}

\begin{lemma}[Pointwise Validity of Subsampling]\label{thm:subsampling-pointwise}
Assume $b/N \to 0$ and $b\to\infty$ as $N\to \infty$. Under Assumption \ref{assumption_1}, \ref{assumption_2}, and \ref{assumption-regularity}, for any $\boldsymbol{\mu}$ on the boundary of $Q$, $G(x)$ is continuous at $g(1-\alpha), \forall \alpha<0.5$. Consequently, taking the infimum over all $\boldsymbol{\mu} \in Q$ yields:
\begin{equation}
    \inf _{\boldsymbol{\mu} \in Q} \lim _{N \rightarrow \infty} \operatorname{Pr}\left[T_{N} \leq g_{N, b}(1-\alpha)\right]=1-\alpha.
\end{equation}
\end{lemma}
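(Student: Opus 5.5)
The plan is to invoke the standard subsampling theorem (Politis, Romano and Wolf, Theorem 2.2.1). That theorem only needs three things: the full-sample statistic $T_N$ has a limit law $G$, the subsamples are built from i.i.d. units, and $G$ is continuous at its $1-\alpha$ quantile.

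\textbf{Step 1: the first two conditions.} By Assumption \ref{assumption_1}(c) together with Assumption \ref{assumption-regularity}(b),(c), the pairs $(\bar{\bm\rho}_i,\bm I_i)$ are i.i.d. across subjects. Hence a size-$b$ subsample is an i.i.d. draw from the same law, and $T_{N,b,i}$ is the statistic computed on $b$ observations. Lemma \ref{lemma:T_n} gives $T_N \overset{d}{\to} T$ for every $\bm\mu\in Q$. Then $b/N\to0$ and $b\to\infty$, together with the usual U-statistic/Hoeffding argument, give $\hat G_{N,b}(x)\to G(x)$ in probability at every continuity point $x$ of $G$.

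\textbf{Step 2: continuity at the quantile, the main point.} For $\bm\mu$ in the interior (Case I of Lemma \ref{lemma:T_n}), $T=0$ almost surely. Then $g(1-\alpha)=0$ and $G$ jumps there, so I would treat this case separately: $\Pr[T_N\le g_{N,b}]\to1\ge1-\alpha$, since both quantities are $0$ with probability tending to one.

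For $\bm\mu\in\partial Q$ (Case II, and Case III after dropping degenerate coordinates), Lemma \ref{lemma:T_n} writes the law of $T$ as an atom at $0$ of mass $\pi_0\le1/2$ plus an absolutely continuous part on $(0,\infty)$. The continuous part is $\|\proj_{N_Q(\bm\mu)}(\bm\Sigma^{1/2}\bm Z)\|$ with $\bm\Sigma$ positive definite. Its restriction to $\{T>0\}$ has a density, because on each linearity region the map is a nonzero projection of a nondegenerate Gaussian. I would check this carefully: the regions have boundaries of measure zero, so the pieces cannot create atoms away from $0$.

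Since $\alpha<1/2$, we get $1-\alpha>1/2\ge\pi_0$, so $g(1-\alpha)>0$ lies in the continuous part and $G$ is continuous there. In fact $G$ is strictly increasing near $g(1-\alpha)$, because the support of a nonzero generalized $\chi^2$ is $[0,\infty)$. Therefore $g_{N,b}(1-\alpha)\to g(1-\alpha)$ in probability, and Slutsky gives $\Pr[T_N\le g_{N,b}(1-\alpha)]\to G(g(1-\alpha))=1-\alpha$.

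\textbf{Step 3: combine.} On the boundary the limit equals $1-\alpha$, and in the interior it equals $1$. Taking the infimum over $\bm\mu\in Q$ therefore gives exactly $1-\alpha$.

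The main obstacle is Step 2: ruling out atoms of $T$ at positive values, and confirming $\pi_0\le 1/2$ in the degenerate-coordinate case. Both reduce to the cone-in-half-space bound already shown in Lemma \ref{lemma:T_n}, applied on the nondegenerate coordinates.
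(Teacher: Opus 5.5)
Your proposal is correct and is essentially the paper's argument: coverage tends to $1$ when the limit $T$ is degenerate at $0$ (because $g_{N,b}(1-\alpha)\ge 0$), and to $1-\alpha$ when $T$ has an atom $\pi_0\le 1/2$ at $0$, via Politis, Romano and Wolf's Theorem 2.2.1 and continuity of $G$ at $g(1-\alpha)>0$ for $\alpha<1/2$; your explicit check that $T$ has no atoms on $(0,\infty)$ fills in a step the paper leaves implicit. One small correction: split on degenerate versus nondegenerate $T$, as the paper's proof does, rather than on interior versus boundary of $Q$, because a Case III mean such as $\rho(x_1,x_2)=1$, $\rho(x_2,x_3)=\rho(x_3,x_1)=0.3$ lies on $\partial Q$ yet yields $T=0$ almost surely (so $\pi_0=1$ and coverage $1$), which leaves the infimum unchanged but contradicts your claim that $\pi_0\le 1/2$ at every boundary point.
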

\begin{proof}
   Recall $(\bar{\bm \rho}_1,\bm I_1),\dots,(\bar{\bm \rho}_N,\bm I_N)$ is a sample of $N$ independent and identically distributed random variables.
Let $G_N(x)$ denote the cumulative distribution function of the sampling distribution of $T_N$, and $G(x)$ the corresponding limiting cumulative distribution function, whose $1-\alpha$ quantile is $g(1-\alpha)$. According to Lemma \ref{lemma:T_n}, 
\begin{enumerate}
    \item Under some $\bm \mu\in Q$, $G_N(x)$ converges in distribution to $G(x)=\mathbbm{1}\{x\ge 0\}$, i.e. the limiting distribution $T$ is a degenerate distribution equals 0 with probability 1.  Since $g_{N,b}(1-\alpha)\ge 0$, 
    \begin{equation}
        \lim_{N\to \infty}\Pr\{T_N\le g_{N,b}(1-\alpha)\}=1.
    \end{equation}
    \item Under some $\bm \mu\in Q$, $G_n(x)$ converges in distribution to $G(x)$, a mixture between a point mass at 0 and chi-squared distributions, with the point mass at 0 having weight no greater than $\frac{1}{2}$.

    The validity of this subsampling test process is  established in \cite{PolitisRomanoWolf1999}, Theorem 2.2.1 and Theorem 2.6.1. Since the weight of the point mass 0 never exceeds $\frac{1}{2}$, it follows that $G(x)$ is  continuous at $g(1-\alpha)$, for any $\alpha<\frac{1}{2}$. In practice, we usually choose $\alpha=0.05$. Therefore 
    \begin{align}
        &g_{N,b}(1-\alpha)\longrightarrow g(1-\alpha),\\
        \text{and } \quad &\lim_{N\to\infty}\Pr[T_N\le g_{N,b}(1-\alpha)]=1-\alpha.
    \end{align}
\end{enumerate}
Combine Case (a) and (b), 
\begin{equation}
    \inf_{\bm\mu\in Q}\lim_{N\to\infty}\Pr[T_N\le g_{N,b}(1-\alpha)]=1-\alpha.
\end{equation} 
\end{proof}

\subsubsection{Remarks on Testing Without Individual Data}\label{sec: test-no-indi}
The numerical delta method with bootstrap procedure described above requires knowing which responses come from the same individual. In some scenarios, such individual-level data may be unavailable and we only have access to population-level aggregate response. In this subsection, we discuss a possible alternative bootstrap procedure, and conditions under which is valid.

Consider the scenario where $n$ individuals respond to $m$ binary choice problems following Assumptions \ref{assumption_1}, \ref{assumption_2}, and \ref{assumption-regularity}. Due to the randomness in $\bm K_i$, each individual may respond each problem a different number of times. Assume there is no information on which responses come from which individuals, and we only observe a collection of $N_j$ responses for each problem $j=1,\dots,m$, with $N_0 = \sum_{j=1}^m N_j$. In this case, one natural alternative bootstrap procedure would be to treat the $N_0$ responses as if each comes from a different individual who responded to only 1 problem. Then bootstrap proceeds as normal over this ``assumed'' population. 

It can be shown that under some additional assumptions, the critical value obtained from this alternative bootstrap is still valid. This is formalized as follows. We first state the additional assumption required for this equivalence:
\begin{assumption}
    Following the setup of Assumption \ref{assumption_1},
    \begin{enumerate}[(a).]
        \item $\bar{K} = 1$.
        \item For all $j\neq j'$, either $\Pr[K_{ij}>0, \ K_{ij'}>0] = 0$ or $(\Sigma_0)_{jj'} = 0$.
    \end{enumerate}
    \label{assumption_4}
\end{assumption}

The first assumption $\bar{K}=1$ means that each individual responds to each problem at most once. This assumption naturally holds in a wide range of real-world scenarios, such as polls and elections. In lab experiments, participants are more likely to be asked to respond to the same problem multiple times, but that is also when individual-level data is more likely to be available. When this assumption is violated and an individual may provide multiple responses to the same problem, these responses provide less information about the true population $\mu$ than we assumed, leading to under-estimation of the true covariance, and an over-confidence in testing.

The second assumption is more delicate. It implies that for any two choice problems that may be played by the same individual, their responses must be uncorrelated over the population. This is a rather strong assumption, especially for choice problems that share a common choice option. For problems that share no common choice options, this assumption is relatively more justifiable despite still being demanding. The good news is that even when this assumption is violated, so long as the cross covariance $(\Sigma_0)_{jj'}$ are small, the distortion introduced will also small, as we shall see that adopting this alternative bootstrap procedure is equivalent to mis-specifying the covariance matrix by dropping the off-diagonal terms.

The proposed alternative testing procedure corresponds to the following alternative d.g.p.:
\begin{assumption}[\textbf{Alternative d.g.p.}]\label{assumption_5}
\ 
\begin{enumerate}[(a).]
    \item There is a set of $n$ choice options, forming $m:=\binom{n}{2}$ binary choice problems.
    \item Let $F$ be a non-degenerate distribution over $[0,1]^m$, with mean $\bm\mu$ and covariance $\bm\Sigma_0$.
    \item Let $\mathcal{J}$ be a distribution over the discrete set $\{1,2,\dots,m\}$.
    \item Randomly sampling $N'$ individuals from a large population facing choice problems:
    \begin{enumerate}[I.]
        \item $(\bm\rho_i,J_i)\sim_{\iid} F\times \mathcal{J}$.
        \item $J_i$ denotes the problem index that individual $i$ responds to. Note that each individual only responds to one problem.
        \item $\bm\rho_i$ and $J_i$ are independent.
    \end{enumerate}
    \item $\hat{\rho}_{i j} | \rho_{ij} \sim \mathrm{Bernoulli}(\rho_{ij})$.
\end{enumerate}
\end{assumption}
Under this alternative d.g.p., an estimator $\hat{\bm\rho}^\mathrm{alt}$ can be similarly constructed as 
\begin{equation}
    \hat{\rho}^{\mathrm{alt}}_j = \frac{\sum_{i=1}^{N'} \hat{\rho}_{ij} \mathbbm{1}[J_i=j]}{\sum_{i=1}^{N'} \mathbbm{1}[J_i=j]}, \qquad j=1,\dots,m.
    \label{eqn:estimator-alt-dgp}
\end{equation}

The correspondence between the two d.g.p.s are established by the following result:

\begin{lemma}
\label{prop:validity-without-indiv-data}
    Consider the two d.g.p.s defined by Assumptions \ref{assumption_1}-\ref{assumption_4} and Assumption \ref{assumption_5}. Let $p_j = \Pr[K_{ij} > 0]$ and $q_j = \Pr[J_i = j]$.
    If $q_j = p_j / (\sum_{j'=1}^m p_{j'})$ and $N' = N \left(\sum_{j=1}^m p_j\right)$, then the two estimators $\hat{\bm\rho}$ and $\hat{\bm\rho}^{\mathrm{alt}}$ defined by \eqref{eqn:estimator-standard-dgp} and \eqref{eqn:estimator-alt-dgp} have the same asymptotic distribution. Furthermore, their bootstrap estimators also have the same asymptotic distribution.
\end{lemma}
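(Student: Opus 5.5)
Both estimators are ratios of sample means of i.i.d. per-subject quantities, so the plan is to write each as a smooth function of such means, apply the multivariate CLT and the delta method, and compare the two asymptotic covariances entry by entry. Under Assumption \ref{assumption_4}(a), $\bar K=1$ and $I_{ij}=K_{ij}\in\{0,1\}$. Then
\[
\hat\rho_j=\frac{N^{-1}\sum_i \hat\rho_{ij}I_{ij}}{N^{-1}\sum_i I_{ij}},
\]
where $(\hat\rho_{ij}I_{ij},I_{ij})_j$ are i.i.d. across $i$ by Assumption \ref{assumption_1} and Assumption \ref{assumption-regularity}(c). The same form holds for $\hat\rho^{\mathrm{alt}}_j$, with $I_{ij}$ replaced by $\mathbbm 1[J_i=j]$ and $N$ replaced by $N'$.

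\textbf{Linearization.} For each design I will compute the influence function of $\hat\rho_j$, which is $(\hat\rho_{ij}-\mu_j)I_{ij}/p_j$ in the original design and $(\hat\rho_{ij}-\mu_j)\mathbbm 1[J_i=j]/q_j$ in the alternative design.

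\textbf{Diagonal entries.} In the original design,
\[
\Var\big((\hat\rho_{ij}-\mu_j)I_{ij}\big)/p_j^2=\mu_j(1-\mu_j)/p_j .
\]
This uses $\bm\rho_i\perp\bm K_i$ and the Bernoulli mixture: $E[(\hat\rho_{ij}-\mu_j)^2\mid I_{ij}=1]=\mu_j(1-\mu_j)$. In the alternative design the same quantity is $\mu_j(1-\mu_j)/q_j$. Scaling by $\sqrt N$ versus $\sqrt{N'}$, the asymptotic variances are $\mu_j(1-\mu_j)/(Np_j)$ and $\mu_j(1-\mu_j)/(N'q_j)$. These agree exactly when $N'q_j=Np_j$, and that holds under the stated choices of $q_j$ and $N'$.

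\textbf{Off-diagonal entries.} In the alternative design, $\mathbbm 1[J_i=j]\mathbbm 1[J_i=j']=0$ for $j\neq j'$, so the cross-covariance is zero. In the original design the cross term is
\[
E\big[(\hat\rho_{ij}-\mu_j)(\hat\rho_{ij'}-\mu_{j'})I_{ij}I_{ij'}\big]/(p_jp_{j'}) .
\]
By Assumption \ref{assumption-regularity}(c) the responses are conditionally independent given $\bm\rho_i$, so this equals $\Pr[K_{ij}>0,K_{ij'}>0]\,(\Sigma_0)_{jj'}/(p_jp_{j'})$. Assumption \ref{assumption_4}(b) makes this product zero. Both limits are therefore $\mathcal N(\bm 0,\bm D)$ with the same diagonal $\bm D$, after the $N$ versus $N'$ rescaling, which is just the deterministic relabeling $N'=N\sum_jp_j$.

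\textbf{Bootstrap.} Nonparametric bootstrap over i.i.d. units consistently estimates the distribution of a smooth function of sample means. This is the standard bootstrap CLT, which applies because the moments are bounded and the data lie in $[0,1]$. Each bootstrap law therefore converges conditionally, in probability, to its own delta-method limit, and those limits coincide by the covariance computation. This transfers to $\hat\phi'_N$ in Theorem \ref{thm:uniform-test} via continuity of $\phi$.

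\textbf{Main obstacle.} The delicate step is the off-diagonal computation. It needs conditional independence of the two Bernoulli responses given $\bm\rho_i$, which Assumption \ref{assumption-regularity}(c) supplies, together with $\bm\rho_i\perp\bm K_i$, so that the cross moment factors into $\Pr[\cdot]\cdot(\Sigma_0)_{jj'}$. A secondary care point is that $N'$ is treated as deterministic, or as concentrating at $N\sum_jp_j$, so that the rescaling is exact.
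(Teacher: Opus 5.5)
Your proposal is correct and follows the paper's route: it linearizes both ratio estimators and shows the diagonal asymptotic variances are $\mu_j(1-\mu_j)/p_j$ and $\mu_j(1-\mu_j)/q_j$, which coincide after rescaling because $N'q_j=Np_j$. It then uses Assumption \ref{assumption_4}(b) to kill the original design's off-diagonal term $p_{jj'}(\Sigma_0)_{jj'}/(p_jp_{j'})$. The differences are cosmetic: you compute the covariances directly via influence functions rather than specializing the general $\bm\Sigma$ of Lemma \ref{lemma:rho_hat} to $\bar K=1$, and you justify the bootstrap step a little more than the paper, which simply asserts it as immediate.
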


\begin{proof}
Denote $q_j = \Pr[J_i = j]$. Also we adopt the same notations $p_j = \Pr[K_{ij} > 0]$ and $p_{jj'} = \Pr[K_{ij} > 0, K_{ij'} > 0]$ from previous analysis.

Recall the estimator $\hat{\bm\rho}^{\mathrm{alt}}$ under the alternative d.g.p.
\begin{equation}
    \hat{\rho}^{\mathrm{alt}}_j = \frac{\sum_{i=1}^{N_0} \hat{\rho}_{ij} \mathbbm{1}[J_i=j]}{\sum_{i=1}^{N_0} \mathbbm{1}[J_i=j]}, \quad j=1,\dots,m.
\end{equation}
Following very similar derivation of the asymptotic distribution in Lemma \ref{lemma:rho_hat}, we have
\begin{equation}
    \sqrt{N'} (\hat{\bm\rho}^{\mathrm{alt}} - \bm\mu) \overset{d}{\longrightarrow} \mathcal{N}(\bm 0, \bm\Sigma^{\mathrm{alt}}),
\end{equation}
where 
\begin{align}
    \Sigma^{\mathrm{alt}}_{jj} &= \frac{1}{q_j} \mu_j(1-\mu_j), \quad \forall j \\
    \Sigma^{\mathrm{alt}}_{jj'} &= 0, \qquad\qquad\qquad \forall j\neq j'.
\end{align}

To compare this to the distribution of the original estimator $\hat{\boldsymbol{\rho}}$ (which is scaled by $\sqrt{N}$), we adjust the scaling:
$$\sqrt{N}\left(\hat{\boldsymbol{\rho}}^{\mathrm{alt}}-\boldsymbol{\mu}\right) \xrightarrow{d} \mathcal{N}\left(\mathbf{0}, \frac{N}{N^{\prime}} \boldsymbol{\Sigma}^{\mathrm{alt}}\right).$$

On the other hand, for the original estimator $\hat{\bm\rho}$, its asymptotic covariance $\bm\Sigma$ when $\bar{K} = 1$ reduces to 
\begin{align}
    \Sigma_{jj} &= \frac{1}{p_j} \mu_j (1-\mu_j), \quad \forall j \\
    \Sigma_{jj'} &= \frac{p_{jj'}}{p_j p_{j'}} (\Sigma_0)_{jj'}, \ \quad \forall j, j'.
\end{align}

From Assumption \ref{assumption_4} (b), the off-diagonal terms $\Sigma_{jj'}$ vanish. Also, the correspondence between the two d.g.p. implies that $N' q_j = N p_j$. Therefore, the two estimators $\hat{\bm\rho}^{\mathrm{alt}}$ and $\hat{\bm\rho}$ have exactly the same asymptotic distribution. 

From here, it immediately follows that the bootstrap estimators also share the same asymptotic distributions, which is equal to that of the original estimators.
\end{proof}

\newpage\subsection{Proof of Lemma \ref{lemma:rho_hat}}\label{pf:rho_hat}
Define
$$
    S_{Nj} = \frac{1}{N} \sum_{i=1}^N \bar{\rho}_{ij} I_{ij}, \quad 
    T_{Nj} = \frac{1}{N} \sum_{i=1}^N I_{ij}, \quad \forall j=1,\dots,m,
$$
and we have $\hat{\rho}_j = S_{Nj}/T_{Nj}$. Multivariate Central Limit Theorem gives
$$
    \sqrt{N} \left( 
    \begin{bmatrix}
        S_{N1} \\
        T_{N1} \\
        \vdots \\
        S_{Nm} \\
        T_{Nm}
    \end{bmatrix} - \bm u \right) \overset{d}{\longrightarrow}
    \mathcal{N}\big(\bm 0, \ \bm S\big),
$$
where
$$
    \bm u = \begin{bmatrix}
        \E[\bar{\rho}_{i1} I_{i1}] \\
        \E[I_{i1}] \\
        \vdots \\
        \E[\bar{\rho}_{im}I_{im}] \\
        \E[I_{im}]
    \end{bmatrix}
$$
and
$$
    \bm S = \begin{bmatrix}
        \Var(\bar{\rho}_{i1} I_{i1}) & \Cov(\bar{\rho}_{i1} I_{i1}, I_{i1}) & \hdots & \Cov(\bar{\rho}_{i1} I_{i1}, \bar{\rho}_{im} I_{im}) & \Cov(\bar{\rho}_{i1} I_{i1}, I_{im}) \\
        \Cov(I_{i1}, \bar{\rho}_{i1} I_{i1}) & \Var(I_{i1}) & \hdots & \Cov(I_{i1}, \bar{\rho}_{im} I_{im}) & \Cov(I_{i1}, I_{im}) \\
        \vdots & \vdots & \ddots & \vdots & \vdots \\
        \Cov(\bar{\rho}_{im} I_{im}, \bar{\rho}_{i1} I_{i1}) & \Cov(\bar{\rho}_{im} I_{im}, I_{i1}) & \hdots & \Var(\bar{\rho}_{im} I_{im}) & \Cov(\bar{\rho}_{im} I_{im}, I_{im}) \\
        \Cov(I_{im}, \bar{\rho}_{i1} I_{i1}) & \Cov(I_{im}, I_{i1}) & \hdots & \Cov(I_{im}, \bar{\rho}_{im} I_{im}) & \Var(I_{im})
    \end{bmatrix}.
$$

Consider the function $\phi: \R^{2m} \to \R^m$ defined as
$$
    \phi \left(\begin{bmatrix}
        x_1 \\
        y_1 \\
        \vdots \\
        x_m \\
        y_m
    \end{bmatrix}\right) = \begin{bmatrix}
        x_1/y_1 \\
        \vdots \\
        x_m/y_m
    \end{bmatrix}.
$$
Its Jacobian $\bm J \in \R^{2m \times m}$ is 
$$
    \bm J = \begin{bmatrix}
        1/y_1 & & \\
        -x_1/y_1^2 & & \\
         & \ddots & \\
         & & 1/y_m \\
         & & -x_m/y_m^2
    \end{bmatrix}.
$$
Assuming $\phi$ is differentiable at $\bm u$ (checked in later analysis), from the Delta method, we have
\begin{equation}
    \sqrt{n} \left( \hat{\bm\rho} - \phi(\bm u) \right) \overset{d}{\longrightarrow}
    \mathcal{N}\big(\bm 0, \ \bm J(\bm u)^T \bm S \bm J(\bm u) \big)
    \label{eqn:delta-method-rhohat}
\end{equation}
where $\bm J(\bm u)$ is the Jacobian at the point $\bm u$.

We now derive the expression of $\bm u$,  $\bm S$ and $\bm J$, and verify that $\phi$ is differentiable at $\bm u$.

We first compute the conditional mean and covariance of $\bar{\rho}_{ij} | \rho_{ij}, K_{ij}$. We have
$$
    \E[\bar{\rho}_{ij} | \rho_{ij}, K_{ij}] = \frac{1}{K_{ij}} \sum_{k=1}^{K_{ij}} \E \left[\hat{\rho}_{ij}^{(k)} \middle| \rho_{ij}, K_{ij} \right] = \frac{1}{K_{ij}} \sum_{k=1}^{K_{ij}} \rho_{ij} = \rho_{ij}.
$$
Also,
\begin{align*}
    \Var[\bar{\rho}_{ij} | \rho_{ij}, K_{ij}] &= \Var\left[ \frac{1}{K_{ij}} \sum_{k=1}^{K_{ij}} \hat{\rho}_{ij}^{(k)} \middle| \rho_{ij}, K_{ij} \right] \\
    &= \frac{1}{K_{ij}^2} \sum_{1 \le k_1, k_2 \le K_{ij}} \Cov\left(\hat{\rho}_{ij}^{(k_1)}, \hat{\rho}_{ij}^{(k_2)} \middle| \rho_{ij}, K_{ij}\right) \\
    &= \frac{1}{K_{ij}^2} \sum_{1 \le k_1, k_2 \le K_{ij}} \E_{\bm V_{ij} \sim H(K_{ij}, \rho_{ij})} \left[ \Cov\left(\hat{\rho}_{ij}^{(k_1)}, \hat{\rho}_{ij}^{(k_2)} \middle| \rho_{ij}, K_{ij}, \bm V_{ij}\right) \right] \\
    &= \frac{1}{K_{ij}^2} \sum_{1 \le k_1, k_2 \le K_{ij}} \E_{\bm V_{ij} \sim H(K_{ij}, \rho_{ij})} \left[ \rho_{ij}(1-\rho_{ij}) (V_{ij})_{k_1k_2} \right] \\
    &= \rho_{ij}(1-\rho_{ij}) v^2_{K_{ij}, \rho_{ij}},
\end{align*}
where we define
$$
    v^2_{k,p} = \frac{1}{k^2} \sum_{1 \le k_1, k_2 \le k} \E_{\bm V \sim H(k,p)}\left[ V_{k_1k_2} \right], \quad \forall k=1,\dots, \bar{K} = \E_{\bm V \sim H(k,p)}\left[ \bm 1^T \bm V \bm 1 \right], \ \forall p\in [0,1].
$$
Note we have $v^2_{k,p} \ge 0$ because $\bm V \sim H(k,p)$ is positive semi-definite.
Furthermore, for $j \neq j'$, we have
\begin{align*}
    \E[\bar{\rho}_{ij} \bar{\rho}_{ij'} | \rho_{ij}, \rho_{ij'}, K_{ij}, K_{ij'} ] &= \E\left[ \frac{1}{K_{ij} K_{ij'}}  \left(\sum_{k=1}^{K_{ij}} \hat{\rho}_{ij}^{(k)}\right) \left(\sum_{k'=1}^{K_{ij'}} \hat{\rho}_{ij'}^{(k')}\right) \middle| \rho_{ij}, \rho_{ij'}, K_{ij}, K_{ij'} \right] \\
    &= \frac{1}{K_{ij} K_{ij'}} \sum_{k=1}^{K_{ij}} \sum_{k'=1}^{K_{ij'}} \E\left[ \hat{\rho}_{ij}^{(k)} \hat{\rho}_{ij'}^{(k')} \middle| \rho_{ij}, \rho_{ij'}, K_{ij}, K_{ij'} \right] \\
    &= \frac{1}{K_{ij} K_{ij'}} \sum_{k=1}^{K_{ij}} \sum_{k'=1}^{K_{ij'}} \E\left[ \hat{\rho}_{ij}^{(k)} \middle| \rho_{ij}, K_{ij} \right] \E\left[ \hat{\rho}_{ij'}^{(k')} \middle| \rho_{ij'}, K_{ij'} \right] \text{(Assumption \ref{assumption_2} (d))} \\
    &= \frac{1}{K_{ij} K_{ij'}} \sum_{k=1}^{K_{ij}} \sum_{k'=1}^{K_{ij'}} \rho_{ij} \rho_{ij'} \\
    &= \rho_{ij} \rho_{ij'},
\end{align*}
where we used the conditional independence of $\hat{\rho}_{ij}^{(k)}$ and $\hat{\rho}_{ij'}^{(k')}$ given $\rho_{ij}, \rho_{ij'}, K_{ij}, K_{ij'}$.

\vspace{1cm}

Define $p_j = \E[I_{ij}] = \Pr[K_{ij} > 0]$ and $p_{jj'} = \E[I_{ij} I_{ij'}] = \Pr[K_{ij} > 0, K_{ij'} > 0]$, which are both determined by the distribution $G$.

To compute $\bm u$, we have
\begin{align*}
    \E[\bar{\rho}_{ij} I_{ij}] &= \E_{\rho_{ij}, K_{ij}}[\E[\bar{\rho}_{ij} | \rho_{ij}, K_{ij}] \cdot I_{ij}] \\
    &= \E_{\rho_{ij}, K_{ij}}[\rho_{ij} I_{ij}] \\
    &= p_j \mu_{j}. \\
    \E[I_{ij}] &= p_j.
\end{align*}

To compute $\bm S$, we have
\begin{align*}
    \Var[\bar{\rho}_{ij} I_{ij}] &= \E[\bar{\rho}_{ij}^2 I_{ij}] - \E[\bar{\rho}_{ij} I_{ij}]^2 \\
    &= \E_{\rho_{ij}, K_{ij}} \left[\E[\bar{\rho}_{ij}^2 | \rho_{ij},K_{ij}] \cdot I_{ij} \right] - p_j^2 \mu_j^2 \\
    &= \E_{\rho_{ij}, K_{ij}} \left[\left( \Var(\bar{\rho}_{ij}| \rho_{ij}, K_{ij}) + \E[\bar{\rho}_{ij}| \rho_{ij}, K_{ij}]^2 \right) \cdot I_{ij} \right] - p_j^2 \mu_j^2 \\
    &= \E_{\rho_{ij}, K_{ij}} \left[\left( \rho_{ij}(1-\rho_{ij}) v^2_{K_{ij}, \rho_{ij}} + \rho_{ij}^2 \right) \cdot I_{ij} \right] - p_j^2 \mu_j^2 \\
    &= \E_{\rho_{ij}, K_{ij}} \left[\rho_{ij}(1-\rho_{ij}) \cdot v^2_{K_{ij}, \rho_{ij}} I_{ij} \right] + p_j \E[\rho_{ij}^2] - p_j^2 \mu_j^2 \\
    &= \bar{v}^2_j + p_j \left( (\Sigma_0)_{jj} + \mu_j^2 \right) - p_j^2 \mu_j^2 \\
    &= p_j(1-p_j) \mu_j^2 + \bar{v}^2_j + p_j (\Sigma_0)_{jj}.
\end{align*}
where we define $\bar{v}^2_j = \E_{\rho_{ij}, K_{ij}} \left[\rho_{ij}(1-\rho_{ij}) \cdot v^2_{K_{ij}, \rho_{ij}} I_{ij} \right] \ge 0$, and use the fact that $\E[\rho_{ij}^2] = \Var(\rho_{ij}) + \E[\rho_{ij}]^2 = (\Sigma_0)_{jj} + \mu_j^2$. Also,
$$
    \Var[I_{ij}] = \E[I_{ij}^2] - \E[I_{ij}]^2 = p_j(1-p_j).
$$
and 
\begin{align*}
    \Cov(\bar{\rho}_{ij} I_{ij}, I_{ij}) = \E[\bar{\rho}_{ij} I_{ij}^2] - \E[\bar{\rho}_{ij} I_{ij}]  \cdot \E[I_{ij}] = p_j (1-p_j) \mu_j.
\end{align*}
For the terms where $j \neq j'$,
\begin{align*}
    \Cov(\bar{\rho}_{ij} I_{ij}, \bar{\rho}_{ij'} I_{ij'}) &= \E[\bar{\rho}_{ij} \bar{\rho}_{ij'} I_{ij} I_{ij'}] - \E[\bar{\rho}_{ij} I_{ij}]  \cdot \E[\bar{\rho}_{ij'} I_{ij'}] \\
    &= \E_{\rho_{ij}, \rho_{ij'}, K_{ij}, K_{ij'}} \left[\E[\bar{\rho}_{ij} \bar{\rho}_{ij'} | \rho_{ij},\rho_{ij'}, K_{ij},K_{ij'}] \cdot I_{ij} I_{ij'} \right] - p_j p_{j'} \mu_j \mu_{j'} \\
    &= \E_{\rho_{ij}, \rho_{ij'}, K_{ij}, K_{ij'}} \left[ \rho_{ij} \rho_{ij'} \cdot I_{ij} I_{ij'} \right] - p_j p_{j'} \mu_j \mu_{j'}\\
    &= p_{jj'} (\mu_j \mu_{j'} + (\Sigma_0)_{jj'}) - p_j p_{j'} \mu_j \mu_{j'} \\
    &= (p_{jj'} - p_j p_{j'}) \mu_j \mu_{j'} + p_{jj'} (\Sigma_0)_{jj'}.
\\
    \Cov(\bar{\rho}_{ij} I_{ij}, I_{ij'}) &= \E[\bar{\rho}_{ij} I_{ij} I_{ij'}] - \E[\bar{\rho}_{ij} I_{ij}]  \cdot \E[I_{ij'}] \\
    &= \E_{\rho_{ij}, \rho_{ij'}, K_{ij}, K_{ij'}} \left[\E[\bar{\rho}_{ij} | \rho_{ij}, K_{ij}] \cdot I_{ij} I_{ij'} \right] - p_j p_{j'} \mu_j \\
    &= \E_{\rho_{ij}, \rho_{ij'}, K_{ij}, K_{ij'}} \left[ \rho_{ij} \cdot I_{ij} I_{ij'} \right] - p_j p_{j'} \mu_j \\
    &= (p_{jj'} - p_j p_{j'}) \mu_j.
\\
    \Cov(I_{ij}, I_{ij'}) &= \E[I_{ij} I_{ij'}] - \E[I_{ij}]  \cdot \E[I_{ij'}] \\
    &= p_{jj'} - p_j p_{j'}.
\end{align*}
Furthermore, the Jacobian $\bm J(\bm u)$ at point $\bm u$ is
$$
    \bm J(\bm u) = \begin{bmatrix}
        1/p_1 & & \\
        -\mu_1/p_1 & & \\
         & \ddots & \\
         & & 1/p_m \\
         & & -\mu_m/p_m
    \end{bmatrix}.
$$
Note that our assumption $p_j > 0, \forall j$ implies that $\phi$ is differentiable at $\bm u$. Hence the Jacobian is well defined, and the Delta method applies. This gives the expressions of $\bm u$, $\bm S$ and $\bm J$.

\vspace{1cm}

Finally, we plug the expressions of $\bm u$, $\bm S$ and $\bm J$ into \autoref{eqn:delta-method-rhohat}.

The asymptotic mean of $\hat{\bm\rho}$ is $\phi(\bm u) = \bm\mu$.

Denote the asymptotic variance of $\hat{\bm\rho}$ as $\bm\Sigma = \bm J^T \bm S \bm J$. From the block diagonal structure of $\bm J$, the diagonal entries of $\bm\Sigma$ are
\begin{align*}
    \Sigma_{jj} &= \begin{bmatrix}
        1/p_j & -\mu_j/p_j
    \end{bmatrix} \begin{bmatrix}
        p_j(1-p_j) \mu_j^2 + \bar{v}^2_j + p_j (\Sigma_0)_{jj} & p_j (1-p_j) \mu_j \\
        p_j (1-p_j) \mu_j & p_j (1-p_j)
    \end{bmatrix} \begin{bmatrix}
        1/p_j \\ -\mu_j/p_j
    \end{bmatrix} \\
    &= \frac{1}{p_j^2} \left( \bar{v}^2_j + p_j (\Sigma_0)_{jj} \right).
\end{align*}
The off-diagonal entries of $\bm\Sigma$ are
\begin{align*}
    \Sigma_{jj'} &= \begin{bmatrix}
        1/p_j & -\mu_j/p_j
    \end{bmatrix} \begin{bmatrix}
        (p_{jj'} - p_j p_{j'}) \mu_j \mu_{j'} + p_{jj'} (\Sigma_0)_{jj'} & (p_{jj'} - p_j p_{j'}) \mu_j \\
        (p_{jj'} - p_j p_{j'}) \mu_{j'} & p_{jj'} - p_j p_{j'}
    \end{bmatrix} \begin{bmatrix}
        1/p_{j'} \\ -\mu_{j'}/p_{j'}
    \end{bmatrix} \\
    &= \frac{p_{jj'}}{p_j p_{j'}} (\Sigma_0)_{jj'}.
\end{align*}

To sum up, the asymptotic distribution of $\hat{\bm\rho}$ is 
$$
    \sqrt{N} \left( \hat{\bm\rho} - \bm\mu \right) \overset{d}{\longrightarrow}
    \mathcal{N}\big(\bm 0, \ \bm \Sigma \big)
$$
where $\bm\Sigma$ is a matrix with
\begin{align*}
    \Sigma_{jj} &= \frac{1}{p_j^2} \bar{v}^2_j + \frac{1}{p_j} (\Sigma_0)_{jj}, \quad \forall j, \\
    \Sigma_{jj'} &= \frac{p_{jj'}}{p_j p_{j'}} (\Sigma_0)_{jj'}, \quad \forall j\neq j'.
\end{align*}

We can show that under Assumption \ref{assumption-regularity} (d) and (e), the matrix $\bm\Sigma$ is positive-definite. Define matrix $\bm M \in \R^{m\times m}$ such that $M_{jj} = \frac{1}{p_j}$ and $M_{jj'} = \frac{p_{jj'}}{p_j p_{j'}}$ for $j\neq j'$. Consider random vector $\bm U = \left[ \frac{1}{p_1}I_{i1}, \dots, \frac{1}{p_m}I_{im} \right]^T$. We have $\bm M = \E[\bm U \bm U^T]$ as the Gram matrix of $\bm U$. From Assumption \ref{assumption-regularity} (e), we know $\bm U$ has non-degenerate distribution, and thus $\bm M$ is positive definite. From Assumption \ref{assumption-regularity} (d), we know $\bm\Sigma_0$ is positive definite. Hence by Schur product theorem, the Hadamard product $\bm M \circ \bm\Sigma_0$ is positive definite. Finally, because $\bm\Sigma = \bm M \circ \bm\Sigma_0 + \diag([\frac{1}{p_1^2} \bar{v}^2_1, \dots, \frac{1}{p_m^2} \bar{v}^2_m])$ and $\bar{v}^2_j \ge 0$, it follows that $\bm\Sigma$ is positive definite.

\newpage
\subsection{Proof of Theorem \ref{thm:uniform-test}}
The proof largely follows Theorem 3.5 of \cite{hong2018numerical}. In the following, we will verify that the assumptions of Theorem 3.5 holds.

We first verify that $\phi$ is convex. For any two points $\bm x_1, \bm x_2$, let $\bm p_1 = \proj_Q(\bm x_1)$ and $\bm p_2 = \proj_Q(\bm x_2)$. For any $\lambda\in (0,1)$, we have
\begin{align*}
    \phi(\lambda\bm x_1 + (1-\lambda) \bm x_2) &= \| \lambda\bm x_1 + (1-\lambda) \bm x_2 - \proj_Q\left(\lambda\bm x_1 + (1-\lambda) \bm x_2\right) \| \\
    &\le \| \lambda\bm x_1 + (1-\lambda) \bm x_2 - \left( \lambda \bm p_1 + (1-\lambda) \bm p_2 \right) \| \\
    &= \| \lambda(\bm x_1 - \bm p_1) + (1-\lambda) (\bm x_2 - \bm p_2) \| \\
    &\le \lambda \|\bm x_1 - \bm p_1\| + (1-\lambda) \| \bm x_2 - \bm p_2 \| \\
    &= \lambda \phi(\bm x_1) + (1-\lambda) \phi(\bm x_2),
\end{align*}
which shows the convexity of $\phi$. Here the first inequality follows from the optimality of the projection operator, and the second inequality follows from triangle inequality.

We next verify that $\phi$ is $1$-Lipschitz. For any two points $\bm x_1, \bm x_2$, for any $\bm q\in Q$, triangle inequality gives $\|\bm x_1 - \bm q\| \le \|\bm x_1 - \bm x_2\| + \|\bm x_2 - \bm q\|$. Take the infimum of both sides over $\bm q\in Q$, we have $\phi(\bm x_1) \le \|\bm x_1 - \bm x_2\| + \phi(\bm x_2)$, using the fact that $\phi(\bm x) = \inf_q \|\bm x - \bm q\|$. Swapping $\bm x_1$ and $\bm x_2$ gives $\phi(\bm x_2) \le \|\bm x_1 - \bm x_2\| + \phi(\bm x_1)$. Combining the two inequalities, we get the desired 1-Lipschitz property
\begin{equation*}
    |\phi(\bm x_1) - \phi(\bm x_2)| \le \|\bm x_1 - \bm x_2\|.
\end{equation*}

It remains to verify that Assumptions 3.1 and 3.2 from \cite{hong2018numerical} hold in our case. Notice that we already showed in Lemma \ref{lemma:rho_hat} that $\sqrt{N} (\hat{\bm\rho} - \bm\mu)$ converges to the limiting distribution $\mathcal{N}(\bm 0, \bm\Sigma(\bm\mu))$.
Because the observed choice data are strictly bounded in $[0,1]^m$, all moments are uniformly bounded across all data-generating processes corresponding to $\boldsymbol{\mu} \in Q$. By the uniform multivariate Central Limit Theorem, this guarantees that the weak convergence holds uniformly over $\boldsymbol{\mu} \in Q$. Furthermore, because $Q$ is a compact set and the asymptotic covariance $\boldsymbol{\Sigma}(\boldsymbol{\mu})$ is smooth and bounded, the uniform asymptotic tightness required by Assumption 3.1 in \citet{hong2018numerical} is fully satisfied.

For Assumption 3.1(ii), in our setup the variable $\mathbb{Z}^*_n$ from \cite{hong2018numerical}'s notation is the bootstrap estimator residual $\sqrt{N}(\hat{\bm\rho}_b - \hat{\bm\rho})$. The validity of bootstrap process implies that the bootstrap estimator has the same (pointwise) asymptotic distribution. Thus the same argument as Assumption 3.1(i) above applies, giving convergence in $L^\infty$ of the bootstrap estimator $\sqrt{N}(\hat{\bm\rho}_b - \bm\mu)$ to $\mathcal{N}(\bm 0, \bm\Sigma(\bm\mu))$. Combining this with the convergence in $L^\infty$ of $\sqrt{N}(\hat{\bm\rho} - \bm\mu)$ and triangle inequality, it follows that $\sqrt{N}(\hat{\bm\rho}_b - \hat{\bm\rho})$ also converges in $L^\infty$ to $\mathcal{N}(\bm 0, \bm\Sigma(\bm\mu))$.

For Assumption 3.2 from \cite{hong2018numerical}, we temporarily adopt the notations from that paper. Notice that in our case, the limiting distribution $\mathbb{G}_0$ is $\mathcal{N}(\bm 0, \bm\Sigma(\bm\mu))$, a multivariate Gaussian with full-rank covariance. Also notice that $\mathcal{C}_{a,d,x}$ is a polytope, hence $\partial \mathcal{C}_{a,d,x}$ is a set with Borel measure zero. Therefore we always have $\Pr[\mathbb{G}_0 \in \partial \mathcal{C}_{a,d,x}] = 0$, and Assumption 3.2 is satisfied.

Therefore, all assumptions of Theorem 3.5 of \cite{hong2018numerical} are satisfied, and we obtain the desired result.

\newpage
\subsection{Statistical Test for Heterogeneous Tastes}\label{appdx:test-heter}

Model each participant $i$ as having an ``underlying'' preference vector $\bm\rho_i \in [0,1]^{17}$, and their choice for the $j$-th menu is a $\mathrm{Bernoulli}(\rho_{ij})$ random variable independent across repetitions and menus. We want to test whether all participants share the same taste, in the sense that whether they prefer one type over another are consistent, and only the strength of the preference may vary. Concretely, the null hypothesis to test is that for any two participants $i, i'$, we have $\sgn(\rho_{ij} - \frac{1}{2}) = \sgn(\rho_{i'j} - \frac{1}{2})$ for all $j$. \footnote{Here we omit the specific case of $\rho_{ij} = \frac{1}{2}$, which can be considered a limiting case both when $\sgn(\rho_{ij}-\frac{1}{2}) = 1$ and when $\sgn(\rho_{ij}-\frac{1}{2}) = 1$.}

Let $\hat{\rho}_{ij}^{(k)}$ be participant $i$'s $k$-th repetition response to binary menu $j$, and denote $\bar\rho_{ij} = \frac{1}{10} \sum_{k=1}^{10} \hat{\rho}_{ij}^{(k)}$. To test this null hypothesis, first notice that it can be decomposed into a total of $2^{17}$ individual hypotheses, each corresponding to one fixed sign pattern of $\sgn(\bm\rho_i - \frac{1}{2})$. For each individual hypothesis, we perform the likelihood ratio test using the maximum likelihood estimation of parameters $\bm\rho_i$. Let $\bm s \in \{\pm 1\}^{17}$ represent a given sign pattern of $\sgn(\bm\rho - \frac{1}{2})$, then the maximum likelihood estimator of $\bm\rho_i$ is 
\begin{equation}
    \hat{\rho}_{ij}^{\mathrm{MLE}}(\bm s) = \left\{ \begin{matrix}
        \max\{\bar\rho_{ij}, \frac{1}{2}\} & \text{ if } s_j = 1 \\
        \min\{\bar\rho_{ij}, \frac{1}{2}\} & \ \ \ \text{ if } s_j = -1
    \end{matrix} \right..
\end{equation}
From this, we can compute the maximum log-likelihood under the restricted model with sign pattern $\bm s$ as 
\begin{equation}
    \ell_0(\bm s) = \sum_i\sum_j\sum_k \  \hat\rho_{ij}^{(k)} \log \hat{\rho}_{ij}^{\mathrm{MLE}}(\bm s) + \left(1-\hat\rho_{ij}^{(k)}\right) \log \left(1-\hat{\rho}_{ij}^{\mathrm{MLE}}(\bm s)\right).
\end{equation}
On the other hand, under the unrestricted model, the maximum likelihood estimator of $\bm\rho_i$ will simply be the sample mean $\hat{\rho}_{ij}^{\mathrm{MLE}} = \bar{\rho}_{ij}$, and the maximum log-likelihood under the unrestricted model is 
\begin{equation}
    \ell_1 = \sum_i\sum_j\sum_k \ \hat\rho_{ij}^{(k)} \log \bar\rho_{ij} + \left(1-\hat\rho_{ij}^{(k)}\right) \log \left(1-\bar\rho_{ij}\right).
\end{equation}

For each individual null hypothesis, the likelihood ratio statistic is $\lambda_\mathrm{LR}(\bm s) = -2(\ell_0(\bm s) - \ell_1)$. Let $\lambda_\mathrm{LR} = \min_{\bm s} \lambda_\mathrm{LR}(\bm s)$ as the smallest statistic value among all $2^{17}$ values. This is the ``most favorable'' choice of $\bm s$ that gives the highest possible likelihood under the null. To reject the original null hypothesis, it is equivalent to reject all individual null hypotheses, and in turn equivalent to reject this one hypothesis with the smallest statistic value.

The asymptotic distribution of $\lambda_\mathrm{LR}$ is in general a mixture chi-squared distribution and difficult to compute,. However, since the total number of degrees of freedom in the parameters is $31 \times 17 = 527$, this mixture chi-squared can have at most 527 degrees of freedom, which serves as an inflated estimation of the distribution of the test statistic. In other words, using the $1-\alpha$ quantile of $\chi^2_{527}$ will provide a conservative proxy critical value, in the sense that rejection using this proxy critical value will be sufficient for rejection for the test.

With this dataset, we get $\lambda_\mathrm{LR} = 3363.02$, and hence reject the null hypothesis with $p<0.001$. Therefore, the test provides overwhelming evidence against the null hypothesis that all individuals share the same taste, further supporting the Collective SS model.

\end{document}